\let\oldnl\nl
\newcommand{\nonl}{\renewcommand{\nl}{\let\nl\oldnl}}
\begin{document}

\title{Hypergraph motifs and their extensions beyond binary\thanks{This work was supported by National Research Foundation of Korea (NRF) grant
funded by the Korea government (MSIT) (No. NRF-2020R1C1C1008296) and Institute of Information \&
Communications Technology Planning \& Evaluation (IITP) grant funded by the Korea government (MSIT)
(No. 2019-0-00075, Artificial Intelligence Graduate School Program (KAIST)).}
}


\author{Geon Lee$^{*}$       \and
        Seokbum Yoon$^{*}$ \and Jihoon Ko \and Hyunju Kim \and 
        Kijung Shin$^{\dagger}$ \thanks{$^{*}$ Equal Contribution. $^{\dagger}$ Corresponding Author.}
}


\institute{
    G. Lee \and  S. Yoon \and J. Ko \and H. Kim \at Kim Jaechul Graduate School of AI, KAIST, Seoul, South Korea, 02455 \\\email{\{geonlee0325,jing9044,jihoonko,hyunju.kim\}@kaist.ac.kr}
    \and
    K. Shin \at Kim Jaechul Graduate School of AI and School of Electrical Engineering, KAIST, Seoul, South Korea, 02455 \\\email{kijungs@kaist.ac.kr}
}

\date{Received: date / Accepted: date}


\newcommand\kijung[1]{\textcolor{black}{#1}}
\newcommand\geon[1]{\textcolor{blue}{[Geon:#1]}}
\newcommand\red[1]{\textcolor{black}{#1}}
\newcommand\black[1]{\textcolor{black}{#1}}
\newcommand\blue[1]{\textcolor{black}{#1}}
\newcommand\change[1]{\textcolor{black}{#1}}
\newcommand\nochange[1]{\textcolor{black}{#1}}
\newcommand\Motif{H-motif\xspace}
\newcommand\Motifs{H-motifs\xspace}
\newcommand\motif{h-motif\xspace}
\newcommand\motifs{h-motifs\xspace}

\newcommand\Tmotif{3H-motif\xspace}
\newcommand\Tmotifs{3H-motifs\xspace}
\newcommand\tmotif{3h-motif\xspace}
\newcommand\tmotifs{3h-motifs\xspace}

\newcommand\Kmotif{$k$H-motif\xspace}
\newcommand\kmotif{$k$h-motif\xspace}
\newcommand\Kmotifs{$k$H-motifs\xspace}
\newcommand\kmotifs{$k$h-motifs\xspace}

\newcommand\method{\textsf{MoCHy}\xspace}
\newcommand\methodE{\textsf{MoCHy-E}\xspace}
\newcommand\methodEN{\textsf{MoCHy-E\textsubscript{ENUM}}\xspace}
\newcommand\methodA{\textsf{MoCHy-A}\xspace}
\newcommand\methodAE{\textsf{MoCHy-A}\xspace}
\newcommand\methodAW{\textsf{MoCHy-A\textsuperscript{+}}\xspace}

\newcommand\methodAWrandom{\textsf{On-the-fly MoCHy-A\textsuperscript{+} (Basic)}\xspace}
\newcommand\methodAWgreedy{\textsf{On-the-fly MoCHy-A\textsuperscript{+} (Adv.)}\xspace}

\newcommand\methodX{MoCHy\xspace}
\newcommand\methodEX{MoCHy-E\xspace}
\newcommand\methodENX{MoCHy-E\textsubscript{ENUM}\xspace}
\newcommand\methodAX{MoCHy-A\xspace}
\newcommand\methodAEX{MoCHy-A\xspace}
\newcommand\methodAWX{MoCHy-A\textsuperscript{+}\xspace}

\newcommand\naive{na\"ive\xspace}
\newcommand\Naive{Na\"ive\xspace}

\newcommand\hwedge{hyperwedge\xspace}
\newcommand\hwedges{hyperwedges\xspace}
\newcommand\Hwedge{Hyperwedge\xspace}
\newcommand\Hwedges{Hyperwedges\xspace}

\newcommand\MT{M[t]\xspace}
\newcommand\MB{\bar{M}\xspace}
\newcommand\MBT{\bar{M}[t]\xspace}
\newcommand\MH{\hat{M}\xspace}
\newcommand\MHT{\hat{M}[t]\xspace}
\newcommand\MD{\tilde{M}\xspace}
\newcommand\MDT{\tilde{M}[t]\xspace}

\newcommand\mB{\bar{m}\xspace}
\newcommand\mBT{\bar{m}[t]\xspace}
\newcommand\mH{\hat{m}\xspace}
\newcommand\mHT{\hat{m}[t]\xspace}

\newcommand\Mi{M_{e_i}\xspace}
\newcommand\Mij{M_{\wedge_{ij}}\xspace}
\newcommand\Mijk{M_{\sqcap_{ijk}}\xspace}
\newcommand\MTi{M_{e_i}[t]\xspace}
\newcommand\MTij{M_{\wedge_{ij}}[t]\xspace}
\newcommand\MTijk{M_{\sqcap_{ijk}}[t]\xspace}

\newcommand\wij{\wedge_{ij}\xspace}
\newcommand\wik{\wedge_{ik}\xspace}
\newcommand\wjk{\wedge_{jk}\xspace}
\newcommand\wki{\wedge_{ki}\xspace}
\newcommand\eijk{\{e_i,e_j,e_k\}\xspace}
\newcommand\hijk{h(\{e_i,e_j,e_k\})\xspace}
\newcommand\thijk{h_{3}(\{e_i,e_j,e_k\})\xspace}
\newcommand{\egeneral}{\{e_{s_1},e_{s_2},...,e_{s_k}\}\xspace}
\newcommand{\hgeneral}{h(\{e_{s_1},e_{s_2},...,e_{s_k}\})\xspace}

\newcommand\GT{\bar{G}}

\newcommand\NT{{N}}
\newcommand\nei{N_{e_{i}}}
\newcommand\nej{N_{e_{j}}}
\newcommand\nek{N_{e_{k}}}
\newcommand\PT{\wedge}

\newcommand\degt[1]{|\NT_{#1}|}
\newcommand{\smallsection}[1]{{\vspace{0.02in} \noindent {\bf{\underline{\smash{#1}}}}}}

\maketitle

\begin{abstract}
 Hypergraphs naturally represent group interactions, which are omnipresent in many domains: collaborations of researchers, co-purchases of items, and joint interactions of proteins, to name a few.
In this work, we propose tools for answering the following questions in a systematic manner:
{\bf (Q1)} what are the structural design principles of real-world hypergraphs?
{\bf (Q2)} how can we compare local structures of hypergraphs of different sizes?
{\bf (Q3)} how can we identify domains from which hypergraphs are?

We first define {\it hypergraph motifs} (\motifs), which describe the \kijung{overlapping} patterns of three connected hyperedges.
Then, we define the significance of each \motif in a hypergraph as its occurrences relative to those in properly randomized hypergraphs.
Lastly, we define the {\it characteristic profile} (CP) as the vector of the normalized significance of every \motif.
Regarding Q1, we find that \motifs' occurrences in $11$ real-world hypergraphs from $5$ domains are clearly distinguished from those of randomized hypergraphs. In addition, we demonstrate that CPs capture local structural patterns unique to each domain, and thus comparing CPs of hypergraphs addresses Q2 and Q3.
\kijung{The concept of CP is naturally extended to represent the connectivity pattern of each node or hyperedge as a vector, which proves useful in node classification and hyperedge prediction.}


Our algorithmic contribution is to propose \method, a family of parallel algorithms for counting \motifs' occurrences in a hypergraph. 
We theoretically analyze their speed and accuracy and show empirically that the advanced approximate version \methodAW is up to $25\times$ more accurate and $32\times$ faster than the basic approximate and exact versions, respectively.


\kijung{
Furthermore, we explore \textit{ternary hypergraph motifs}  that extends h-motifs by taking into account not only the presence but also the cardinality of intersections among hyperedges. This extension proves beneficial for all previously mentioned applications.
}
\keywords{Hypergraph \and Hypergraph motif \and Ternary hypergraph motif \and Counting algorithm}
\end{abstract}

\section{Introduction}
\label{intro}

\begin{figure*}[t]
	\hspace{-2mm}
	\subfigure[Example data: coauthorship relations\label{fig:example:coauthorship}]{
		\includegraphics[width=0.5\columnwidth]{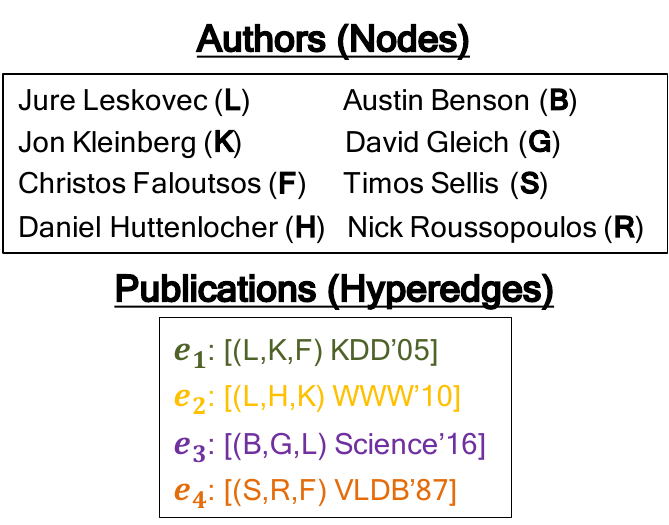}
	}
	\subfigure[Hypergraph representation\label{fig:example:hypergraph}]{
		\includegraphics[width=0.3245\columnwidth]{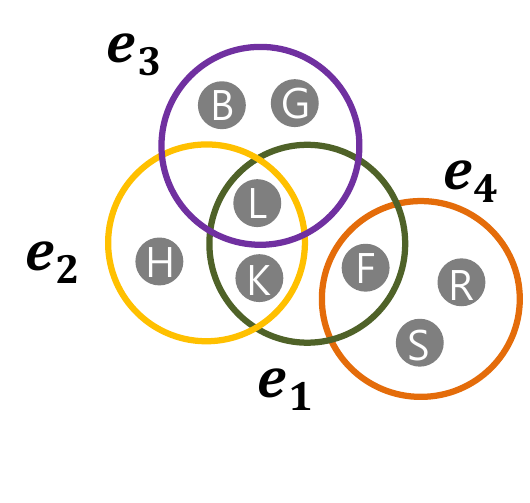}
	}
	\subfigure[\blue{Line-graph representation}\label{fig:example:graph}]{
		\includegraphics[width=0.2135\columnwidth]{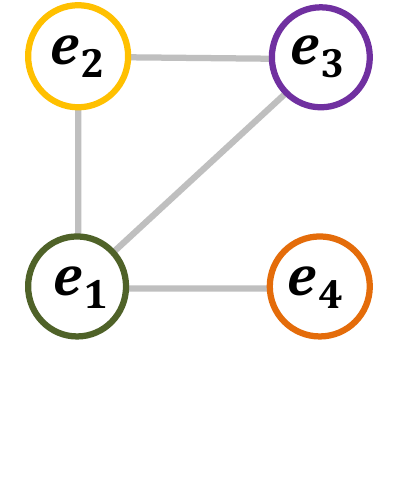} 
	}
	\subfigure[\Motifs and instances \label{fig:example:motif}]{
		\includegraphics[width=0.937\columnwidth]{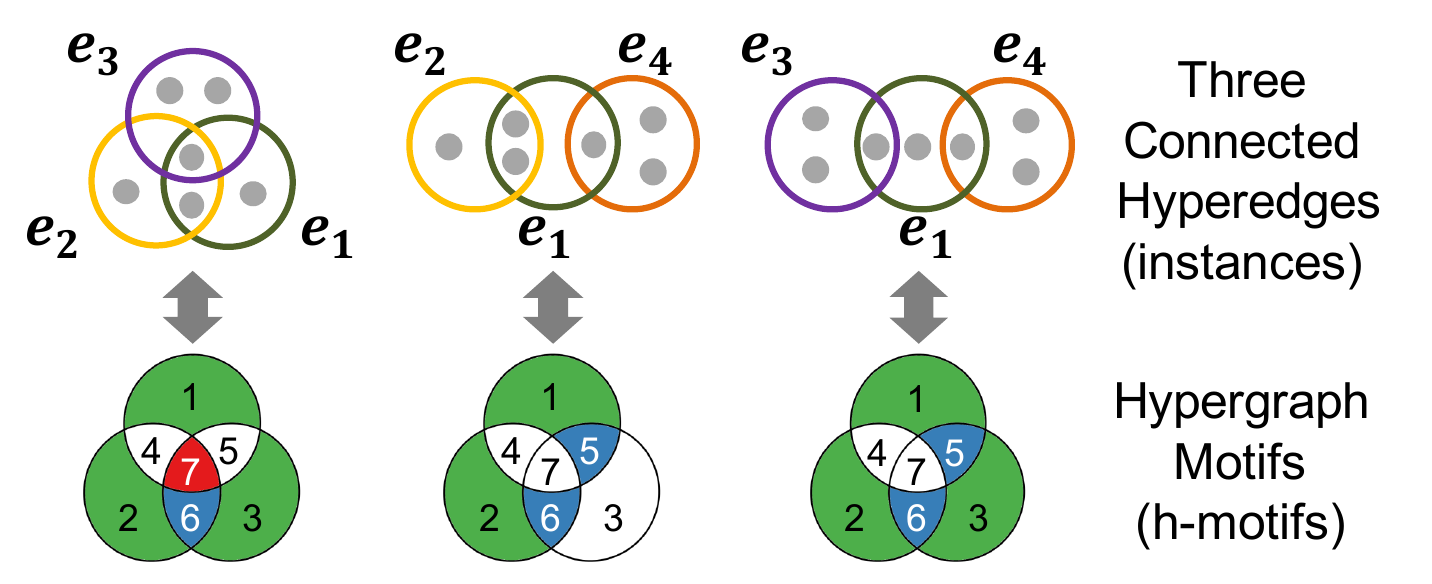}
	}
	\\
	\vspace{-2mm}
	\caption{
		\label{fig:example}
		(a) Example: co-authorship relations.
			(b) Hypergraph: the hypergraph representation of (a).
			(c) Line Graph: the \blue{line-graph} representation of (b).
			(d) Hypergraph Motifs: example \motifs and their instances in (b).}
\end{figure*}

Complex systems consisting of pairwise interactions between individuals or objects are naturally expressed in the form of graphs. Nodes and edges, which compose a graph, represent individuals (or objects) and their pairwise interactions, respectively.
Thanks to their powerful expressiveness, graphs have been used in a wide variety of fields, including social network analysis, web, bioinformatics, and epidemiology. Global structural patterns of real-world graphs, such as power-law degree distribution \cite{barabasi1999emergence,faloutsos1999power} and six degrees of separation \cite{kang2010radius,watts1998collective}, have been extensively investigated.

In addition to global patterns, real-world graphs exhibit patterns in their local structures, which differentiate graphs in the same domain from random graphs or those in other domains. 
Local structures are revealed by counting the occurrences of different network motifs \cite{milo2004superfamilies,milo2002network}, which describe the patterns of pairwise interactions between a fixed number of connected nodes (typically $3$, $4$, or $5$ nodes).   
As a fundamental building block, network motifs have played a key role in many analytical and predictive tasks, including community detection \cite{benson2016higher,li2019edmot,tsourakakis2017scalable,yin2017local}, classification \cite{chen2013identification,lee2019graph,milo2004superfamilies}, and anomaly detection \cite{becchetti2010efficient,shin2020fast}.


Despite the prevalence of graphs, interactions in several complex systems are groupwise rather than pairwise: collaborations of researchers, co-purchases of items, joint interactions of proteins, tags attached to the same web post, to name a few. 
These group interactions cannot be represented by edges in a graph.
Suppose three or more researchers coauthor a publication. This co-authorship cannot be represented as a single edge, and creating edges between all pairs of the researchers cannot be distinguished from multiple papers coauthored by subsets of the researchers.

This inherent limitation of graphs is addressed by hypergraphs, which consist of nodes and hyperedges.
Each hyperedge is a subset of any number of nodes, and it represents a group interaction among the nodes.
For example, the coauthorship relations in Figure~\ref{fig:example:coauthorship} are naturally represented as the hypergraph in Figure~\ref{fig:example:hypergraph}.
In the hypergraph, seminar work \cite{leskovec2005graphs} coauthored by Jure Leskovec (L), Jon Kleinberg (K), and Christos Faloutsos (F) is expressed as the hyperedge $e_{1}=\{L,K,F\}$, and it is distinguished from three papers coauthored by each pair, which, if they exist, can be represented as three hyperedges $\{K,L\}$, $\{F,L\}$, and $\{F,K\}$.



The successful investigation and discovery of local structural patterns in real-world graphs motivate us to explore local structural patterns in real-world hypergraphs.
However, network motifs, which proved to be useful for graphs, are not trivially extended to hypergraphs.
\blue{Due to the flexibility in the size of hyperedges, it is possible to form $2^n$ distinct hyperedges with a given set of $n$ nodes.
As a result, the potential number of hypergraphs is $2^{2^n}$, which is extraordinarily large even for a small number of nodes. 
This implies that there can be numerous possible interactions among hyperedges, highlighting the complexity of hypergraph structures.}

In this work, taking these challenges into consideration, we define $26$ {\it hypergraph motifs} (\motifs) so that they describe overlapping patterns of three connected hyperedges (rather than nodes).
As seen in Figure~\ref{fig:example:motif}, \motifs describe the overlapping pattern of hyperedges $e_{1}$, $e_{2}$, and $e_{3}$ by the emptiness of seven subsets: $e_{1}\setminus e_{2} \setminus e_{3}$, $e_{2}\setminus e_{3} \setminus e_{1}$,  $e_{3}\setminus e_{1} \setminus e_{2}$, $e_{1}\cap e_{2} \setminus e_{3}$, $e_{2}\cap e_{3} \setminus e_{1}$, $e_{3}\cap e_{1} \setminus e_{2}$, and $e_{1}\cap e_{2} \cap e_{3}$.
As a result, every overlapping pattern is described by a unique \motif, independently of the sizes of hyperedges. 
While this work focuses on overlapping patterns of three hyperedges, \motifs are easily extended to four or more hyperedges.


We count the number of each \motif's instances in $11$ real-world hypergraphs from $5$ different domains.
Then, we measure the significance of each \motif in each hypergraph by comparing the count of its instances in the hypergraph against the counts in properly randomized hypergraphs. Lastly, we compute the {\it characteristic profile} (CP) of each hypergraph, defined as the vector of the normalized significance of every \motif.
Comparing the counts and CPs of different hypergraphs leads to the following observations:
\begin{itemize}
	\item Structural design principles of real-world hypergraphs that are captured by frequencies of different \motifs are clearly distinguished from those of randomized hypergraphs.
	\item Hypergraphs from the same domains have similar CPs, while hypergraphs from different domains have distinct CPs (see Figure~\ref{fig:crown}). In other words, CPs successfully capture local structure patterns unique to each domain.
\end{itemize}

\begin{figure}[t]
	\centering 
	\includegraphics[width=\linewidth]{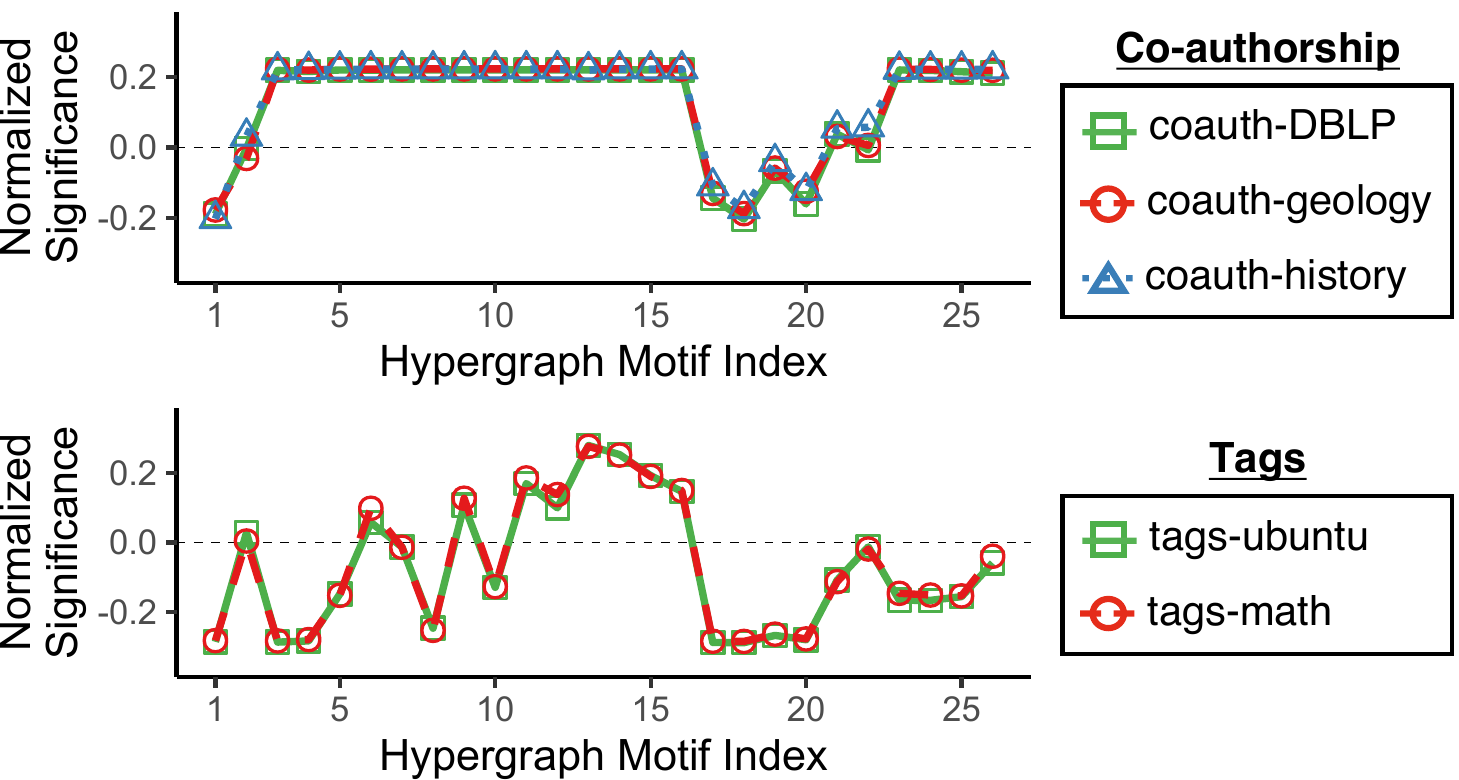}
	\caption{\label{fig:crown} 
	Distributions of \motifs' instances precisely characterize local structural patterns of real-world hypergraphs.
	Note that the hypergraphs from the same domains have similar distributions, while the hypergraphs from different domains do not.
	See Section~\ref{sec:exp:domain} for details.	
}
\end{figure}

\black{
Similarly, \motifs can also be employed to summarize the connectivity pattern of each node or hyperedge.
Specifically, for each node, we can calculate its {\it node profile} (NP), a 26-element vector with each element indicating the frequency of each motif's instances within the node's ego-network.
Likewise, the {\it hyperedge profile} (HP) of a hyperedge is a 26-element vector with each element representing the count of each motif's instances that involve the hyperedge.
We demonstrate empirically that NPs and HPs effectively capture local connectivity patterns, serving as valuable features for node classification and hyperedge prediction tasks.
}


Our algorithmic contribution is to design \method (\textbf{Mo}tif \textbf{C}ounting in \textbf{Hy}pergraphs), a family of parallel algorithms for counting \motifs' instances, which is the computational bottleneck of the aforementioned process.
Note that since \blue{multi-way overlaps} are taken into consideration, counting the instances of \motifs is more challenging than counting the instances of network motifs, which are defined solely based on pairwise interactions.
We provide one exact version, named \methodE, and two approximate versions, named \methodAE and \methodAW.
Empirically, \methodAW is up to $25\times$ more accurate than \methodAE, and it is up to $32\times$ faster than \methodE, with little sacrifice of accuracy.
These empirical results are consistent with our theoretical analyses.

\black{
Additionally, we investigate \textit{ternary hypergraph motifs} (\tmotifs) as a promising extension of \motifs. While \motifs focus only on the emptiness of seven subsets derived from intersections among hyperedges, \tmotifs further differentiate patterns based on the cardinality of these subsets. In particular, \tmotifs consider whether the cardinality of each non-empty subset surpasses a specific threshold or not, resulting in $431$ distinct patterns. We demonstrate that employing \tmotifs instead of \motifs leads to performance improvements in all the previously mentioned applications, i.e., hypergraph (domain) classification, node classification, and hyperedge prediction.
}


In summary, our contributions are summarized as follows:
\begin{itemize}
    \item {\bf Novel Concepts:} We introduce \motifs, which capture the local structures of hypergraphs, independently of the sizes of hyperedges or hypergraphs. 
    \black{We extend this concept to \tmotifs, allowing for a more detailed distinction of local structures.}
    \item {\bf Fast and Provable Algorithms:} We develop \method, a family of parallel algorithms for counting \motifs' instances. We show theoretically and empirically that the advanced version significantly outperforms the basic ones, providing a better trade-off between speed and accuracy.
    \item {\bf Discoveries in $11$ Real-world Hypergraphs:} We show that \motifs and \black{\tmotifs} reveal local structural patterns that are shared by hypergraphs from the same domains but distinguished from those of random hypergraphs and hypergraphs from other domains (see Figure~\ref{fig:crown}). 
    \item {\bf Machine Learning Applications:} \black{We empirically demonstrate that \motifs allow for the extraction of effective features in three machine-learning tasks, and employing \tmotifs enables the extraction of even stronger features.}
\end{itemize}

\noindent{\bf Reproducibility:} The code and datasets used in this work are available at \url{https://github.com/jing9044/MoCHy-with-3h-motif}.

\black{This paper is an extension of our previous work \cite{lee2020hypergraph}, which first introduced the concept of \motifs and related counting algorithms. In this extended version, we investigate various extensions of \motifs, including \tmotifs (Section~\ref{sec:extension} and Appendices~\ref{appendix:hmotif:generalized} and
\ref{appendix:kh-motif:generalized}). Furthermore, \blue{we develop an advanced on-the-fly algorithm for improved space efficiency (Section~\ref{sec:method:par_fly}) and establish accuracy guarantees for the approximate counting algorithms in the form of sample concentration bounds (Theorems~\ref{thm:sampling_ver1:concentration} and \ref{thm:sampling_ver2:concentration}).}
We also evaluate the effectiveness of \motifs for machine learning applications on three tasks using 7 to 11 datasets (Section~\ref{sec:exp:applications} and Appendices~\ref{appendix:prediction} and \ref{appendix:node_classification}). We especially demonstrate the superior performance of \tmotifs over their variants and \motifs in these tasks (Sections~\ref{sec:exp:characterization_power} and \ref{sec:exp:applications}, and Appendix~\ref{appendix:variants}).
Finally, we measure and compare the importance of different \motifs in characterizing hypergraph structures and their correlation with global structural properties (Section~\ref{sec:exp:domain} and Appendix~\ref{sec:addExp-networkProperties}).
}

In Section~\ref{sec:concept}, we introduce \motifs and \black{related concepts.}
\black{In Section~\ref{sec:characterize}, we describe how we use these concepts to characterize hypergraphs, hyperedges, and nodes.}
In Section~\ref{sec:method}, we present exact and approximate algorithms for counting instances of \motifs, and we analyze their theoretical properties.
\black{In Section~\ref{sec:extension}, we extend \motifs to \tmotifs.}
In Section~\ref{sec:exp}, we provide experimental results. 
After discussing related work in Section~\ref{sec:related}, we offer conclusions and \blue{future research directions} in Section~\ref{sec:summary}.

\section{Proposed Concepts}
\label{sec:concept}

\begin{table}[t!]
	\begin{center}
		\caption{\label{notations}Frequently-used symbols.}
		\scalebox{0.76}{
			\begin{tabular}{c|l}
				\toprule
				\textbf{Notation} & \textbf{Definition}\\
				\midrule
				$G=(V,E)$ & hypergraph with nodes $V$ and hyperedges $E$\\
				$E=\{e_1,...,e_{|E|}\}$ & set of hyperedges \\
				$E_v$ & set of hyperedges that contains a node $v$\\
				\midrule
				$\wedge$ & set of \hwedges in $G$\\
				$\wij$ & \hwedge consisting of $e_i$ and $e_j$ \\
				\midrule
				$\GT=(E,\wedge, \omega)$ & \blue{line graph} representation  of $G$ \\
				$\omega(\wij)$ & the number of nodes shared between $e_i$ and $e_j$ \\
				$\nei$ & set of neighbors of $e_i$ in $\GT$ \\
				\midrule
				$h(\{e_i,e_j,e_k\})$ & \motif corresponding to an instance $\{e_i,e_j,e_k\}$ \\ 
				$M[t]$ & count of \motif $t$'s instances\\
				\bottomrule 
			\end{tabular}}
	\end{center}
\end{table}

\begin{figure*}[t]
	\centering
	\includegraphics[width=1.01\linewidth]{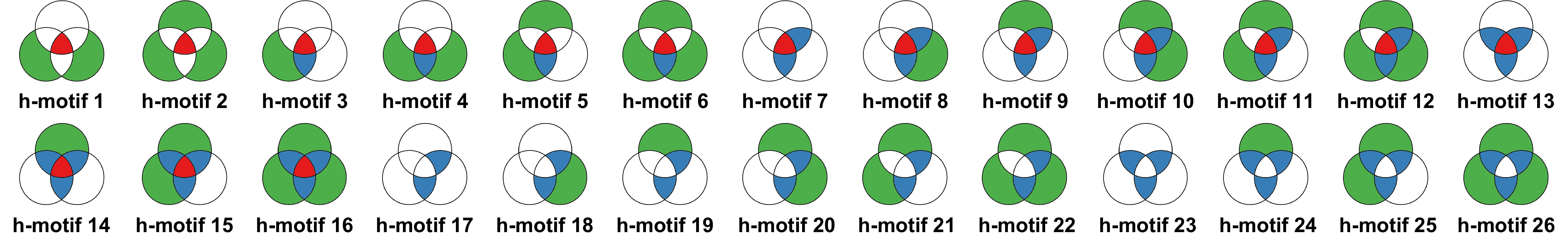}
	\caption{\label{motif_three_hyperedges} The 26 \motifs studied in this work. 
        \black{In each Venn diagram, uncolored regions are empty without containing any nodes, while colored regions include at least one node.}
	\Motifs 17 - 22 are open, while the others are closed.
	}
\end{figure*}

In this section, \black{we introduce preliminary concepts, and based on them, we define the proposed concept, i.e., hypergraph motifs.}  
Refer to Table \ref{notations} for the notations frequently used in the paper.

\subsection{Preliminaries and Notations}
\label{sec:concept:prelim}

We introduce some preliminary concepts and notations.

\smallsection{Hypergraph:} Consider a {\it hypergraph} $G=(V,E)$, where $V$ and $E:=\{e_1,e_2,...,e_{|E|}\}$ are sets of nodes and hyperedges, respectively.\footnote{\blue{Note that, in this work, $E$ is not a multi-set. That is, we assume that every hyperedge is unique.}}
Each hyperedge $e_i\in E$ is a non-empty subset of $V$, and we use $|e_i|$ to denote the number of nodes in it.
For each node $v\in V$, we use $E_v:=\{e_i\in E: v\in e_i\}$ to denote the set of hyperedges that include $v$. 
We say two hyperedges $e_i$ and $e_j$ are {\it adjacent} if they share any member, i.e., if $e_i\cap e_j \neq \varnothing$.
Then, for each hyperedge $e_i$,
we denote the set of hyperedges adjacent to $e_i$ by $\nei:=\{e_j\in E: e_i\cap e_j \neq \varnothing\}$ and the number of such hyperedges by $|\nei|$.
Similarly, we say three hyperedges $e_i$, $e_j$, and $e_k$ are {\it connected} if \blue{there exists at least one hyperedge among them that is adjacent to the other two.} 

\smallsection{\Hwedges:}
We define a \textit{\hwedge} as an unordered pair of adjacent hyperedges. We denote the set of \hwedges in $G$ by $\wedge:=\{\{e_i,e_j\}\in {E \choose 2}: e_i\cap e_j \neq \varnothing\}$.
We use $\wij\in \wedge$ to denote the \hwedge consisting of $e_i$ and $e_j$. 
In the example hypergraph in Figure~\ref{fig:example:hypergraph},
there are four \hwedges: $\wedge_{12}$, $\wedge_{13}$, $\wedge_{23}$, and $\wedge_{14}$.

\begin{figure}[t!]
	\centering
	\includegraphics[width=0.75\columnwidth]{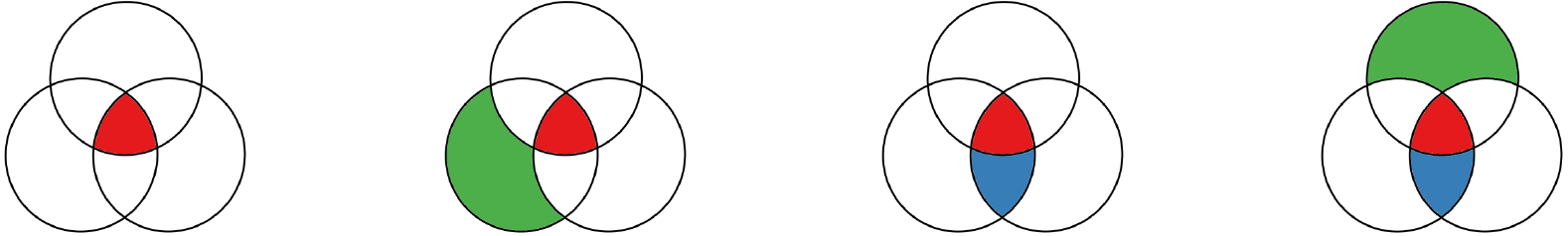} 
	\caption{\label{motif_three_hyperedges_duplicated}
		\blue{The patterns that cannot be obtained from three distinct hyperedges. For example, any three hyperedges corresponding to the leftmost pattern are necessarily identical. However, according to our definition of hypergraphs in Section~\ref{sec:concept:prelim}, every hyperedge is unique. Thus, there should be no instance of the pattern.}
 }
\end{figure}



\smallsection{Line Graph:}
We define the \blue{{\it line graph} (a.k.a., projected graph)} of a hypergraph $G=(V,E)$ as $\GT=(E,\wedge, \omega)$, where $\wedge$ is the set of hyperwedges and $\omega(\wij):=|e_i \cap e_j|$.
That is, in the line graph $\GT$, hyperedges in $G$ act as nodes, and two of them are adjacent if and only if they share any member. \blue{To be more precise, $\GT$ is a weighted variant of a line graph, where each edge is assigned a weight equal to the size of overlap of the corresponding hyperwedge in $G$.}
Note that for each hyperedge $e_i\in E$, $\nei$  is the set of neighbors of $e_i$ in $\GT$, and $|\nei|$ is its degree in $\GT$.
Figure~\ref{fig:example:graph} shows the line graph of the example hypergraph in Figure~\ref{fig:example:hypergraph}.

\smallsection{Incidence Graph:} 
\blue{We define the \textit{incidence graph} (a.k.a., star expansion) of a hypergraph $G=(V,E)$ as $G'=(V',E')$ where $V':=V\cup E$ and $E':=\{(v,e)\in V\times E: v\in e\}$.
That is, in the bipartite graph $G'$, $V$ and $E$ are the two subsets of nodes, and there exists an edge between $v\in V$ and $e \in E$ if and only if $v\in e$.
}



\subsection{Hypergraph Motifs (H-Motifs)}
\label{sec:concept:motif}
We introduce hypergraph motifs, which are basic building blocks of hypergraphs. Then, we discuss their properties and generalization.

\smallsection{Definition and Representation:}
Hypergraph motifs (or \motifs in short) \blue{are designed for describing} the \black{overlapping} patterns of three connected hyperedges.
Specifically, given a set $\eijk$ of three connected hyperedges, \motifs describe its \black{overlapping} pattern by the emptiness of the following seven sets: (1) $e_i\setminus e_j \setminus e_k$, (2) $e_j\setminus e_k \setminus e_i$, (3) $e_k\setminus e_i \setminus e_j$, (4) $e_i\cap e_j \setminus e_k$, (5) $e_j\cap e_k \setminus e_i$, (6) $e_k\cap e_i \setminus e_j$, and (7) $e_i\cap e_j \cap e_k$.
Formally, a \motif is defined as a binary vector of size $7$ whose elements represent the emptiness of the above sets, respectively, and as seen in Figure~\ref{fig:example:motif}, \motifs are naturally represented in the Venn diagram.
\blue{
Equivalently, when we leave at most one node in each of the above subsets, \motifs can be defined based on the isomorphism between sub-hypergraphs consisting of three connected hyperedges.
}
While there can be $2^7$ \motifs, $26$ \motifs remain once we exclude symmetric ones, \blue{those that cannot be obtained from distinct hyperedges} (see Figure~\ref{motif_three_hyperedges_duplicated}), and those that cannot be obtained from connected hyperedges.
The 26 cases, which we call {\it \motif 1} through {\it \motif 26}, are visualized in the Venn diagram in Figure~\ref{motif_three_hyperedges}.


\smallsection{Instances of \Motifs}:
Consider a hypergraph $G=(V,E)$.
A set of three connected hyperedges is an {\it instance} of \motif $t$ if their \black{overlapping} pattern corresponds to \motif $t$.
The count of each \motif's instances is used to characterize the local structure of $G$, as discussed in the following sections.

\smallsection{Open and Closed \Motifs}:
A \motif is {\it closed} if all three hyperedges in its instances are adjacent to (i.e., overlapped with) each other. If its instances contain two non-adjacent (i.e., disjoint) hyperedges, a \motif is {\it open}.
In Figure~\ref{motif_three_hyperedges}, \motifs $17$ - $22$ are open; the others are closed.

\smallsection{Properties of \Motifs:}
From the definition of h-mot-ifs, the following desirable properties are immediate:
\begin{itemize}
\item {\blue{\bf Exhaustivity:}} \motifs capture \black{overlapping} patterns of \textit{all possible} three connected hyperedges.
\item {\blue{\bf Unicity:}} \black{overlapping} pattern of any three connected hyperedges is captured by \textit{at most one} \motif.
\item {\blue{\bf Size Independence:}} \motifs capture \black{overlapping} patterns \textit{independently of the sizes of hyperedges}. Note that there can be infinitely many combinations of sizes of three connected hyperedges.
\end{itemize}
Note that the exhaustiveness and the uniqueness imply that \black{overlapping} pattern of any three connected hyperedges is captured by \textit{exactly one} \motif.

\smallsection{Why \blue{Multi-way} Overlaps?}:
\blue{Multi-way} overlaps (e.g., the emptiness of $e_{1}\cap e_{2} \cap e_{3}$ and $e_{1}\setminus e_{2} \setminus e_{3}$) play a key role in capturing the local structural patterns of real-world hypergraphs.
Taking only the pairwise overlaps (e.g., the emptiness of $e_{1}\cap e_{2}$,  $e_{1}\setminus e_{2}$, and $e_{2} \setminus  e_{1}$) into account limits the number of possible \black{overlapping} patterns of three distinct hyperedges to just eight,\footnote{Note that using the conventional network motifs in s limits this number to two.}  significantly limiting their expressiveness and thus usefulness.
Specifically, $12$ (out of $26$) \motifs have the same pairwise overlaps, while their occurrences and significances vary substantially in real-world hypergraphs. 
For example, in Figure~\ref{fig:example},  $\{e_{1}, e_{2}, e_{4}\}$  and $\{e_{1}, e_{3}, e_{4}\}$ have the same pairwise overlaps, while their \black{overlapping} patterns are distinguished by \motifs.

\section{Characterization using \Motifs}
\label{sec:characterize}
\black{In this section, we outline the process of using \motifs to summarize local structural patterns within a hypergraph, as well as those around individual nodes and hyperedges, for the purpose of characterizing them.
}

\subsection{Hypergraph Characterization}

What are the structural design principles of real-world hypergraphs distinguished from those of random hypergraphs?
Below, we introduce the characteristic profile (CP), which is a tool for answering the above question using \motifs.

\smallsection{Randomized Hypergraphs:}
While one might try to characterize the local structure of a hypergraph by absolute counts of each \motif's instances in it, some \motifs may naturally have many instances.
Thus, for more accurate characterization, we need random hypergraphs to be compared against real-world hypergraphs.
\blue{In the network motif literature, configuration models have been widely employed for this purpose \cite{milo2004superfamilies,milo2002network}. These models generate random graphs while preserving the degree distribution of the original graph. 
Using the configuration model does not introduce an excessive level of randomness, maintaining a meaningful and controlled comparison with the original graph.}

\blue{In line with prior research, we used a configuration model extended to hypergraphs to obtain random hypergraphs.
Specifically, we employ the Chung-Lu model \cite{aksoy2017measuring}, which is a configuration model designed to generate random bipartite graphs while preserving in expectation the degree distributions of the original graph \cite{aksoy2017measuring} (for a precise theoretical description, please refer to Eq.\eqref{eq:degree_preserve} in Appendix~\ref{appendix:random}).
We first apply this model to the incidence graph $G'$ of the input hypergraph $G$ to obtain randomized bipartite graphs, and then we transform them into random hypergraphs.
The empirical distributions of node degrees and hyperedge sizes in the random hypergraphs closely resemble those in
 $G$, as shown in Figure~\ref{deg_fig} in Appendix~\ref{appendix:random}, where
we also provide pseudocode of the process (Algorithm~\ref{randomized_hypergraph_alg}) and its theoretical properties.
}

\smallsection{Significance of \Motifs:}
We measure the significance of each \motif in a hypergraph by comparing the count of its instances against the count of them in randomized hypergraphs.
Specifically, the {\it significance} of a \motif $t$ in a hypergraph $G$ is defined as
	\vspace{-1mm}
\begin{equation}
    \Delta_t := \frac{M[t] - M_{rand}[t]}{M[t] + M_{rand}[t] + \epsilon}, \label{eq:significance}
    	\vspace{-1mm}
\end{equation}
where $M[t]$ is the number of instances of \motif $t$ in $G$, and $M_{rand}[t]$ is the average number of instances of \motif $t$ in randomized hypergraphs. We fixed $\epsilon$ to $1$ throughout this paper.
This way of measuring significance was proposed for network motifs \cite{milo2004superfamilies} as an alternative of normalized Z scores, \blue{which can be dominated by few network motifs with small variances.
Specifically, when the variance of the occurrences of a specific network motif in randomized graphs is very small, the Z-score becomes significantly large, and thus the Z-score of the particular network motif may dominate all others, regardless of its absolute occurrences.}

\smallsection{Characteristic Profile (CP):} By normalizing and concatenating the significances of all \motifs in a hypergraph, we obtain the characteristic profile (CP), which summarizes the local structural pattern of the hypergraph.
Specifically, the {\it characteristic profile} of a hypergraph $G$ is a vector of size $26$, where each $t$-th element is 
	\vspace{-1mm}
\begin{equation}
    CP_t := \frac{\Delta_t}{\sqrt{\sum_{t=1}^{26} \Delta_t^2}}. \label{eq:cp}
    	\vspace{-1mm}
\end{equation}
Note that, for each $t$, $CP_t$ is between $-1$ and $1$.
The CP is used in Section~\ref{sec:exp:domain} to compare the local structural patterns of real-world hypergraphs from diverse domains.


\subsection{\black{Hyperedge Characterization}}

\black{Each individual hyperedge can also  be characterized by the \motif instances that contain it.}

\smallsection{\black{Hyperedge Profile (HP):}}
\black{Specifically, given a hypergraph $G=(V, E)$, the {\it hyperedge profile} (HP) of a hyperedge $e\in E$ is a $26$-element vector, where each $t$-th element is the number of \motif $t$'s instances that include $e$. 
It should be noticed that, for HPs, we use absolute counts of \motif instances rather than their normalized significances.
Normalized significances are introduced for CPs to enable direct comparison of hypergraphs at different scales, specifically with varying numbers of nodes and hyperedges. Since comparisons between individual hyperedges, \blue{such as for the purpose of hyperedge prediction within a hypergraph, may be} free from such issues, we simply use the absolute counts of \motif instances when defining HPs.\footnote{\blue{Recall that the CPs are specifically designed to capture structural similarity between hypergraphs of potentially varying scales, typically using a simple metric such as cosine similarity.
                Regarding HPs and NPs (defined in Section~\ref{sec:characterize:node}), our primary objectives of using them are to distinguish missing hyperedges from other candidates (for HPs)
                and to distinguish nodes from different domains (for NPs).
                For these purposes, the scale information can be useful, and thus, we  employ absolute counts for both HPs and NPs to retain and leverage this scale information.
                It is also important to note that, in our experiments, NPs and HPs are used with classifiers (e.g., hypergraph neural networks) powerful enough to capture (dis)similarity even across differing scales.}}
In Section~\ref{sec:exp:observations:prediction}, we demonstrate the effectiveness of HPs as input features in hyperedge prediction tasks.}



\subsection{\black{Node Characterization}}
\label{sec:characterize:node}

\black{Similarly, we characterize each node by the \motif instances in its ego network.
Below, we introduce three types of ego-networks in hypergraphs, and based on these, we elaborate on the node characterization method.}

\smallsection{\black{Hypergraph Ego-networks:}}
\black{Comrie and Kleinberg \cite{comrie2021hypergraph} defined three distinct types of ego-networks.
For each node $v\in V$ in a hypergraph $G=(V, E)$, 
we denote the neighborhood of $v$ (including $v$ itself) by $V_v:=\bigcup_{e_i\in E_v}e_i$, where $E_v:=\{e_i\in E: v\in e_i\}$.
The \textit{star ego-network} of $v$ \blue{is a subhypergraph of $G$ with $V_v$ as its node set and $E_v$ (i.e., the hyperedges that contain $v$) as its hyperedge set}. 
The \textit{radial ego-network} of $v$ \blue{a subhypergraph of $G$ with $V_v$ as its node set and $R_v:=\{e_i\in E: e_i \subseteq V_v \}$ (i.e., the hyperedges that are subsets of the neighborhood of $v$) as its hyperedge set.}
Lastly, the \textit{contracted ego-network} of $v$ \blue{has $V_v$ as its node set and $C_v:=\bigcup_{e_i\in E} \{e_i \cap V_v\}\setminus\emptyset$ as its hyperedge set,} and mathematically, \blue{the contracted ego-network of $v$} is the subhypergraph of $G$ induced by $V_v$. 
Note that $E_v \subseteq R_v \subseteq  C_v$.
Compared to $E_v$, $R_v$ additionally includes hyperedges that consist only of the neighbors of $v$ but not include $v$.
Compared to $R_v$, $C_v$ additionally includes the non-empty intersection of each hyperedge and the neighborhood of $v$.
}

\smallsection{\black{Node Profile (NP):}}
\black{Given a hypergraph $G=(V, E)$, the {\it node profile} (NP) of a node $v\in V$ is a $26$-element where each $t$-th element is the number of \motif $t$'s instances within an ego-network of $v$. Note that, as for HPs, we use the absolute counts of \motifs, instead of their normalized significances, for NPs.
Depending on the types of ego-networks, we define {\it star node profiles} (SNPs), {\it radial node profiles} (RNPs), and {\it contracted node profiles} (CNPs).
In Appendix~\ref{appendix:node_classification},
we provide an empirical comparison of these three types of NPs in the context of a node classification task. The results show that using RNPs consistently yields better performance than SNPs or CNPs, indicating that additional complete hyperedges (i.e., $R_v\setminus E_v$) are helpful, while partial ones extracted from hyperedges  (i.e., $C_v\setminus R_v$) are not.}


\section{Proposed Algorithms}
\label{sec:method}

Given a hypergraph, how can we count the instances of each \motif? 
Once we count them in the original and randomized hypergraphs, the significance of each \motif and the CP are obtained immediately by Eq.~\eqref{eq:significance} and Eq.~\eqref{eq:cp}.

In this section, we present \method (\textbf{Mo}tif \textbf{C}ounting in \textbf{Hy}pergraphs), which is a family of parallel algorithms for counting the instances of each \motif in the input hypergraph.
We first describe \blue{line-graph construction}, which is a preprocessing step of every version of \method.
Then, we present \methodE, which is for exact counting.
After that, we present two different versions of  \methodA, which are sampling-based algorithms for approximate counting.
Lastly, we discuss parallel and on-the-fly implementations.

Throughout this section, we use $\hijk$ to denote the \motif that describes the connectivity pattern of an \motif instance $\eijk$. We also use $\MT$ to denote the count of instances of \motif $t$.

\begin{algorithm}[t]
	\setstretch{1.25}
	\small
	\caption{\small \blue{Line Graph Construction} \\ (Preprocess)}
	\label{hyperwedge_counting}
	\SetAlgoLined
	\SetKwInOut{Input}{Input}
	\SetKwInOut{Output}{Output}
	\nonl \hspace{-4mm} \Input{input hypergraph: $G=(V,E)$}
	\nonl \hspace{-4mm} \Output{\blue{line graph}: $\GT=(E,\PT, \omega)$}
	$\PT\leftarrow \varnothing$ \\
	$\omega \leftarrow$ map whose default value is $0$ \\
	\For{\textbf{each} hyperedge $e_i \in E$ \blue{\textbf{(in parallel)}} \label{hyperwedge_counting:loop1}}{  
		\For{\textbf{each} node $v \in e_i$ \label{hyperwedge_counting:loop2}}{
			\For{\textbf{each} hyperedge $e_j \in E_v$ where $j>i$ \label{hyperwedge_counting:loop3}}{
				$\PT\leftarrow \PT \cup \{\wij\}$ \\ \label{hyperwedge_counting:body1}
				$\omega(\wij)=\omega(\wij)+ 1$
				\label{hyperwedge_counting:body2}
			}
		}
	}
	return $\GT=(E,\PT, \omega)$ 
\end{algorithm}

\smallsection{Remarks:}
	The problem of counting \motifs' occurrences bears some similarity to the classic problem of counting network motifs' occurrences.
	However, \blue{differently from network motifs,} which are defined solely based on pairwise interactions, \motifs are defined based on triple-wise interactions (e.g., $e_i\cap e_j \cap e_k$). 
	One might hypothesize that our problem can easily be reduced to the problem of counting the occurrences of network motifs, and thus existing solutions (e.g., \cite{bressan2019motivo,pinar2017escape}) are applicable to our problem.	
	In order to examine this possibility, we consider the following two attempts:
	\begin{enumerate}
		\item[(a)] Represent pairwise relations between hyperedges using the \blue{line graph}, where each edge $\{e_i,e_j\}$ indicates $e_i\cap e_j\neq \emptyset$.
		\item[(b)] Represent pairwise relations between hyperedges using the directed \blue{line graph} where each directed edge $e_i \rightarrow e_j$ indicates 
		$e_i\cap e_j\neq \emptyset$ and at the same time $e_i \not\subset e_j$.
	\end{enumerate}
	The number of possible connectivity patterns (i.e., network motifs) among three distinct connected hyperedges is just two (i.e., closed and open triangles) and eight in (a) and (b), respectively.
	In both cases,
	instances of multiple \motifs are not distinguished by network motifs, and 
	the occurrences of \motifs can not be inferred from those of network motifs.

In addition, another computational challenge stems from the fact that triple-wise and even pair-wise relations between hyperedges need to be computed from the input hypergraph, while pairwise relations between edges are given in graphs. This challenge necessitates the precomputation of partial relations, described in the next subsection.
	


\subsection{\blue{Line Graph Construction} (Algorithm~\ref{hyperwedge_counting})}
\label{sec:method:projection}
As a preprocessing step, every version of \method builds the \blue{line graph} $\GT=(E,\PT, \omega)$ (see Section~\ref{sec:concept:prelim}) of the input hypergraph $G=(V,E)$, as described in Algorithm~\ref{hyperwedge_counting}. 
To find the neighbors of each hyperedge $e_i$ (line~\ref{hyperwedge_counting:loop1}), the algorithm visits each hyperedge $e_j$ that contains $v$ and satisfies $j>i$ (line~\ref{hyperwedge_counting:loop3}) for each node $v\in e_i$ (line~\ref{hyperwedge_counting:loop2}).
Then for each such $e_j$, it adds $\wij=\{e_i,e_j\}$ to $\PT$ and increments $\omega(\wij)$ (lines~\ref{hyperwedge_counting:body1} and \ref{hyperwedge_counting:body2}).
The time complexity of this preprocessing step is given in Lemma~\ref{lemma:projection:time}.
\begin{lemma}[Complexity of Line Graph Construction\label{lemma:projection:time}]
	The \blue{expected} time complexity of Algorithm~\ref{hyperwedge_counting} is $O(\sum_{\wij\in\PT}|e_i\cap e_j|)$.
\end{lemma}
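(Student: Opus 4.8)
The plan is to reduce the running time to the number of executions of the innermost loop body (lines~\ref{hyperwedge_counting:body1}--\ref{hyperwedge_counting:body2}) and to evaluate that number by double counting. The two operations in the body---inserting $\wij$ into the hyperwedge set $\PT$ and incrementing $\omega(\wij)$---each take $O(1)$ expected amortized time when $\PT$ is realized as a hash set and $\omega$ as a hash map; this hashing is the only source of randomness in the algorithm and is precisely why the bound is phrased in expectation. Thus the expected total time equals, up to a constant factor, the number of times the body runs, and the problem becomes a purely combinatorial count.

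First I would observe that the body executes exactly once for each triple $(e_i, v, e_j)$ with $v\in e_i$, $e_j\in E_v$ (so that $v\in e_j$), and $j>i$; equivalently, once for each choice of a node $v$ together with an ordered pair of indices $i<j$ satisfying $v\in e_i\cap e_j$. This count can be organized in two ways. Grouping by the node $v$, each $v$ contributes one execution for every unordered pair of hyperedges in $E_v$, giving $\sum_{v\in V}\binom{|E_v|}{2}$, which is exactly the quantity the nested loops enumerate. Grouping instead by the unordered pair $\{e_i,e_j\}$, such a pair contributes at all iff $e_i\cap e_j\neq\varnothing$, i.e.\ iff $\wij\in\PT$, and then contributes once per shared node, namely $|e_i\cap e_j|=\omega(\wij)$ times. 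Equating the two groupings gives $\sum_{v\in V}\binom{|E_v|}{2}=\sum_{\wij\in\PT}|e_i\cap e_j|$, and multiplying by the $O(1)$ expected per-execution cost yields the claimed bound.

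The genuinely delicate step is not this count but confirming that merely advancing the loops adds nothing beyond it. Entering the middle loop for a pair $(e_i,v)$ and then selecting the entries of $E_v$ with $j>i$ must cost $O(1)$ plus time proportional to the number of selected entries; I would secure this by assuming each adjacency list $E_v$ is kept sorted by hyperedge index, so that the relevant suffix is located in $O(1)$ amortized time as $e_i$ ranges over $E_v$ in increasing order, and all remaining work is absorbed into the per-triple accounting above. The one mild caveat I would flag is that a node lying in a single hyperedge still triggers an $O(1)$ visit yet contributes nothing to $\sum_{\wij\in\PT}|e_i\cap e_j|$; under the standard convention of charging only productive work---or, equivalently, assuming no such degree-one incidences---this slack is dominated, and the stated expected complexity follows.
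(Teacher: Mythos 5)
Your proof is correct and follows essentially the same route as the paper's: the body of the innermost loop costs $O(1)$ in expectation under hashing, and it executes exactly $|e_i\cap e_j|$ times for each $\wij\in\PT$, which you establish by the double count $\sum_{v\in V}\binom{|E_v|}{2}=\sum_{\wij\in\PT}|e_i\cap e_j|$. You are in fact more careful than the paper, which silently ignores the loop-advancement overhead (the $O(\sum_{e\in E}|e|)$ cost of merely scanning every incidence, nonzero even for degree-one nodes that produce no hyperwedge); your explicit flagging and handling of that slack is a legitimate refinement rather than a gap.
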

\begin{proof}
If all sets and maps are implemented using hash tables, \blue{the expected time complexity of lines~\ref{hyperwedge_counting:body1} and \ref{hyperwedge_counting:body2} is $O(1)$ in expectation with uniform hash functions}, and they are executed $|e_i\cap e_j|$ times for each $\wij\in \PT$.
\end{proof}
\noindent Since $|\PT|<\sum_{e_i\in E}\degt{e_i}$ and $|e_i\cap e_j|\leq|e_i|$, Eq.~\eqref{eq:projection:time} holds.
\begin{equation}
\sum\nolimits_{\wij\in\PT}|e_i\cap e_j| < \sum\nolimits_{e_i\in E}(|e_i|\cdot|\nei|). \label{eq:projection:time} 
\end{equation}


\begin{algorithm}[t]
	\setstretch{1.25}
	\small
	\caption{\label{exact_motif_counting} \small \methodE: Exact \Motif Counting }
	\SetAlgoLined
	\SetKwInOut{Input}{Input}
	\SetKwInOut{Output}{Output}
	\nonl \hspace{-4mm} \Input{ \ \ (1) input hypergraph: $G=(V,E)$ \\ (2) \blue{line graph}: $\GT=(E,\PT, \omega)$}
	\nonl \hspace{-4mm} \Output{exact count of each \motif $t$'s instances $\MT$}
	$M \leftarrow$ map whose default value is $0$ \\
	\For{\textbf{each} hyperedge $e_i \in E$ \blue{\textbf{(in parallel)}}\label{exact_motif:loop1}}{
		\For{\textbf{each} unordered hyperedge pair $\{e_j, e_k\} \in$ $\nei \choose 2$\label{exact_motif:loop2}}{
			\If{$e_j \cap e_k = \varnothing$ or $i<\min(j,k)$\label{exact_motif:condition}}{
				$M[\hijk] \mathrel{+}= 1$ \label{exact_motif:count}
			}
		}
	}
	\textbf{return} $M$ 
\end{algorithm}

\subsection{Exact \Motif Counting (Algorithm~\ref{exact_motif_counting})}
\label{sec:method:exact}

We present \methodE (\method \textbf{E}xact), which counts the instances of each \motif exactly. 
The procedures of \methodE are described in Algorithm~\ref{exact_motif_counting}.
For each hyperedge $e_i\in E$ (line~\ref{exact_motif:loop1}), each unordered pair $\{e_j,e_k\}$ of its neighbors, where $\eijk$ is an \motif instance, is considered (line~\ref{exact_motif:loop2}).
If $e_j \cap e_k = \varnothing$ (i.e., if the corresponding \motif is open), $\eijk$ is considered only once.
However, if $e_j \cap e_k \neq \varnothing$ (i.e., if the corresponding \motif is closed),
$\eijk$ is considered two more times (i.e., when $e_j$ is chosen in line~\ref{exact_motif:loop1} and when $e_k$ is chosen in line~\ref{exact_motif:loop1}).
Based on these observations, given an \motif instance $\eijk$, the corresponding count $M[\hijk]$ is incremented (line~\ref{exact_motif:count}) only if $e_j \cap e_k = \varnothing$ or $i < \min(j,k)$ (line~\ref{exact_motif:condition}). This guarantees that each instance is counted exactly once. 
The time complexity of \methodE is given in Theorem~\ref{thm:exact:time}, which uses Lemma~\ref{lemma:motif:time}.

\begin{lemma}[Complexity of Computing $\hijk$]\label{lemma:motif:time}
	Given the input hypergraph $G=(V,E)$ and its \blue{line graph} $\GT=(E,\wedge,\omega)$, for each \motif instance $\eijk$, \blue{the expected time for computing $\hijk$ is $O(\min\\(|e_i|,|e_j|,|e_k|))$.}
\end{lemma}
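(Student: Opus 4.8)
The plan is to reduce the computation of $\hijk$ to computing a single additional scalar, namely the size of the triple intersection $g := |e_i \cap e_j \cap e_k|$, since everything else needed is already available in constant time. Recall that $\hijk$ is determined entirely by which of the seven regions (the three private parts, the three pairwise-only intersections, and the triple intersection) are empty. The line graph $\GT$ supplies the three pairwise overlap sizes $\omega(\wij)=|e_i\cap e_j|$, $\omega(\wjk)$, and $\omega(\wki)$ in $O(1)$ each, and the hyperedge sizes $|e_i|,|e_j|,|e_k|$ are stored directly. Hence, once $g$ is known, inclusion--exclusion yields the cardinality of every one of the seven regions in constant time: for instance $|e_i\cap e_j\setminus e_k| = |e_i\cap e_j| - g$, and $|e_i\setminus e_j\setminus e_k| = |e_i| - |e_i\cap e_j| - |e_i\cap e_k| + g$, with the remaining regions handled symmetrically. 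From these cardinalities, the emptiness of each region, and thus the binary vector $\hijk$, is read off in $O(1)$ time.

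It therefore remains to compute $g$ within the target budget. First I would use the three stored sizes to identify the hyperedge of smallest cardinality among $e_i,e_j,e_k$; without loss of generality say $|e_i| = \min(|e_i|,|e_j|,|e_k|)$. Then I iterate over each node $v\in e_i$ and test the memberships $v\in e_j$ and $v\in e_k$, incrementing a counter whenever both succeed, so that the final counter value equals $g$. Assuming the hyperedges are stored as hash sets (the same implementation assumption underlying the proof of Lemma~\ref{lemma:projection:time}), each membership test runs in $O(1)$ expected time under uniform hashing, so the entire scan costs $O(|e_i|) = O(\min(|e_i|,|e_j|,|e_k|))$ in expectation. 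This scan dominates the $O(1)$ inclusion--exclusion bookkeeping, giving the claimed bound.

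The only genuine subtlety, rather than a real obstacle, is the decision to scan the smallest of the three hyperedges: scanning an arbitrary fixed hyperedge would only yield an $O(|e_i|)$ factor, and it is precisely the constant-time comparison of the three stored sizes that converts this into the sharper $O(\min(|e_i|,|e_j|,|e_k|))$ factor in the statement. I would also stress that the expectation in the claim is taken purely over the randomness of the hash function and is not a probabilistic statement about the hypergraph itself, exactly mirroring the role of expectation in Lemma~\ref{lemma:projection:time}. Beyond this, I expect the argument to be entirely routine: a membership scan over the minimum-size hyperedge followed by constant-time inclusion--exclusion.
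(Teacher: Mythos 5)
Your proposal is correct and follows essentially the same route as the paper's proof: retrieve the hyperedge sizes and pairwise overlap sizes in $O(1)$ expected time from $G$ and $\GT$, compute the triple intersection by scanning the smallest hyperedge with hash-set membership tests, and recover the cardinalities (hence emptiness) of all seven regions by inclusion--exclusion in constant time. No gaps.
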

\begin{proof}
	Assume $|e_i|=\min(|e_i|,|e_j|,|e_k|)$, without loss of generality, and all sets and maps are implemented using hash tables.
	As defined in Section~\ref{sec:concept:motif}, $\hijk$ is computed in $O(1)$ time from the emptiness of the following sets:
	(1) $e_i\setminus e_j \setminus e_k$, (2) $e_j\setminus e_k \setminus e_i$, (3) $e_k\setminus e_i \setminus e_j$, (4) $e_i\cap e_j \setminus e_k$, (5) $e_j\cap e_k \setminus e_i$, (6) $e_k\cap e_i \setminus e_j$, and (7) $e_i\cap e_j \cap e_k$.
	We check their emptiness from their cardinalities.
	We obtain $e_i$, $e_j$, and $e_k$, which are stored in $G$, and their cardinalities in $O(1)$ time.
	Similarly, we obtain $|e_i \cap e_j|$, $|e_j \cap e_k|$, and $|e_k \cap e_i|$, which are stored in $\GT$, in $O(1)$ time \blue{in expectation with uniform hash functions}.
	Then, we compute $|e_i \cap e_j \cap e_k|$ in $O(|e_i|)$ time \blue{in expectation} by checking for each node in $e_i$ whether it is also in both $e_j$ and $e_k$.
	From these cardinalities, we obtain the cardinalities of the six other sets in $O(1)$ time as follows: 
	\begin{align*}
		& (1) \ |e_i\setminus e_j \setminus e_k|  = |e_i|-|e_i\cap e_j|-|e_k\cap e_i|+|e_i \cap e_j \cap e_k|, \\
		& (2) \ |e_j\setminus e_k \setminus e_i| = |e_j|-|e_i\cap e_j|-|e_j\cap e_k|+|e_i \cap e_j \cap e_k|, \\
		& (3) \ |e_k\setminus e_i \setminus e_j| = |e_k|-|e_k\cap e_i|-|e_j\cap e_k|+|e_i \cap e_j \cap e_k|, \\
		& (4) \ |e_i\cap e_j \setminus e_k| = |e_i\cap e_j| - |e_i \cap e_j \cap e_k|, \\
		& (5) \ |e_j\cap e_k \setminus e_i| = |e_j\cap e_k| - |e_i \cap e_j \cap e_k|, \\
		&(6) \ |e_k\cap e_i \setminus e_j| = |e_k\cap e_i| - |e_i \cap e_j \cap e_k|.
	\end{align*}
	Hence, the \blue{expected} time complexity of computing \\ $\hijk$ is $O(|e_i|)=O(\min(|e_i|,|e_j|,|e_k|))$.
\end{proof}

\begin{theorem}[Complexity of \methodE]\label{thm:exact:time}
	The \blue{expected} time complexity of Algorithm~\ref{exact_motif_counting} is $O(\sum_{e_i \in E}(|\nei|^2 \cdot |e_i|))$. 
\end{theorem}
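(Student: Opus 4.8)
The plan is to charge the running time iteration-by-iteration across the two nested loops of Algorithm~\ref{exact_motif_counting} and then aggregate. The outer loop (line~\ref{exact_motif:loop1}) ranges over the $|E|$ hyperedges, and for a fixed $e_i$ the inner loop (line~\ref{exact_motif:loop2}) ranges over the $\binom{|\nei|}{2}$ unordered pairs $\{e_j,e_k\}$ of its neighbors in $\GT$. Hence the body of the inner loop is executed $\sum_{e_i\in E}\binom{|\nei|}{2}$ times in total, and it suffices to bound the expected cost of one execution and multiply.

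First I would bound the guard in line~\ref{exact_motif:condition}. Deciding whether $e_j\cap e_k=\varnothing$ reduces to a single membership query $\{e_j,e_k\}\in\PT$ in the line graph, which, with all sets and maps backed by hash tables, takes $O(1)$ expected time; the comparison $i<\min(j,k)$ is also $O(1)$. So the guard contributes only $O(1)$ per inner iteration. When the guard passes, line~\ref{exact_motif:count} evaluates $\hijk$ and updates the map; by Lemma~\ref{lemma:motif:time} this costs $O(\min(|e_i|,|e_j|,|e_k|))$ expected time, with the map update being $O(1)$ expected. Since $\min(|e_i|,|e_j|,|e_k|)\ge 1$, the cheaper "skip" branch is also $O(\min(|e_i|,|e_j|,|e_k|))$, so in either case one inner iteration costs $O(\min(|e_i|,|e_j|,|e_k|))$ expected time.

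The key simplification is the inequality $\min(|e_i|,|e_j|,|e_k|)\le|e_i|$, which replaces the instance-dependent factor by a quantity depending only on the outer-loop hyperedge. With it, the expected work attributable to a fixed $e_i$ is $\sum_{\{e_j,e_k\}\in\binom{\nei}{2}} O(|e_i|)=O\!\left(\binom{|\nei|}{2}\cdot|e_i|\right)=O(|\nei|^2\cdot|e_i|)$. Summing over all outer-loop hyperedges and using linearity of expectation (valid since every per-iteration cost is an expectation over the hash functions) yields the claimed bound $O\!\left(\sum_{e_i\in E}|\nei|^2\cdot|e_i|\right)$.

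There is no serious obstacle here; the statement follows from routine aggregation once Lemma~\ref{lemma:motif:time} is available. The only points requiring care are (i) confirming that the emptiness test in line~\ref{exact_motif:condition} is genuinely $O(1)$ so that it is dominated by the motif computation, which relies on the adjacency and overlap information already materialized in $\GT$ during the preprocessing of Algorithm~\ref{hyperwedge_counting}, and (ii) using the loose bound $\min\le|e_i|$ to factor $|e_i|$ out of the inner sum, which is exactly what produces the asymmetric $|\nei|^2\cdot|e_i|$ form rather than a tighter but less clean expression.
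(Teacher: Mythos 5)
Your proposal is correct and follows essentially the same route as the paper's proof: bound the number of inner-loop iterations by $O(\sum_{e_i \in E}|\nei|^2)$, invoke Lemma~\ref{lemma:motif:time} with the relaxation $\min(|e_i|,|e_j|,|e_k|)\le|e_i|$ to charge $O(|e_i|)$ per iteration, and sum. Your additional remarks on the $O(1)$ cost of the guard and the map update are fine but not a substantive departure; the only thing the paper adds that you omit is the observation that this bound also dominates the preprocessing cost of Algorithm~\ref{hyperwedge_counting}.
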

\begin{proof}
 Assume all sets and maps are implemented using hash tables.
 The total number of triples $\eijk$ considered in line~\ref{exact_motif:loop2} is $O(\sum_{e_i \in E}|\nei|^2)$.
 By Lemma~\ref{lemma:motif:time}, for such a triple $\eijk$, \blue{the expected time for computing $\hijk$ is $O(|e_i|)$.}
 Thus, the total \blue{expected} time complexity of Algorithm~\ref{exact_motif_counting} is $O(\sum_{e_i \in E}(|e_i|\cdot|\nei|^2))$, which dominates that of the preprocessing step (see Lemma~\ref{lemma:projection:time} and Eq.~\eqref{eq:projection:time}). 
\end{proof}	
\smallsection{Extension of \methodE to \Motif Enumeration:} \\ \change{Since \methodE visits all \motif instances to count them, it is extended to the problem of enumerating every \motif instance (with its corresponding \motif), as described in Algorithm~\ref{alg:enum}.
The time complexity remains the same.}

\begin{algorithm}[t]
	\setstretch{1.25}
	\small
	\caption{\label{alg:enum}\small \methodEN for \Motif Enumeration}
	\SetAlgoLined
	\SetKwInOut{Input}{Input}
	\SetKwInOut{Output}{Output}
		\nonl \hspace{-4mm} \Input{ \ \ (1) input hypergraph: $G=(V,E)$ \\ (2) \blue{line graph}: $\GT=(E,\PT, \omega)$}
	\nonl \hspace{-4mm} \Output{\motif instances and their corresponding \motifs}
	\For{\textbf{each} hyperedge $e_i \in E$ \blue{\textbf{(in parallel)}} \label{enum_alg:loop1}}{
		\For{\textbf{each} unordered hyperedge pair $\{e_j, e_k\} \in$ $\nei \choose 2$\label{enum_alg:loop2}}{
			\If{$e_j \cap e_k = \varnothing$ or $i<\min(j,k)$\label{enum_alg:condition}}{
				write($e_i$, $e_j$, $e_k$, $\hijk$)\label{enum_alg:write}
			}
		}
	}
\end{algorithm}

\IncMargin{0.5em}
\begin{algorithm}[t]	
	\setstretch{1.25}
	\small
	\caption{\small \methodAE: Approximate \Motif Counting Based on Hyperedge Sampling}
	\label{sampling_ver1}
	\SetAlgoLined
	\SetKwInOut{Input}{Input}
	\SetKwInOut{Output}{Output}
	\nonl \hspace{-5mm} \Input{ \ \ (1) input hypergraph: $G=(V,E)$ \\ (2) \blue{line graph}: $\GT=(E,\PT, \omega)$ \\  (3) number of samples: $s$}
	\nonl \hspace{-5mm} \Output{estimated count of each \motif $t$'s instances: $\MBT$}
	$\MBT\leftarrow$ map whose default value is $0$\\ 
	\For{$n\leftarrow 1...s$ \blue{\textbf{(in parallel)}}}{
		$e_i\leftarrow$ sample a uniformly random hyperedge \label{sampling_ver1:sample} \\
		\For{\textbf{each} hyperedge $e_j \in N_{e_i}$}{\label{sampling_ver1:loop1:start}
			\For{\textbf{each} hyperedge $e_k \in (N_{e_i} \cup N_{e_j} \setminus \{e_i,e_j\})$}{
				\If{$e_k \not\in N_{e_i}$ or $j<k$\label{sampling_ver1:condition}}{
					$\MB[\hijk] \mathrel{+}= 1$ 
					\label{sampling_ver1:count}
				}
			}
		}
	}
	\label{sampling_ver1:loop1:end}
	\For{\textbf{each} \motif $t$}{\label{sampling_ver1:scale:start}
		$\MBT \leftarrow \MBT \cdot \tfrac{|E|}{3s}$
	}
	\label{sampling_ver1:scale:end}
	\textbf{return} $\MB$ 
\end{algorithm}
\DecMargin{0.5em}

\subsection{Approximate \Motif Counting}
\label{sec:method:approx}

We present two different versions of \methodA (\method \textbf{A}pproximate), which approximately count the instances of each \motif.
Both versions yield unbiased estimates of the counts by exploring the input hypergraph partially through hyperedge and \hwedge sampling, respectively.



\smallsection{\methodAE: Hyperedge Sampling (Algorithm~\ref{sampling_ver1}):}\label{sec:method:approx:ver1} 

\noindent \methodAE (Algorithm~\ref{sampling_ver1}) is based on hyperedge sampling. 
It repeatedly samples $s$ hyperedges from the hyperedge set $E$ uniformly at random with replacement (line~\ref{sampling_ver1:sample}). 
For each sampled hyperedge $e_i$, the algorithm searches for all \motif instances that contain $e_i$ (lines~\ref{sampling_ver1:loop1:start}-\ref{sampling_ver1:loop1:end}), and to this end, the $1$-hop and $2$-hop neighbors of $e_i$ in the \blue{line graph} $\GT$ are explored. After that, for each such instance $\eijk$ of h-motif $t$, the corresponding count $\MBT$ is incremented (line~\ref{sampling_ver1:count}). 
Lastly, each estimate $\MBT$ is rescaled by multiplying it with $\frac{|E|}{3s}$ (lines~\ref{sampling_ver1:scale:start}-\ref{sampling_ver1:scale:end}), which is the reciprocal of the expected number of times that each of the \motif $t$'s instances is counted.\footnote{Each hyperedge is expected to be sampled $\frac{s}{|E|}$ times, and each \motif instance is counted whenever any of its $3$ hyperedges is sampled.}
This rescaling makes each estimate $\MBT$ unbiased, as formalized in Theorem~\ref{thm:sampling_ver1:accuracy}. 

\begin{theorem}[Bias and Variance of \methodAE]
	\label{thm:sampling_ver1:accuracy}
	For every \motif t, Algorithm~\ref{sampling_ver1} provides an unbiased estimate $\MBT$ of the count $\MT$ of its instances, i.e.,	
	\begin{equation}
	\mathbb{E}[\MBT]=\MT. \label{sampling_ver1:bias}
	\end{equation}
	The variance of the estimate is
	\begin{equation}
	\mathbb{V}\mathrm{ar}[\MBT] = \frac{1}{3s}\cdot \MT\cdot (|E|-3)+\frac{1}{9s}\sum_{l=0}^{2}p_l[t] \cdot(l|E|-9),
	\label{sampling_ver1:variance}
	\end{equation}
	where $p_l[t]$ is the number of pairs of \motif $t$'s instances that share $l$ hyperedges. 
\end{theorem}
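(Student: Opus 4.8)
The plan is to express the unscaled estimate as a double sum of Bernoulli indicators, obtain unbiasedness by linearity of expectation, and then derive the variance through a single-sample second-moment analysis that is made possible by the independence of the sampling rounds.

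Concretely, for each sampling round $n\in\{1,\dots,s\}$ and each instance $I$ of \motif $t$, let $X_n^I$ be the indicator that the hyperedge drawn in round $n$ belongs to $I$. The condition on line~\ref{sampling_ver1:condition} guarantees that a single sampled hyperedge increments the count of every instance containing it exactly once, so the raw count is $\tilde{M}[t]=\sum_{n=1}^{s}\sum_{I}X_n^I$ and $\MBT=\frac{|E|}{3s}\tilde{M}[t]$. Since a uniformly random hyperedge lies in a fixed three-hyperedge instance with probability $3/|E|$, we have $\mathbb{E}[X_n^I]=3/|E|$; summing over the $\MT$ instances and the $s$ rounds and multiplying by $\frac{|E|}{3s}$ immediately yields $\mathbb{E}[\MBT]=\MT$, which is Eq.~\eqref{sampling_ver1:bias}.

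For the variance, I would first observe that the per-round totals $Y_n=\sum_I X_n^I$ are i.i.d. (sampling is with replacement), so $\mathrm{Var}[\tilde{M}[t]]=s\,\mathrm{Var}[Y_1]$ and hence $\mathrm{Var}[\MBT]=\frac{|E|^2}{9s}\mathrm{Var}[Y_1]$; this collapses the problem to a single draw. I would then expand $\mathrm{Var}[Y_1]=\sum_I\mathrm{Var}[X_1^I]+\sum_{I\neq I'}\mathrm{Cov}[X_1^I,X_1^{I'}]$. The diagonal contributes $\MT\left(\frac{3}{|E|}-\frac{9}{|E|^2}\right)$. For an ordered pair $(I,I')$ of distinct instances sharing $l$ hyperedges, the same single draw hits both exactly when the drawn hyperedge is one of the $l$ shared ones, so $\mathbb{E}[X_1^IX_1^{I'}]=l/|E|$ and $\mathrm{Cov}=\frac{l}{|E|}-\frac{9}{|E|^2}$. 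Grouping the off-diagonal terms by $l\in\{0,1,2\}$ (two distinct three-sets cannot share all three hyperedges) and writing $p_l[t]$ for the number of such pairs gives $\mathrm{Var}[Y_1]=\frac{3\MT}{|E|}-\frac{9\MT}{|E|^2}+\frac{1}{|E|}\sum_{l}l\,p_l[t]-\frac{9}{|E|^2}\sum_{l}p_l[t]$. Multiplying by $\frac{|E|^2}{9s}$ and factoring $3\MT(|E|-3)$ and $\sum_l p_l[t](l|E|-9)$ out of the bracket reproduces exactly Eq.~\eqref{sampling_ver1:variance}.

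The only delicate point — hence the main obstacle — is the bookkeeping in the covariance sum: one must classify every pair of instances by the number $l$ of hyperedges they share, recognize that only the shared hyperedges can cause a joint increment within a single draw, and keep the pair-counting convention for $p_l[t]$ consistent so that the coefficients line up. Using the covariance decomposition above sidesteps the need for the auxiliary identity relating $\MT^2$ to $\sum_l p_l[t]$ (which would be required if one instead expanded the raw second moment $\mathbb{E}[Y_1^2]$ directly); everything else reduces to linearity of expectation and the variance of a scaled sum of i.i.d. terms.
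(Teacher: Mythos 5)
Your proof is correct and follows essentially the same route as the paper's: the same indicator decomposition, the same per-instance inclusion probability $3/|E|$, and the same within-round covariance $\tfrac{l}{|E|}-\tfrac{9}{|E|^2}$ for pairs sharing $l$ hyperedges, with cross-round terms vanishing by independence. Your intermediate step of collecting each round into $Y_n=\sum_I X_n^I$ and using i.i.d.-ness is only a cosmetic reorganization of the paper's direct variance expansion of the double sum, so there is nothing substantive to add.
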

\begin{proof}
	See Appendix~\ref{sampling_ver1:proof}.
\end{proof}

The time complexity of \methodAE is given in Theorem~\ref{thm:sampling_ver1:time}.
\begin{theorem}[Complexity of \methodAE] \label{thm:sampling_ver1:time}
	The \blue{expected} time complexity of Algorithm~\ref{sampling_ver1} is $O(\frac{s}{|E|}\sum_{e_i\in E}(|e_i|\cdot|\nei|^2))$.
\end{theorem}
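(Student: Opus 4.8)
The plan is to follow the skeleton of the proof of Theorem~\ref{thm:exact:time}, but with two extra ingredients needed to handle the random sampling and the 2-hop exploration. First I would reduce to a per-hyperedge cost: since each of the $s$ iterations draws a hyperedge uniformly at random from $E$, linearity of expectation gives that the expected running time equals $\tfrac{s}{|E|}\sum_{e_i\in E}c(e_i)$, where $c(e_i)$ denotes the (expected) cost of the double loop (lines~\ref{sampling_ver1:loop1:start}--\ref{sampling_ver1:loop1:end}) run when $e_i$ is the sampled hyperedge; the uniform draw itself is $O(1)$ and negligible. It then suffices to show $\sum_{e_i\in E}c(e_i)=O\big(\sum_{e_i\in E}|e_i|\cdot|\nei|^2\big)$.

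Next I would split $c(e_i)$ into two parts: (i) the \emph{enumeration overhead} of scanning $\nei\cup\nej$ for every $e_j\in\nei$, which costs $O(|\nei|+|\nej|)$ per $e_j$ and hence $O\big(|\nei|^2+\sum_{e_j\in\nei}|\nej|\big)$ in total; and (ii) the cost of computing $\hijk$ once for every \motif instance $\tau$ containing $e_i$, which by Lemma~\ref{lemma:motif:time} is $O(\min(|e_i|,|e_j|,|e_k|))$ per instance. The counting condition in line~\ref{sampling_ver1:condition} guarantees that each instance containing $e_i$ is processed exactly once inside the loop for $e_i$, so part (ii) is a clean sum over instances.

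For the overhead term, the apparent difficulty is that $\sum_{e_j\in\nei}|\nej|$ may exceed $|\nei|^2$ for a single $e_i$; the resolution is to sum globally and exploit the adjacency symmetry $e_j\in\nei\Leftrightarrow e_i\in\nej$, which yields $\sum_{e_i}\sum_{e_j\in\nei}|\nej|=\sum_{e_j}|\nej|^2$. Hence $\sum_{e_i}(\text{overhead})=O(\sum_{e_i}|\nei|^2)\le O(\sum_{e_i}|e_i|\cdot|\nei|^2)$, using $|e_i|\ge 1$. For the $\hijk$ term, each instance $\tau=\{e_a,e_b,e_c\}$ is discovered for at most its three member hyperedges across all choices of the sampled hyperedge, so $\sum_{e_i}\sum_{\tau\ni e_i}\min(\cdots)=O\big(\sum_{\tau}\min_{e\in\tau}|e|\big)$. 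I would then charge each instance $\tau$ to a \emph{center} $e_c\in\tau$ that is adjacent to the other two (one exists by connectivity): the remaining two members lie in $N_{e_c}$, so at most $\binom{|N_{e_c}|}{2}=O(|N_{e_c}|^2)$ instances are charged to $e_c$, while $\min_{e\in\tau}|e|\le|e_c|$ since $e_c\in\tau$. Summing the charges gives $\sum_{\tau}\min_{e\in\tau}|e|=O(\sum_{e_c}|e_c|\cdot|N_{e_c}|^2)$.

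Combining (i) and (ii) gives $\sum_{e_i}c(e_i)=O(\sum_{e_i}|e_i|\cdot|\nei|^2)$ and therefore the claimed expected complexity. I expect the main obstacle to be precisely the 2-hop exploration: unlike \methodE, which only inspects pairs inside $\nei$, \methodAE must also discover the open instances in which the sampled hyperedge is an \emph{endpoint} rather than the center, so the naive per-sample bound $O(|e_i|\cdot|\nei|^2)$ is genuinely false for an individual $e_i$. The crux is to show that these extra endpoint costs rearrange — through the adjacency-symmetry identity for the overhead and the centering/charging argument for the $\hijk$ computations — back into the single global quantity $\sum_{e_i}|e_i|\cdot|\nei|^2$.
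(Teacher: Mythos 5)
Your proposal is correct and follows essentially the same route as the paper's proof: both reduce to a per-sample expected cost and then use the adjacency symmetry $\sum_{e_i\in E}\sum_{e_j\in \nei}|e_j|\cdot|\nej| = \sum_{e_j\in E}|e_j|\cdot|\nej|^2$ (which the paper phrases via $\mathbb{E}[\mathbb{1}(e_j\text{ is adjacent to the sample})]=|\nej|/|E|$) to fold the endpoint/2-hop costs back into $\sum_{e_i\in E}(|e_i|\cdot|\nei|^2)$. The only minor divergence is your charging-to-a-center argument for the $\hijk$ costs; the paper sidesteps it by bounding each candidate $e_k$'s cost by $\min(|e_i|,|e_j|)$ and multiplying by the candidate count $|\nei|+|\nej|$, which gives the same bound with less machinery.
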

\begin{proof}
	Assume all sets and maps are implemented using hash tables.
	For a sample hyperedge $e_i$, computing $\nei \cup \nej$ for every $e_j \in \nei$ takes \blue{$O(\sum_{e_j\in \nei} (|\nei| + |\nej|))$ time in expectation with uniform hash functions if we compute $\nei \cup \nej$ by checking whether each hyperedge $e\in \nej$ is also in $\nei$.} By Lemma~\ref{lemma:motif:time}, computing $\hijk$ for all considered \motif instances takes \blue{$O(\min(|e_i|, |e_j|)\cdot\sum_{e_j\in \nei}$ $|\nei| + |\nej|)$} time \blue{in expectation}.
 Thus, the \blue{expected} time complexity for processing a sample $e_i$ is 
	\begin{multline*}
	O(\min(|e_i|, |e_j|)\cdot \sum\nolimits_{e_j\in \nei} (|\nei|+|\nej|)) \\ =O(|e_i|\cdot|\nei|^2 + \sum\nolimits_{e_j\in \nei}(|e_j|\cdot|\nej|)),
	\end{multline*} 
	which can be written as
	\begin{multline*}
	O(\sum\nolimits_{e_i\in E}(\mathbb{1}(e_i \text{ is sampled})\cdot|e_i|\cdot|\nei|^2) \\ + \sum\nolimits_{e_j\in E}(\mathbb{1}(e_j \text{ is adjacent to the sample})\cdot|e_j|\cdot|\nej|)).
	\end{multline*}
	From this, linearity of expectation, $\mathbb{E}[\mathbb{1}(e_i$ is sampled$)]=\frac{1}{|E|}$, and $\mathbb{E}[\mathbb{1}(e_j$ is adjacent to the sample$)]=\frac{|\nej|}{|E|}$, the \blue{expected} time complexity per sample hyperedge becomes
	$O(\frac{1}{|E|}$ $\sum_{e_i\in E}(|e_i|\cdot|\nei|^2))$.
	Hence, the \blue{expected} total time complexity for processing $s$ samples is \\$O(\frac{s}{|E|}\sum_{e_i\in E}(|e_i|\cdot|\nei|^2))$.\qedhere	
\end{proof}

\blue{We also obtain concentration inequalities of \methodAE (Theorem~\ref{thm:sampling_ver1:concentration}) using Hoeffding's inequality (Lemma~\ref{lem:hoeff}), and the inequalities particularly depend on the number of samples and the number of instances of each \motif.}
\blue{
\begin{lemma}[Hoeffding's Inequality~\cite{hoeffding1994probability}]
    \label{lem:hoeff}
    Let $X_1$, $X_2$, $\dots$, $X_n$ be independent random variables with $a_j\leq X_j\leq b_j$ for every $j\in \{1,2,\dots,n\}$. Consider the sum of random variables $X=X_1+\dots+X_n$, and let \(\mu=\mathbb{E}[X]\). Then for any $\tau>0$, we have 
    \[
    \Pr[|X-\mu|\geq \tau]\leq 2\exp\left(-\frac{2\tau^2}{\sum_{j=1}^n (b_j-a_j)^2}\right).
    \]
\end{lemma}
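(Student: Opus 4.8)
The plan is to establish the bound through the standard Chernoff--Cram\'er exponential-moment method, reducing the two-sided tail to a one-sided estimate via a union bound. First I would fix $\tau>0$ together with an auxiliary free parameter $s>0$ and bound the upper tail by
\[
\Pr[X-\mu\ge\tau]=\Pr\!\left[e^{s(X-\mu)}\ge e^{s\tau}\right]\le e^{-s\tau}\,\mathbb{E}\!\left[e^{s(X-\mu)}\right],
\]
which is Markov's inequality applied to the nonnegative variable $e^{s(X-\mu)}$. Writing $\mu_j:=\mathbb{E}[X_j]$ and invoking independence of the $X_j$, the exponential moment factorizes as $\mathbb{E}[e^{s(X-\mu)}]=\prod_{j=1}^n\mathbb{E}[e^{s(X_j-\mu_j)}]$, so it suffices to control each factor separately.

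The crux is the per-variable estimate, i.e.\ \emph{Hoeffding's lemma}: for any mean-zero random variable $Y$ supported on $[\alpha,\beta]$ and any real $s$, one has $\mathbb{E}[e^{sY}]\le\exp\big(s^2(\beta-\alpha)^2/8\big)$. I would prove this by exploiting convexity of $t\mapsto e^{st}$: for $y\in[\alpha,\beta]$ the chord bound $e^{sy}\le\frac{\beta-y}{\beta-\alpha}e^{s\alpha}+\frac{y-\alpha}{\beta-\alpha}e^{s\beta}$ holds, and taking expectations while using $\mathbb{E}[Y]=0$ kills the linear term, leaving $\mathbb{E}[e^{sY}]\le\frac{\beta}{\beta-\alpha}e^{s\alpha}-\frac{\alpha}{\beta-\alpha}e^{s\beta}$. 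Setting $u:=s(\beta-\alpha)$ and $p:=-\alpha/(\beta-\alpha)$, I would rewrite the right-hand side as $e^{\phi(u)}$ with $\phi(u)=-pu+\log(1-p+pe^{u})$, then verify $\phi(0)=\phi'(0)=0$ and $\phi''(u)\le\tfrac14$ for all $u$ (since $\phi''$ equals the variance of a Bernoulli-type quantity and is thus maximized at $\tfrac14$). A second-order Taylor expansion then gives $\phi(u)\le u^2/8$, which is exactly the claimed bound applied to $Y=X_j-\mu_j$ with $\beta-\alpha=b_j-a_j$.

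Combining the factorization with this lemma yields
\[
\Pr[X-\mu\ge\tau]\le\exp\!\Big(-s\tau+\tfrac{s^2}{8}\sum_{j=1}^n(b_j-a_j)^2\Big).
\]
Writing $C:=\sum_{j=1}^n(b_j-a_j)^2$, I would optimize the exponent $-s\tau+s^2C/8$ over $s$; it is minimized at $s=4\tau/C$, where it equals $-2\tau^2/C$, giving $\Pr[X-\mu\ge\tau]\le\exp(-2\tau^2/C)$. Applying the identical argument to $-X_j$ (whose ranges have the same widths) bounds the lower tail $\Pr[X-\mu\le-\tau]$ by the same quantity, and a final union bound over the two one-sided events supplies the factor of $2$ and the stated inequality. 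The only genuinely delicate step is verifying $\phi''(u)\le\tfrac14$ inside Hoeffding's lemma; once that bound is secured, everything else is a routine Chernoff-style optimization.
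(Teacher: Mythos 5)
Your proof is correct. Note, however, that the paper does not prove this statement at all: it is quoted verbatim as a known result (Hoeffding's inequality, with a citation to Hoeffding's original paper) and used as a black box in the proofs of the two concentration theorems. What you have written is the standard Chernoff--Cram\'er argument --- Markov's inequality on $e^{s(X-\mu)}$, factorization by independence, Hoeffding's lemma $\mathbb{E}[e^{sY}]\le e^{s^2(\beta-\alpha)^2/8}$ via the convexity chord bound and the estimate $\phi''(u)\le\tfrac14$, and optimization at $s=4\tau/C$ --- and every step, including the delicate second-derivative bound $\phi''(u)=t(1-t)\le\tfrac14$ with $t=pe^{u}/(1-p+pe^{u})$, checks out. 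The only (entirely standard) caveat you might flag is the degenerate case $\sum_j(b_j-a_j)^2=0$, where each $X_j$ is constant and the inequality holds trivially.
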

}

\blue{\begin{theorem}[Concentration Bound of \methodAE] \label{thm:sampling_ver1:concentration}
    Let \(d_{\max}[t]=\max_{e\in E[t]}|N_{e}|\) where $E[t]:=\bigcup_{h(e_i, e_j, e_j)=t}$  $\{e_i, e_j, e_k\}$. For any \(\epsilon, \delta>0\), if \(M[t]>0\) and the number of samples \(s>\frac{1}{18\epsilon^2}\left(\frac{|E|d_{\max}[t]^2}{M[t]}\right)^2 \log(\frac{2}{\delta})\), then \(\Pr(|\bar{M}[t]-M[t]|\geq M[t]\cdot\epsilon)\leq \delta\) holds for each \(t \in \{1,2,\dots,26\}\). 
\end{theorem}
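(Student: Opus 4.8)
The plan is to write $\bar{M}[t]$ as a sum of $s$ independent, identically distributed contributions — one per sampled hyperedge — and then apply Hoeffding's inequality (Lemma~\ref{lem:hoeff}). For the $n$-th sample, let $c_n[t]$ denote the number of \motif $t$'s instances counted while \methodAE processes it. Because the inner loops of Algorithm~\ref{sampling_ver1} (together with the guard in line~\ref{sampling_ver1:condition}) count each instance containing the sampled hyperedge exactly once, $c_n[t]$ equals the number of type-$t$ instances that contain the $n$-th sampled hyperedge. Setting $Y_n[t]:=\tfrac{|E|}{3s}c_n[t]$, we have $\bar{M}[t]=\sum_{n=1}^{s}Y_n[t]$, and since hyperedges are drawn uniformly with replacement, the $Y_n[t]$ are i.i.d. By Theorem~\ref{thm:sampling_ver1:accuracy}, $\mathbb{E}[\bar{M}[t]]=M[t]$, so it only remains to control the range of each $Y_n[t]$ and invoke Hoeffding with deviation $\tau=\epsilon\,M[t]$.

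The key step is a combinatorial range bound showing every hyperedge lies in at most $d_{\max}[t]^2$ instances of \motif $t$, which gives $0\le Y_n[t]\le \tfrac{|E|}{3s}d_{\max}[t]^2$. Fix a hyperedge $e$; any type-$t$ instance containing it has the form $\{e,e_a,e_b\}$ with $e,e_a,e_b\in E[t]$, so all three have line-graph degree at most $d_{\max}[t]$. I would split on the adjacency role of $e$ within the instance: if $e$ is adjacent to both partners (the only possibility when $t$ is closed, and the ``center'' case when $t$ is open), then $e_a,e_b\in N_e$, giving at most $\binom{|N_e|}{2}$ such instances; if $e$ is adjacent to exactly one partner $e_a$, then connectivity (the definition that some hyperedge is adjacent to the other two) forces $e_b\in N_{e_a}$, giving at most $|N_e|\cdot d_{\max}[t]$ instances. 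Summing the two cases and using $|N_e|\le d_{\max}[t]$ bounds the number of type-$t$ instances through $e$ by a constant multiple of $d_{\max}[t]^2$.

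Given the per-sample range $[0,\tfrac{|E|}{3s}d_{\max}[t]^2]$, Hoeffding's inequality uses $\sum_{n=1}^{s}(b_n-a_n)^2=s\cdot\bigl(\tfrac{|E|d_{\max}[t]^2}{3s}\bigr)^2=\tfrac{|E|^2 d_{\max}[t]^4}{9s}$, so
\[
\Pr\!\left(|\bar{M}[t]-M[t]|\ge \epsilon\, M[t]\right)\le 2\exp\!\left(-\frac{18\,s\,\epsilon^2\,M[t]^2}{|E|^2\, d_{\max}[t]^4}\right).
\]
Requiring the right-hand side to be at most $\delta$ and solving for $s$ yields precisely $s>\tfrac{1}{18\epsilon^2}\bigl(\tfrac{|E|d_{\max}[t]^2}{M[t]}\bigr)^2\log\tfrac{2}{\delta}$, matching the claimed threshold; the hypothesis $M[t]>0$ guarantees this quantity is well defined and that the relative error $\epsilon\,M[t]$ is meaningful.

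The main obstacle is the range bound of the second paragraph. The i.i.d.\ decomposition and the inversion of Hoeffding's bound are routine once the per-sample contribution is pinned down, but charging every type-$t$ instance through $e$ to one of the two adjacency configurations — and making sure that both configurations are accounted for exactly while keeping the product of degrees controlled by $d_{\max}[t]$ — is where the care lies. In particular, recovering the precise constant $d_{\max}[t]^2$ (rather than a larger constant multiple) that produces the factor $\tfrac{1}{18}$ in the statement is the delicate part of this step; the remainder follows mechanically.
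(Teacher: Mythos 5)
Your proof takes exactly the same route as the paper's: write $\bar{M}[t]$ as a sum of $s$ i.i.d.\ per-sample contributions $\tfrac{|E|}{3s}\sum_{j}X_{ij}[t]$, bound the range of each, and invert Hoeffding's inequality with $\tau=\epsilon M[t]$; the final algebra is identical. The only point of divergence is the range bound, and there your version is the more honest one. The paper's proof simply asserts that each hyperedge lies in at most $d_{\max}[t]^2$ instances of \motif $t$, with no justification. Your two-case count is the natural argument: at most $\binom{|N_e|}{2}$ instances in which $e$ is adjacent to both partners, plus at most $|N_e|\cdot(d_{\max}[t]-1)$ in which it is adjacent to exactly one (connectedness then forces that partner to be adjacent to the third hyperedge, and it already has $e$ as a neighbor). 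This sums to at most $\tfrac{3}{2}\,d_{\max}[t]\,(d_{\max}[t]-1)$, which exceeds $d_{\max}[t]^2$ once $d_{\max}[t]\geq 4$, so the ``delicate part'' you flag is real and I do not see how to recover the constant $d_{\max}[t]^2$ exactly. Carrying your bound through Hoeffding yields the theorem with $\tfrac{1}{8\epsilon^2}$ in place of $\tfrac{1}{18\epsilon^2}$ in the sample-size threshold. Since the paper offers no argument for its tighter range bound, this discrepancy is a (small, constant-factor) issue with the stated theorem rather than a gap in your reasoning; the structure and validity of the concentration argument are unaffected.
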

\begin{proof}
    See Appendix~\ref{concentration_1:proof}.
\end{proof}
}

\IncMargin{0.5em}
\begin{algorithm}[t]
	\setstretch{1.25}
	\small
	\caption{\small \methodAW: Approximate \Motif Counting Based on \Hwedge Sampling}
	\label{sampling_ver2}
	\SetAlgoLined
	\SetKwInOut{Input}{Input}
	\SetKwInOut{Output}{Output}
	\nonl \hspace{-5mm} \Input{ \ \ (1) input hypergraph: $G=(V,E)$ \\ (2) \blue{line graph}: $\GT=(E,\PT, \omega)$ \\  (3) number of samples: $r$}
	\nonl \hspace{-5mm} \Output{estimated count of each \motif $t$'s instances: $\MHT$}
	$\MH\leftarrow$ map whose default value is $0$ \\
	\For{$n\leftarrow 1...r$ \blue{\textbf{(in parallel)}}\label{sampling_ver2:loop1}}{
		$\wedge_{ij}\leftarrow$ a uniformly random \hwedge \label{sampling_ver2:sample} \\ 
		\For{\textbf{each} hyperedge $e_k \in (N_{e_i} \cup N_{e_j} \setminus \{e_i,e_j\})$\label{sampling_ver2:loop2}}{
			$\MH[\hijk] \mathrel{+}= 1$
			\label{sampling_ver2:count}
		}
		\label{sampling_ver2:loop2:end}
	}
	\For{\textbf{each} \motif $t$\label{sampling_ver2:rescale:start}}{
		\eIf(\hfill $\triangleright$ \texttt{\color{blue}open \motifs}){17 $\leq$ t $\leq$ 22}{ 
			$\MHT \leftarrow \MHT \cdot \tfrac{|\wedge|}{2r}$ 
		}
		(\hfill $\triangleright$ \texttt{\color{blue}closed \motifs}){
			$\MHT \leftarrow \MHT \cdot \tfrac{|\wedge|}{3r}$
		}
	}\label{sampling_ver2:rescale:end}
	\textbf{return} $\MH$ 
\end{algorithm}
\DecMargin{0.5em}

\smallsection{\methodAW: \Hwedge Sampling (Algorithm~\ref{sampling_ver2}):}

\noindent \methodAW (Algorithm~\ref{sampling_ver2})  provides a better trade-off between speed and accuracy than \methodAE.
\blue{Differently from} \methodAE, which samples hyperedges, \methodAW is based on \hwedge sampling. 
It selects $r$ \hwedges uniformly at random with replacement (line~\ref{sampling_ver2:sample}), and for each sampled \hwedge $\wedge_{ij} \in \wedge$, it searches for all \motif instances that contain $\wedge_{ij}$ (lines~\ref{sampling_ver2:loop2}-\ref{sampling_ver2:loop2:end}). To this end, the hyperedges that are adjacent to $e_i$ or $e_j$ in the \blue{line graph} $\GT$ are considered (line~\ref{sampling_ver2:loop2}). For each such instance $\eijk$ of \motif $t$, the corresponding estimate $\MHT$ is incremented (line~\ref{sampling_ver2:count}). 
Lastly, each estimate $\MHT$ is rescaled so that it unbiasedly estimates $\MT$, as formalized in Theorem~\ref{thm:sampling_ver2:accuracy}.
To this end, it is multiplied by the reciprocal of the expected number of times that each instance of \motif $t$ is counted.\footnote{Note that each instance of open and closed \motifs contains $2$ and $3$ \hwedges, respectively.
	Each instance of closed \motifs is counted if one of the $3$ \hwedges in it is sampled, while that of open \motifs is counted if one of the $2$ \hwedges in it is sampled.
	Thus, \blue{in expectation}, each instance of open and closed \motifs is counted ${3r}/{|\wedge|}$ and ${2r}/{|\wedge|}$ times, respectively.} 

\begin{theorem}[Bias and Variance of \methodAW]
	\label{thm:sampling_ver2:accuracy}
	For every \motif t, Algorithm~\ref{sampling_ver2} provides an unbiased estimate $\MHT$ of the count $\MT$ of its instances, i.e.,	
	\begin{equation}
	\mathbb{E}[\MHT]=\MT. \label{sampling_ver2:bias}
	\end{equation}
	For every closed \motif $t$, the variance of the estimate is
	\begin{equation}
	\mathbb{V}\mathrm{ar}[\MHT] = \frac{1}{3r}\cdot\MT\cdot(|\wedge|-3)+\frac{1}{9r}\sum_{n=0}^{1}q_n[t]\cdot(n|\wedge|-9), \label{sampling_ver2:variance:closed}
	\end{equation}
	where $q_n[t]$ is the number of pairs of \motif $t$'s instances that share $n$ hyperwedges.
	For every open \motif $t$, the variance is
	\begin{equation}
	\mathbb{V}\mathrm{ar}[\MHT] = \frac{1}{2r} \cdot \MT\cdot(|\wedge|-2)+\frac{1}{4r}\sum_{n=0}^{1}q_n[t]\cdot(n|\wedge|-4). \label{sampling_ver2:variance:open} 
	\end{equation}
\end{theorem}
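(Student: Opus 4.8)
The plan is to follow the same template as the proof of Theorem~\ref{thm:sampling_ver1:accuracy}, replacing hyperedge sampling by \hwedge sampling. Fix a \motif type $t$, let $I_t$ be its set of instances so that $\MT=|I_t|$, and recall the structural fact that an instance of a \emph{closed} \motif contains exactly three \hwedges (every hyperedge-pair overlaps) while an instance of an \emph{open} \motif contains exactly two. Writing $c_t\in\{2,3\}$ for this per-type number of \hwedges, and letting $W_1,\dots,W_r$ be the i.i.d.\ uniformly random \hwedges drawn in line~\ref{sampling_ver2:sample}, I would define for each instance $X\in I_t$ the variable $T_X:=\sum_{n=1}^{r}\mathbb{1}[W_n\text{ is a \hwedge of }X]$. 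Since the inner loop (line~\ref{sampling_ver2:loop2}) enumerates exactly the instances containing the sampled \hwedge, the value accumulated in $\MH[t]$ before rescaling equals $\hat m[t]=\sum_{X\in I_t}T_X$.

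\textbf{Unbiasedness.} By linearity, $\mathbb{E}[T_X]=r\,c_t/|\wedge|$, so $\mathbb{E}[\hat m[t]]=\MT\,r\,c_t/|\wedge|$. Multiplying by the rescaling factor $|\wedge|/(c_tr)$ applied in lines~\ref{sampling_ver2:rescale:start}--\ref{sampling_ver2:rescale:end} ($|\wedge|/(2r)$ for open, $|\wedge|/(3r)$ for closed) immediately gives $\mathbb{E}[\MHT]=\MT$, which is Eq.~\eqref{sampling_ver2:bias}.

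\textbf{Variance.} I would decompose $\mathbb{V}\mathrm{ar}[\hat m[t]]=\sum_{X}\mathbb{V}\mathrm{ar}[T_X]+\sum_{X\neq X'}\mathbb{C}\mathrm{ov}[T_X,T_{X'}]$. As the $r$ samples are independent, cross-sample terms vanish and each covariance collapses to $r\bigl(\Pr[W\in X\cap X']-\Pr[W\in X]\Pr[W\in X']\bigr)$ for a single uniform \hwedge $W$, where $\Pr[W\in X\cap X']=s(X,X')/|\wedge|$ and $s(X,X')$ is the number of \hwedges shared by $X$ and $X'$. The crux --- and the only nonroutine step --- is to show that two \emph{distinct} instances of the same \motif share \emph{at most one} \hwedge: two shared \hwedges are two distinct hyperedge-pairs drawn from the three hyperedges of each instance, and any two such pairs already cover all three hyperedges, forcing the two instances to coincide. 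Hence $s(X,X')\in\{0,1\}$, which explains why the index $n$ in Eqs.~\eqref{sampling_ver2:variance:closed}--\eqref{sampling_ver2:variance:open} ranges only over $\{0,1\}$, in contrast to $\{0,1,2\}$ for hyperedge sampling, where distinct instances may share two hyperedges.

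Grouping the covariance terms by $n=s(X,X')$ and letting $q_n[t]$ count the (ordered) pairs of distinct instances sharing $n$ \hwedges, I obtain $\sum_{X\neq X'}\mathbb{C}\mathrm{ov}=r\sum_{n=0}^{1}q_n[t]\bigl(n/|\wedge|-c_t^2/|\wedge|^2\bigr)$, while the diagonal contributes $\sum_X\mathbb{V}\mathrm{ar}[T_X]=\MT\,r(c_t/|\wedge|)(1-c_t/|\wedge|)$. Multiplying the sum of these two quantities by $(|\wedge|/(c_tr))^2$ and simplifying yields $\frac{1}{c_tr}\MT(|\wedge|-c_t)+\frac{1}{c_t^2r}\sum_{n=0}^{1}q_n[t](n|\wedge|-c_t^2)$; substituting $c_t=3$ gives Eq.~\eqref{sampling_ver2:variance:closed} and $c_t=2$ gives Eq.~\eqref{sampling_ver2:variance:open}. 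The remaining work is purely algebraic bookkeeping, mirroring the hyperedge-sampling analysis.
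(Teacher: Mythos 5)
Your proposal is correct and follows essentially the same route as the paper's proof in Appendix~\ref{sampling_ver2:proof}: indicator variables for (sample, instance) incidence, linearity of expectation for the bias, and a variance--covariance decomposition in which cross-sample covariances vanish by independence and within-sample covariances are grouped by the number of shared hyperwedges before rescaling by $\bigl(|\wedge|/(w[t]r)\bigr)^2$. The only addition is your explicit justification that two distinct instances share at most one hyperwedge (so $n$ ranges over $\{0,1\}$), which the paper leaves implicit but which is a correct and worthwhile remark.
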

\begin{proof}
	See Appendix~\ref{sampling_ver2:proof}. 
\end{proof}

The time complexity of \methodAW is given in Theorem~\ref{thm:sampling_ver2:time}.
\begin{theorem}[Complexity of \methodAW] \label{thm:sampling_ver2:time}
	The \blue{expected} time complexity of Algorithm~\ref{sampling_ver2} is $O(\frac{r}{|\wedge|}\sum_{e_i\in E}(|e_i|\cdot|\nei|^2))$.
\end{theorem}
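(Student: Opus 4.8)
The plan is to follow the structure of the proof of Theorem~\ref{thm:sampling_ver1:time} for \methodAE: bound the expected work spent processing a single sampled \hwedge, average over the uniformly random choice of \hwedge, and then multiply by the number $r$ of independent samples.

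First I would fix a sampled \hwedge $\wij$ and analyze lines~\ref{sampling_ver2:loop2}--\ref{sampling_ver2:loop2:end}, assuming all sets and maps are realized as hash tables. The candidate set $\nei\cup\nej\setminus\{e_i,e_j\}$ can be enumerated in $O(|\nei|+|\nej|)$ expected time (e.g., by scanning $\nej$ and testing membership in $\nei$). For each candidate $e_k$, Lemma~\ref{lemma:motif:time} gives that computing $\hijk$ costs $O(\min(|e_i|,|e_j|,|e_k|))\le O(\min(|e_i|,|e_j|))$ in expectation. Since there are $O(|\nei|+|\nej|)$ candidates, the expected cost of processing $\wij$ is
\[
O\bigl(\min(|e_i|,|e_j|)\cdot(|\nei|+|\nej|)\bigr),
\]
while drawing the \hwedge and the closing rescaling loop over the $26$ \motifs contribute only $O(1)$.

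Next I would take the expectation over the uniformly random \hwedge, which is drawn with probability $1/|\wedge|$; thus the expected cost of one sample is $\frac{1}{|\wedge|}\sum_{\wedge_{ij}\in\wedge}\min(|e_i|,|e_j|)(|\nei|+|\nej|)$. The step I expect to require the most care is re-indexing this sum over \hwedges as a sum over hyperedges to recover the claimed bound. Applying $\min(|e_i|,|e_j|)\le|e_i|$ to the term multiplying $|\nei|$ and $\min(|e_i|,|e_j|)\le|e_j|$ to the term multiplying $|\nej|$ yields the pointwise inequality $\min(|e_i|,|e_j|)(|\nei|+|\nej|)\le |e_i|\,|\nei|+|e_j|\,|\nej|$. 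Summing over all \hwedges and observing that each hyperedge $e_i$ contributes $|e_i|\,|\nei|$ exactly once for every \hwedge incident to it --- that is, $|\nei|$ times --- the double sum collapses to $\sum_{e_i\in E}|e_i|\,|\nei|^2$.

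Combining the pieces, the expected cost per sample is $O\bigl(\frac{1}{|\wedge|}\sum_{e_i\in E}|e_i|\cdot|\nei|^2\bigr)$, and by linearity of expectation over the $r$ independent \hwedge samples the total expected running time is $O\bigl(\frac{r}{|\wedge|}\sum_{e_i\in E}|e_i|\cdot|\nei|^2\bigr)$, as claimed. Every ingredient other than the combinatorial re-indexing is a direct reuse of the hash-table and Lemma~\ref{lemma:motif:time} bounds already established for \methodE and \methodAE, so that re-indexing is really the only nontrivial point.
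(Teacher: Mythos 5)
Your proposal is correct and follows essentially the same route as the paper's proof: bound the per-sample cost by $O(\min(|e_i|,|e_j|)\cdot(|\nei|+|\nej|))=O(|e_i|\cdot|\nei|+|e_j|\cdot|\nej|)$ using Lemma~\ref{lemma:motif:time}, then average over the uniform choice of \hwedge and multiply by $r$. Your explicit re-indexing of the sum over \hwedges is just an unrolled version of the paper's indicator-variable argument with $\mathbb{E}[\mathbb{1}(e_i \text{ is included in the sample})]=|\nei|/|\wedge|$, so the two arguments coincide.
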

\begin{proof}
	Assume all sets and maps are implemented using hash tables.
	For a sample \hwedge $\wij$, \blue{ computing $\nei \cup \nej$ takes $O(|\nei|+|\nej|)$ time in expectation with uniform hash functions if we compute $\nei \cup \nej$ by checking whether each hyperedge $e\in \nej$ is also in $\nei$.} By Lemma~\ref{lemma:motif:time}, computing $\hijk$ for all considered \motif instances takes \blue{$O(\min(|e_i|, |e_j|)\cdot |\nei| + |\nej|)$} time \blue{in expectation}.
	Thus, the \blue{expected} time complexity for processing a sample $\wij$ is 
	$O(\min(|e_i|, |e_j|)\cdot (|\nei|+|\nej|))=O(|e_i|\cdot|\nei| + |e_j|\cdot|\nej|),$
	which can be written as
	\begin{multline*}
	O(\sum\nolimits_{e_i\in E}(\mathbb{1}(e_i \text{ is included in the sample})\cdot|e_i|\cdot|\nei|) \\ + \sum\nolimits_{e_j\in E}(\mathbb{1}(e_j \text{ is included in the sample})\cdot|e_j|\cdot|\nej|)).
	\end{multline*}
	From this, linearity of expectation, $\mathbb{E}[\mathbb{1}(e_i$ is included in the sample$)]=\frac{|\nei|}{|\wedge|}$, and $\mathbb{E}[\mathbb{1}(e_j$ is included in the sample$)]=\frac{|\nej|}{|\wedge|}$, the \blue{expected} time complexity per sample \hwedge is
	$O(\frac{1}{|\wedge|}\sum_{e_i\in E}(|e_i|\cdot|\nei|^2))$.
	Hence, the total time complexity for processing $r$ samples is $O(\frac{r}{|\wedge|}\sum_{e_i\in E}(|e_i|\cdot|\nei|^2))$.\qedhere
\end{proof}

\blue{Additionally, we derive concentration inequalities for \methodAW (Theorem~\ref{thm:sampling_ver2:concentration}), following a similar approach to that of Theorem~\ref{thm:sampling_ver1:concentration}, but with different minimum sample sizes for guaranteeing the same bound.}

\blue{
\begin{theorem}[Concentration Bound of \methodAW]\label{thm:sampling_ver2:concentration}
    Let \(d_{\max}[t]=\max_{e\in E_{[t]}}|N_{e}|\) where $E[t]:=\bigcup_{h(e_i, e_j, e_j)=t}$ $\{e_i, e_j, e_k\}$. For each $t\in \{1,\dots, 26\}$ such that $M[t]>0$ and for any \(\epsilon, \delta>0\), a sufficient condition of being \(\Pr(|\hat{M}[t]-M[t]|\geq M[t]\cdot\epsilon)\leq \delta\) is $r>\frac{1}{18\epsilon^2}\left(\frac{|\Lambda|d_{\max}[t]}{M[t]}\right)^2$ $\log(\frac{2}{\delta})$, if \motif $t$ is closed,
    and $r>\frac{1}{8\epsilon^2}\left(\frac{|\Lambda|d_{\max}[t]}{M[t]}\right)^2$ $\log(\frac{2}{\delta})$, if motif $t$ is open.
\end{theorem}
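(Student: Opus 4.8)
The plan is to mirror the argument behind Theorem~\ref{thm:sampling_ver1:concentration}: express the rescaled estimate $\hat{M}[t]$ as a sum of $r$ independent, identically distributed, bounded contributions, and then invoke Hoeffding's inequality (Lemma~\ref{lem:hoeff}). First I would fix an h-motif $t$ with $M[t]>0$ and, for the $n$-th sampled hyperwedge (drawn uniformly at random with replacement, so the samples are independent), let $c_n[t]$ denote the number of instances of $t$ that contain this hyperwedge. Writing $\alpha_t$ for the rescaling factor used in Algorithm~\ref{sampling_ver2} — namely $\alpha_t=|\Lambda|/(3r)$ for closed $t$ and $\alpha_t=|\Lambda|/(2r)$ for open $t$ — we have $\hat{M}[t]=\sum_{n=1}^{r}\alpha_t\,c_n[t]$, a sum of i.i.d. terms. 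By the unbiasedness established in Theorem~\ref{thm:sampling_ver2:accuracy}, $\mathbb{E}[\hat{M}[t]]=M[t]$, so Hoeffding's inequality will be applied with mean $\mu=M[t]$ and deviation $\tau=\epsilon\,M[t]$.

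The crux is to bound the range of each term $\alpha_t c_n[t]$; since $c_n[t]\ge 0$, it suffices to upper bound $c_n[t]$. Whenever a sampled hyperwedge $\wedge_{ij}$ contributes to $t$, both $e_i$ and $e_j$ lie in $E[t]$, so $|N_{e_i}|,|N_{e_j}|\le d_{\max}[t]$. When $t$ is closed, the third hyperedge $e_k$ must be adjacent to both $e_i$ and $e_j$, i.e. $e_k\in N_{e_i}\cap N_{e_j}$, giving $c_n[t]\le |N_{e_i}\cap N_{e_j}|\le\min(|N_{e_i}|,|N_{e_j}|)\le d_{\max}[t]$. When $t$ is open, $e_k$ is adjacent to exactly one of $e_i,e_j$ (adjacency to both would close the instance, adjacency to neither would disconnect it), and this structural restriction is what I would exploit to bound $c_n[t]$ by $d_{\max}[t]$ as well. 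Consequently each contribution lies in $[0,\alpha_t d_{\max}[t]]$, so $\sum_{n=1}^{r}(b_n-a_n)^2=r\,\alpha_t^2 d_{\max}[t]^2$, which equals $|\Lambda|^2 d_{\max}[t]^2/(9r)$ for closed $t$ and $|\Lambda|^2 d_{\max}[t]^2/(4r)$ for open $t$.

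Plugging these into Lemma~\ref{lem:hoeff} with $\tau=\epsilon M[t]$ yields, for closed $t$, a bound of $2\exp(-18\,r\,\epsilon^2 M[t]^2/(|\Lambda|^2 d_{\max}[t]^2))$ and, for open $t$, $2\exp(-8\,r\,\epsilon^2 M[t]^2/(|\Lambda|^2 d_{\max}[t]^2))$. Finally I would force each bound to be at most $\delta$ and solve for $r$, which gives exactly the thresholds $r>\tfrac{1}{18\epsilon^2}(|\Lambda| d_{\max}[t]/M[t])^2\log(2/\delta)$ (closed) and $r>\tfrac{1}{8\epsilon^2}(|\Lambda| d_{\max}[t]/M[t])^2\log(2/\delta)$ (open). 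The difference from Theorem~\ref{thm:sampling_ver1:concentration} — a factor of $d_{\max}[t]$ inside rather than $d_{\max}[t]^2$ — arises because a single sampled hyperwedge already fixes two of the three hyperedges, so the per-sample count scales with one neighborhood instead of two.

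The main obstacle I anticipate is precisely the open-case range bound. Unlike the closed case, where the intersection $N_{e_i}\cap N_{e_j}$ cleanly caps $c_n[t]$ by $d_{\max}[t]$, the admissible third hyperedges for an open instance a priori lie on either side, $N_{e_i}\setminus N_{e_j}$ or $N_{e_j}\setminus N_{e_i}$, and a naive union over both sides only delivers $2d_{\max}[t]$, hence a weaker constant. Securing the stated $\tfrac{1}{8\epsilon^2}$ factor therefore hinges on arguing that, for a fixed open pattern $t$, the third hyperedges are effectively confined to a single neighborhood of size at most $d_{\max}[t]$; making this confinement rigorous is the step that demands the most care. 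The remaining ingredients — independence of the samples, the unbiased mean supplied by Theorem~\ref{thm:sampling_ver2:accuracy}, and the algebra of inverting the Hoeffding bound — are routine.
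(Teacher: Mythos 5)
Your proposal follows exactly the route of the paper's own proof: write $\hat{M}[t]=\sum_{n=1}^{r}\alpha_t\,c_n[t]$ as a sum of $r$ independent bounded terms (the paper's $\tilde{Y}_i[t]$), identify the mean via the unbiasedness of \methodAW, bound the range of each term through $d_{\max}[t]$, and invert Hoeffding's inequality; your algebra for the closed case is complete and reproduces the stated $\frac{1}{18\epsilon^2}$ threshold.

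The one step you flag as unresolved --- establishing $c_n[t]\le d_{\max}[t]$ rather than $2d_{\max}[t]$ for open \motifs --- is also the one step the paper does not argue: its proof simply asserts $0\le\tilde{Y}_i[t]\le\frac{|\Lambda|d_{\max}[t]}{w[t]r}$ uniformly in $t$, which for open $t$ is precisely the claim you could not verify. Your skepticism is warranted, and the confinement you hope for fails in general: for an open instance containing the sampled \hwedge $\wedge_{ij}$, the third hyperedge lies in $(N_{e_i}\setminus N_{e_j})\cup(N_{e_j}\setminus N_{e_i})$, and both sides can be populated by instances of the \emph{same} open pattern. For example, with $e_i=\{1,2\}$, $e_j=\{2,3\}$, both $e_k=\{3,4\}$ and $e_{k'}=\{1,5\}$ complete $\wedge_{ij}$ to instances of the same ``path'' \motif, one drawn from each side. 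Hence the honest per-sample bound is $|N_{e_i}|+|N_{e_j}|\le 2d_{\max}[t]$, and substituting it into Hoeffding yields the sufficient condition $r>\frac{1}{2\epsilon^2}\left(\frac{|\Lambda|d_{\max}[t]}{M[t]}\right)^2\log(\frac{2}{\delta})$ for open $t$, a factor of $4$ weaker than the stated $\frac{1}{8\epsilon^2}$. In short: your approach matches the paper's, your closed case is done, and the open-case constant as stated requires either the unproved $d_{\max}[t]$ per-sample bound or a relaxation of the constant to $\frac{1}{2\epsilon^2}$.
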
	
\begin{proof}
   See Appendix~\ref{concentration_2:proof}.
\end{proof}
}

\smallsection{Comparison of \methodAE and \methodAW:} \label{compare:sampling}
Empirically, \methodAW provides a better trade-off between speed and accuracy than \methodAE, as presented in Section~\ref{sec:exp:algo}. We provide an analysis that supports this observation. 
Assume that the numbers of samples in both algorithms are set so that $\alpha=\frac{s}{|E|}=\frac{r}{|\wedge|}$.
For each \motif $t$,
since both estimates $\MBT$ of \methodAE and $\MHT$ of \methodAW are unbiased (see Eqs.~\eqref{sampling_ver1:bias} and \eqref{sampling_ver2:bias}), we only need to compare their variances.
By Eq.~\eqref{sampling_ver1:variance}, $\mathbb{V}\mathrm{ar}[\MBT]=O(\frac{\MT+p_1[t]+p_2[t]}{\alpha})$, and by Eq.~\eqref{sampling_ver2:variance:closed} and Eq.~\eqref{sampling_ver2:variance:open}, $\mathbb{V}\mathrm{ar}[\MHT]=O(\frac{\MT+q_1[t]}{\alpha})$.
By definition, $q_1[t]\leq p_2[t]$, and thus $\frac{\MT+q_1[t]}{\alpha} \leq \frac{\MT+p_1[t]+p_2[t]}{\alpha}$. 
Moreover, in real-world hypergraphs, $p_1[t]$ tends to be several orders of magnitude larger than the other terms (i.e., $p_2[t]$, $q_1[t]$, and $\MT$), and thus $\MBT$ of \methodAE tends to have  larger variance (and thus larger estimation error) than $\MHT$ of \methodAW.
Despite this fact, as shown in Theorems~\ref{thm:sampling_ver1:time} and \ref{thm:sampling_ver2:time}, \methodAE and \methodAW have the same time complexity, $O(\alpha\cdot \sum_{e_i\in E}(|e_i|\cdot|\nei|^2))$.
Hence, \methodAW is expected to give a better trade-off between speed and accuracy than \methodAE, as confirmed empirically in Section~\ref{sec:exp:algo}.

\blue{
    Regarding the concentration lower bounds of the number of samples (Theorems~\ref{thm:sampling_ver1:concentration} and \ref{thm:sampling_ver2:concentration}), the ratio of the bound in \methodAE to that \methodAW is $\frac{|E|\cdot d_{\max} [t]}{|\wedge|}$ for each closed \motif $t$,  and $\frac{4|E|\cdot d_{\max} [t]}{9|\wedge|}$ for each open \motif $t$.} 
    \blue{In real-world datasets (refer to Table~\ref{dataset_table} in Section~\ref{sec:exp:settings}), the maximum value (across all \motifs) of $\frac{|E|\cdot d_{\max} [t]}{|\wedge|}$ varies from $5$  (in the \textit{contact-primary}) to $500$ (in the \textit{coauth-history}). That is, \methodAW requires fewer samples than \methodAE for the same bound, thereby supporting the empirical superiority of \methodAW over \methodAE.
    However, it is important to note a limitation in this comparison of bounds. Our concentration bounds may not be optimal since they are based on worst-case scenarios, relying on the term $d_{\max}[t]$.
}

\subsection{Parallel and On-the-fly Implementations}\label{sec:method:par_fly}

We discuss parallelization of \method and then on-the-fly computation of \blue{line graph}s.

\smallsection{Parallelization:}
All versions of \method and \blue{line-graph construction} are easily parallelized as highlighted in Algorithms~\ref{hyperwedge_counting}-\ref{sampling_ver2}.
Specifically, we can parallelize \blue{line-graph construction} and \methodE by letting multiple threads process different hyperedges (in line~\ref{hyperwedge_counting:loop1} of Algorithm~\ref{hyperwedge_counting} and line~\ref{exact_motif:loop1} of Algorithm~\ref{exact_motif_counting}) independently in parallel. 
Similarly, we can parallelize \methodAE and \methodAW by letting multiple threads sample and process different hyperedges (in line~\ref{sampling_ver1:sample} of Algorithm~\ref{sampling_ver1}) and \hwedges (in line~\ref{sampling_ver2:sample} of Algorithm~\ref{sampling_ver2}) independently in parallel. 
The estimated counts of the same \motif obtained by different threads are summed up only once before they are returned as outputs.
We present some empirical results in Section~\ref{sec:exp:algo}.

\smallsection{\Motif Counting without \blue{Line Graph}s:} \label{method:on_the_fly}
If the input hypergraph $G$ is large, computing its \blue{line graph} $\GT$ (Algorithm~\ref{hyperwedge_counting}) is time and space-consuming.
Specifically, building $\GT$ takes $O(\sum_{\wij\in\PT}|e_i\cap e_j|)$ time (see Lemma~\ref{lemma:projection:time}) and requires $O(|E|+|\wedge|)$ space, which often exceeds $O(\sum_{e_i\in E}$ $|e_i|)$ space required for storing $G$.
Thus, instead of precomputing $\GT$ entirely, we can build it incrementally while memoizing partial results within a given memory budget.
\blue{We apply this idea to \methodAW, resulting in the following two versions of the algorithm:
\begin{itemize}
    \item \textbf{\methodAWrandom}: This is a strightfoward application of the memoization idea to \methodAW (Algorithm~\ref{sampling_ver2}). We compute the neighborhood of a hyperedge $e_{i}\in E$ in $\GT$ (i.e., $\{(k,\omega(\wik)):k\in\nei\}$) only if (1) a \hwedge with $e_{i}$ (e.g., $\wij$) is sampled (in line~\ref{sampling_ver2:sample}) and (2) its neighborhood is not memoized.
    The computed neighborhood is memoized with priority based on the degree $|N_{e_{i}}|$ of $e_{i}$ in $\GT$. That is,
    if the memoization budget is exceeded, we evict the memoized neighborhood of hyperedges in decreasing order of their degrees in $\GT$ until the budget is met.
    This is because the neighborhood of high-degree hyperedges is frequently retrieved, despite a higher computational cost.
    According to our preliminary studies, this memoization scheme based on degree demonstrates faster speeds compared to memoizing the neighborhood of random hyperedges or least recently used (LRU) hyperedges.
    \item \textbf{\methodAWgreedy}: This is an improved version that considers the order in which hyperwedges are processed.
    It first collects a list $W$ of sampled hyperwedges and groups the hyperwedges consisting of the same hyperedge. Between the two hyperedges forming a hyperwedge, the one with the larger neighborhood is used to group the hyperwedge.
    The hyperwedges are processed group by group, and thus hyperwedges consisting of the same hyperedges are more likely to be processed consecutively, thereby increasing the chance of utilizing memoized neighborhoods before they are evicted.
    As a result, \methodAWgreedy is empirically faster than \methodAWrandom, as shown in Section~\ref{sec:exp:algo}.
\end{itemize}
For details of \methodAWrandom and \methodAWgreedy, refer to Algorithms~\ref{alg_memdeg} and \ref{alg_memdeg_adv}, respectively, in Appendix~\ref{appendix:on-the-fly}.
}

\section{\kijung{Extensions of \Motifs}}
\label{sec:extension}

\black{In this section, we explore two distinct approaches to extending the concept of \motifs.
We especially define ternary hypergraph motifs, which demonstrate consistent advantages for a variety of real-world applications.}

\subsection{\black{Extensions Beyond Binary}}
\label{sec:concept:tmotif}

\black{As defined in Section~\ref{sec:concept:motif}, \motifs describe overlapping patterns of three hyperedges solely based on the emptiness of the seven subsets derived from their intersections.
That is, for each subset, \motifs classify it into binary states, non-empty or empty, which corresponds to being colored or uncolored in Figure~\ref{motif_three_hyperedges}.
This coarse classification inevitably results in the loss of detailed information within the intersections.
Below, we introduce ternary hypergraph motifs, which mitigate this information loss by assigning ternary states to each subset based on its cardinality.
}


\smallsection{\black{Definition of \Tmotifs:}}
\black{Ternary hypergraph motifs (or \tmotifs in short) are the extension of \motifs, so as \motifs are, they are \blue{designed} for describing the overlapping pattern of three connected hyperedges.
Given an instance (i.e., three connected hyperedges) $\eijk$, \tmotifs describe its overlapping pattern by the cardinality of the following seven sets: (1) $e_i\setminus e_j \setminus e_k$, (2) $e_j\setminus e_k \setminus e_i$, (3) $e_k\setminus e_i \setminus e_j$, (4) $e_i\cap e_j \setminus e_k$, (5) $e_j\cap e_k \setminus e_i$, (6) $e_k\cap e_i \setminus e_j$, and (7) $e_i\cap e_j \cap e_k$.
\blue{Differently from \motifs, which consider two states for each subset (empty or non-empty), \tmotifs takes into account three (denoted by the `3' in \tmotifs.) states for each subset.}
Specifically, for each of these seven sets, we classify it into three states based on its cardinality $c$ as follows: (1) $c=0$, (2) $0< c \leq \theta$, (3) $c>\theta$, where $\theta\geq 1$.
Throughout this paper, we set the value of $\theta$ to $1$, and thus each of these seven sets is classified into one of three categories: empty, singleton, and multiple.
\blue{Equivalently, if we leave $\min(\lceil c/\theta \rceil,2)$ node
in each of the above subsets with cardinality $c$, \tmotifs can be defined based on the isomorphism between sub-hypergraphs formed by three connected hyperedges.}
Refer to Appendix~\ref{appendix:variants} for a discussion on \tmotifs with different values of $\theta$ and additional variants of \tmotifs.
Out of the $3^7$ possible patterns, $431$ \tmotifs remain if we exclude symmetric ones, \blue{those that cannot be obtained from distinct hyperedges}, and those that cannot be derived from connected hyperedges. Visual representations of \tmotifs 1-6, which are the six \tmotifs subdivided from \motif 1, are provided in Figure~\ref{fig:tmotif_example}. For a complete list of all 431 \tmotifs, refer to Appendix~\ref{appendix:variants}. 
}

\begin{figure}[t]
	\centering
	\includegraphics[width=\columnwidth]{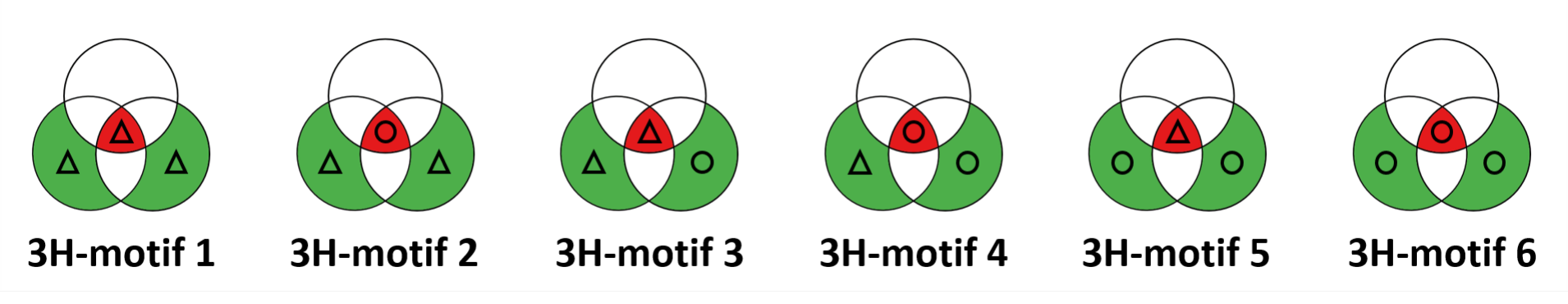} 
	\caption{\label{fig:tmotif_example} \black{The six \tmotifs subdivided from \motif 1. In each Venn diagram, uncolored regions are empty without containing any nodes. Colored regions with a triangle contain more than 0 and at most $\theta$ nodes, while colored regions with a circle contain more than $\theta$ nodes. Throughout this paper, we set $\theta$ to $1$.}
    }
\end{figure}

\smallsection{\black{Characterization using \Tmotifs:}}
\black{\Tmotifs can naturally substitute \motifs for characterizing hypergraphs, hyperedges, and nodes. By using \tmotifs, characteristic profiles (CPs), hyperedge profiles (HPs), and node profiles (NPs) become $431$-element vectors.}

\smallsection{\black{Counting \Tmotifs' Instances:}}
\black{To count instances of \tmotifs using the \method, the only necessary change is to replace $\hijk$ with $\thijk$, which provides the corresponding \tmotif for a given instance $\eijk$.
As formalized in Lemma~\ref{lemma:3H-motif:time}, $\thijk$ can be computed with the same time complexity as $\hijk$, and thus replacing $\hijk$ with $\thijk$ does not change the time complexity of all versions of \method.}


\begin{lemma}[\black{Complexity of Computing  $\thijk$}]\label{lemma:3H-motif:time}
	\black{Given the input hypergraph $G=(V,E)$ and its \blue{line} graph $\GT=(E,\wedge,\omega)$, for each 3h-motif instance $\eijk$, \blue{the expected time complexity for computing $\thijk$ is $O(\min(|e_i|,|e_j|,|e_k|))$.}} 
\end{lemma}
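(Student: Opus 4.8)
The plan is to build directly on the proof of Lemma~\ref{lemma:motif:time}, since the only difference between $\hijk$ and $\thijk$ lies in how the seven subsets derived from $\eijk$ are \emph{classified}, not in how the underlying quantities are \emph{obtained}. First I would recall that $\thijk$, exactly like $\hijk$, is a function of the same seven sets: $e_i\setminus e_j\setminus e_k$, $e_j\setminus e_k\setminus e_i$, $e_k\setminus e_i\setminus e_j$, $e_i\cap e_j\setminus e_k$, $e_j\cap e_k\setminus e_i$, $e_k\cap e_i\setminus e_j$, and $e_i\cap e_j\cap e_k$. Whereas $\hijk$ depends only on the emptiness of each such set, $\thijk$ depends on a ternary label assigned to each set according to whether its cardinality $c$ satisfies $c=0$, $0<c\le\theta$, or $c>\theta$.

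Next I would invoke the cardinality computation already established in the proof of Lemma~\ref{lemma:motif:time}. There, assuming without loss of generality that $|e_i|=\min(|e_i|,|e_j|,|e_k|)$ and that all sets and maps are stored in hash tables, the three hyperedge sizes and the three pairwise intersection sizes $|e_i\cap e_j|$, $|e_j\cap e_k|$, $|e_k\cap e_i|$ are retrieved in $O(1)$ expected time (the former from $G$, the latter from $\GT$), the triple intersection size $|e_i\cap e_j\cap e_k|$ is computed in $O(|e_i|)$ expected time by scanning the smallest hyperedge and testing membership in the other two, and the remaining six subset cardinalities follow by inclusion--exclusion in $O(1)$ time. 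Thus all seven \emph{exact} cardinalities are available after $O(\min(|e_i|,|e_j|,|e_k|))$ expected work, matching the cost already incurred for $\hijk$.

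The key step is then to observe that the ternary labeling is essentially free given these cardinalities: for each of the seven subsets, determining which of the three states $\{\,c=0,\;0<c\le\theta,\;c>\theta\,\}$ it falls into requires only comparisons against the fixed constants $0$ and $\theta$, i.e.\ $O(1)$ time per subset and $O(1)$ time overall. Assembling the resulting seven-way ternary vector and mapping it to the corresponding \tmotif likewise takes $O(1)$ time. Hence the extra bookkeeping that distinguishes $\thijk$ from $\hijk$ contributes only an additive constant, and the total expected time remains $O(\min(|e_i|,|e_j|,|e_k|))$.

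I do not expect a genuine obstacle here; the entire content of the argument is the recognition that the proof of Lemma~\ref{lemma:motif:time} already produces the exact cardinalities of all seven subsets and merely discards them down to emptiness bits, so the finer ternary classification costs nothing asymptotically. The one point worth stating carefully is that $\theta$ is treated as a fixed constant (set to $1$ throughout the paper), so the threshold comparisons remain $O(1)$; under this convention the running time is dominated by the triple-intersection scan over the smallest hyperedge, giving the claimed bound $O(\min(|e_i|,|e_j|,|e_k|))$ identical to that of $\hijk$.
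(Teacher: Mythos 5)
Your proposal is correct and follows essentially the same route as the paper's proof: both reuse the cardinality computation from Lemma~\ref{lemma:motif:time} to obtain all seven subset sizes in $O(\min(|e_i|,|e_j|,|e_k|))$ expected time, and then observe that classifying each cardinality into one of the three states via comparisons with $0$ and $\theta$ costs only $O(1)$. Your explicit remark that $\theta$ is treated as a fixed constant is a harmless clarification the paper leaves implicit.
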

\begin{proof}
\black{Following the proof of Lemma~\ref{lemma:motif:time}, we can show that it takes $O(\min(|e_i|,|e_j|,|e_k|))$ time \blue{in expectation} to obtain the cardinalities of all the following sets: (1) $e_i\setminus e_j \setminus e_k$, (2) $e_j\setminus e_k \setminus e_i$, (3) $e_k\setminus e_i \setminus e_j$, (4) $e_i\cap e_j \setminus e_k$, (5) $e_j\cap e_k \setminus e_i$, (6) $e_k\cap e_i \setminus e_j$, and (7) $e_i\cap e_j \cap e_k$.
Based on the cardinality $c$ of each of the seven sets, it takes $O(1)$ time to classify it into (1) $c=0$, (2) $0< c \leq \theta$, and (3) $c>\theta$.
Classifying all seven sets, which takes $O(1)$ time, determines a specific \tmotif.
Thus, the \blue{expected} time complexity of computing $\thijk$ is $O(\min(|e_i|,|e_j|,|e_k|))$, which is same as that of computing $\hijk$.}
\end{proof}	

\smallsection{\black{Extensions Beyond Ternary:}}
\black{The concept of \tmotifs can be generalized to \kmotifs for any $k>3$ by classifying each of the seven considered sets into $k$ states.
For instance, for $k=4$, each set can be classified into four states based on its cardinality $c$ as follows: (1)  $c=0$, (2) $0< c \leq \theta_1$, (3) $\theta_1<c \leq \theta_2$, (4) $c>\theta_2$, where $\theta_2>\theta_1\geq 1$.
The number of \kmotifs increases rapidly with respect to $k$. Specifically, the number becomes $3,076$ for $k=4$, $14,190$ for $k=5$, and  $49,750$ for $k=6$, as derived in Appendix~\ref{appendix:kh-motif:generalized}.
In this study, we concentrate on \motifs and \tmotifs (i.e., $k=2$ and $k=3$), which are already capable of characterizing local structures in real-world hypergraphs, as evidenced by the empirical results in Section~\ref{sec:exp}.}

\subsection{Extensions Beyond Three Hyperedges}

The concept of \motifs is easily generalized to four or more hyperedges. For example, a \motif for four hyperedges can be defined as a binary vector of size $15$ indicating the emptiness of each region in the Venn diagram for four sets.
After excluding disconnected ones, symmetric ones, and \blue{those that cannot be obtained from distinct hyperedges}, there remain $1,853$ and $18,656,322$ \motifs for four and five hyperedges, respectively, as discussed in detail in Appendix~\ref{appendix:hmotif:generalized}.
This work focuses on the \motifs for three hyperedges, which are already capable of characterizing local structures of real-world hypergraphs, as shown empirically in Section~\ref{sec:exp}.

\section{Experiments}
\label{sec:exp}

\begin{table}[t]
	\begin{center}
		\caption{\label{dataset_table}Statistics of 11 real hypergraphs from 5 domains.} 
		\scalebox{0.64}{
			\begin{tabular}{l|c|c|c|c|c|c}
				\toprule
				\textbf{Dataset} & \textbf{$|V|$} & \textbf{$|E|$} & \textbf{$|\bar{e}|$*} & \textbf{$|\wedge|$} & \textbf{$|\bar{N_{e}}|$**} & \textbf{\# \Motifs}\\
				\midrule
				coauth-DBLP & 1,924,991 & 2,466,792 & \change{25} & 125M & 3,016 & 26.3B $\pm$ 18M\\
				coauth-geology& 1,256,385 & 1,203,895 & \change{25} & 37.6M & 1,935 & 6B $\pm$ 4.8M\\
				coauth-history& 1,014,734 & 895,439 & \change{25} & 1.7M & 855 & 83.2M\\
				\midrule
				contact-primary  & 242 & 12,704 & \change{5} & 2.2M & 916 & 617M\\
				contact-high & 327 & 7,818 & \change{5} & 593K & 439 & 69.7M\\
				\midrule
				email-Enron & 143 & 1,512 & \change{18} & 87.8K & 590 & 9.6M\\
				email-EU & 998 & 25,027 & \change{25} & 8.3M & 6,152 & 7B\\
				\midrule
				tags-ubuntu & 3,029 & 147,222 & \change{5} & 564M & 40,836 & 4.3T $\pm$ 1.5B\\
				tags-math & 1,629 & 170,476 & \change{5} & 913M & 49,559 & 9.2T $\pm$ 3.2B\\
				\midrule
				threads-ubuntu & 125,602 & 166,999 & \change{14} & 21.6M & 5,968 & 11.4B\\
				threads-math & 176,445 & 595,749 & \change{21} & 647M & 39,019 & 2.2T $\pm$ 883M\\
				\bottomrule
				\multicolumn{7}{l}{$*$ The maximum size of a hyperedge. $**$ \blue{The maximum degree in the line graph.}}
		\end{tabular}}
	\end{center}
\end{table}

\begin{table*}
	\centering
	\caption{\label{absolute_table} 
		Real-world and random hypergraphs have distinct distributions of \motif instances. We report the absolute counts of each \motif's instances in a real-world hypergraph from each domain and its corresponding random hypergraph. 
		To compare the counts in both hypergraphs, we measure the relative count (RC) of each \motif. 
		We also rank the counts, and we report each \motif's rank difference (RD) in the real-world and corresponding random hypergraphs.}
	\includegraphics[width=1.0\textwidth]{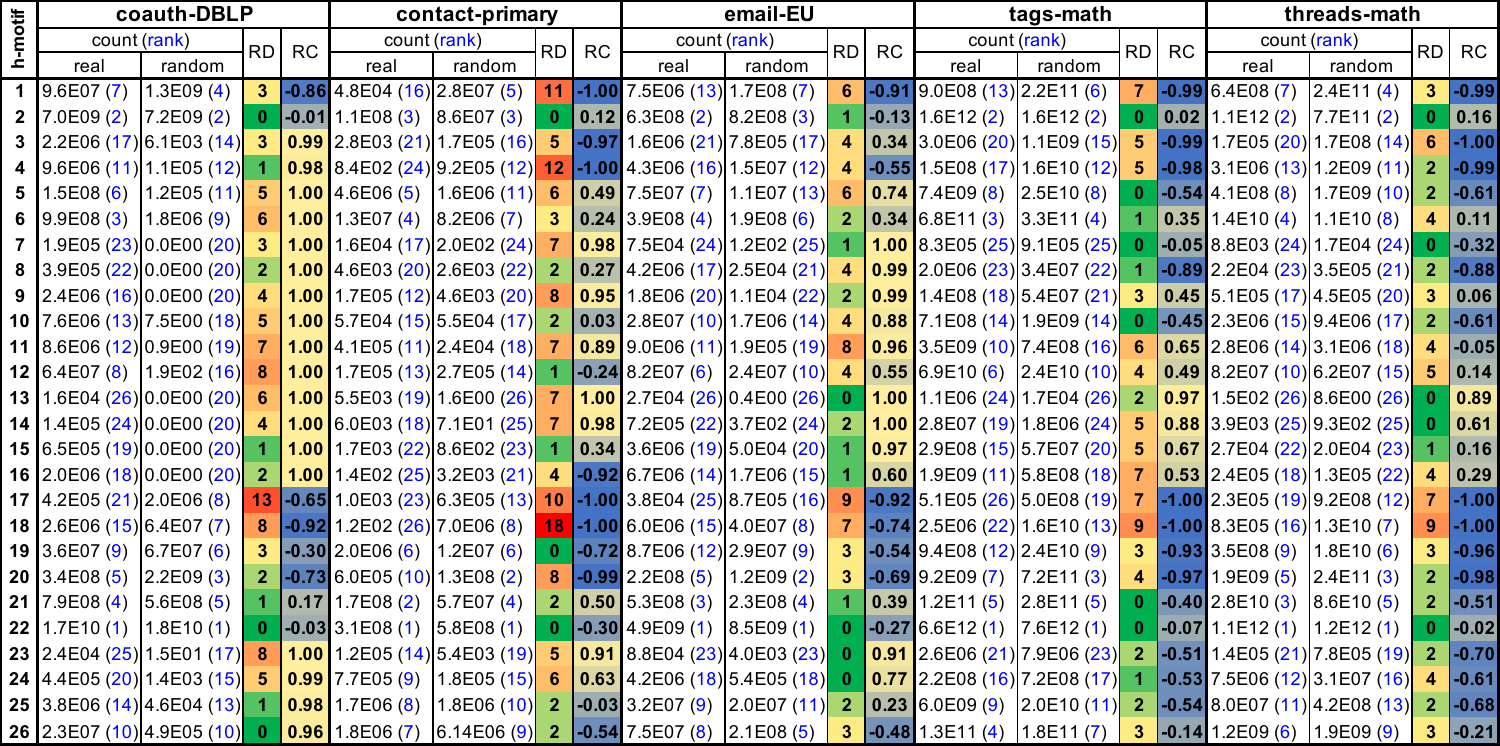}
\end{table*}

%

In this section, we review the experiments that we design for answering the following questions:
\begin{itemize}
	\item {\bf Q1. Comparison with Random:} Does counting instances of different \motifs reveal structural design principles of real-world hypergraphs distinguished from those of random hypergraphs? 
	\item {\bf Q2. Comparison across Domains:} Do characteristic profiles capture local structural patterns of hypergraphs unique to each domain?
 	\item \black{{\bf Q3. Comparison of Characterization Powers:} How well do \motifs, \tmotifs, and network motifs capture the structural properties of real-world hypergraphs?}
        \black{\item {\bf Q4. Machine Learning Applications:} Can \motifs and \tmotifs offer useful input features for machine learning applications?}
        	\item {\bf Q5. Further Discoveries:} What interesting discoveries can be uncovered by employing \motifs in real-world hypergraphs?
	\item {\bf Q6. Performance of Counting Algorithms:} How fast and accurate are the different versions of \method? 
\end{itemize}

\subsection{Experimental Settings}
\label{sec:exp:settings}
\smallsection{Machines:} We conducted all the experiments on a machine with an AMD Ryzen 9 3900X CPU and 128GB RAM. 

\smallsection{Implementations:}
We implemented every version of \method using C++ and OpenMP. \blue{For hash tables, we used the implementation named `unordered\_map' provided by the C++ Standard Template Library.} 

\smallsection{Datasets:} We used the following eleven real-world hypergraphs from five different domains:
\begin{itemize}
	\item \textbf{\texttt{co-authorship}} (coauth-DBLP, coauth-geology~\cite{sinha2015overview}, and coauth-history~\cite{sinha2015overview}): A node represents an author. A hyperedge represents all authors of a publication. 
	\item \textbf{\texttt{contact}} (contact-primary~\cite{stehle2011high} and contact-high~\cite{mastrandrea2015contact}): A node represents a person. A hyperedge represents a group interaction among individuals.
	\item \textbf{\texttt{email}} (email-Enron~\cite{klimt2004enron} and email-EU~\cite{leskovec2005graphs,yin2017local}): A node represents an e-mail account. A hyperedge consists of the sender and all receivers of an email.
	\item \textbf{\texttt{tags}} (tags-ubuntu and tags-math): A node represents a tag. A hyperedge represents all tags attached to a post.
	\item \textbf{\texttt{threads}} (threads-ubuntu and threads-math): A node represents a user. A hyperedge groups all users participating in a thread.
\end{itemize}
These hypergraphs are made public by the authors of \cite{benson2018simplicial}, and in Table~\ref{dataset_table} we provide some statistics of the hypergraphs after removing duplicated hyperedges.
We used $\methodE$ for the \textit{coauth-history} dataset, the \textit{threads-ubuntu} dataset, and all datasets from the \texttt{contact} and \texttt{email} domains. For the other datasets, we used \methodAW with $r=2,000,000$, unless otherwise stated. 
We used a single thread unless otherwise stated.
We computed CPs based on five hypergraphs randomized as described in Section~\ref{sec:concept:motif}.
\black{We computed CPs based on \motifs (instead of \tmotifs), unless otherwise stated.}

\subsection{Q1. Comparison with Random}
\label{sec:exp:random}

We analyze the counts of different \motifs' instances in real and random hypergraphs.
In Table~\ref{absolute_table}, we report the (approximated) count of each \motif $t$'s instances in each real hypergraph with the corresponding count averaged over five random hypergraphs obtained as described in Section~\ref{sec:concept:motif}.
\change{For each \motif $t$, we measure its relative count, which we define as $\frac{\MT- M_{rand}[t]}{\MT+  M_{rand}[t]}.$}
We also rank \motifs by the counts of their instances and examine the difference between the ranks in real and corresponding random hypergraphs.
As seen in the table, the count distributions in real hypergraphs are clearly distinguished from those of random hypergraphs.




\smallsection{\Motifs in Random Hypergraphs:}
We notice  that instances of \motifs $17$ and $18$ appear much more frequently in random hypergraphs than in real hypergraphs from all domains. For example, instances of \motif $17$ appear only about $510$ thousand times in the \textit{tags-math} dataset, while they appear about $500$ million times (about $\mathbf{980 \times}$ more often) in the corresponding randomized hypergraph. In the \textit{threads-math} dataset, instances of \motif $18$ appear about $830$ thousand times, while they appear about $13$ billion times (about $\mathbf{15,660\times}$ more often) in the corresponding randomized hypergraph. 
Instances of \motifs $17$ and $18$ consist of a hyperedge and its two disjoint subsets (see Figure~\ref{motif_three_hyperedges}).



\smallsection{\Motifs in Co-authorship Hypergraphs:}
We observe that instances of \motifs $10$, $11$, and $12$ appear more frequently in all three hypergraphs from the \texttt{co-au\\thorship} domain than in the corresponding random hypergraphs.
Although there are only about $190$ instances of \motif $12$ in the corresponding random hypergraphs, there are about $64$ million such instances (about $\mathbf{337,000\times}$ more instances) in the \textit{coauth-DBLP} dataset.
As seen in Figure~\ref{motif_three_hyperedges}, in instances of \motifs $10$, $11$, and $12$, a hyperedge is overlapped with the two other overlapped hyperedges in three different ways.


\smallsection{\Motifs in Contact Hypergraphs:}
Instances of h-mo\-tifs $9$, $13$, and $14$ are noticeably more common in both \texttt{contact} datasets than in the corresponding random hypergraphs.  
As seen in Figure~\ref{motif_three_hyperedges}, in instances of \motifs $9$, $13$, and $14$, hyperedges are tightly connected and nodes are mainly located in the intersections of all or some hyperedges.

\smallsection{\Motifs in Email Hypergraphs:} 
Both email datasets contain particularly many instances of \motifs $8$ and $10$, compared to the corresponding random hypergraphs.
As seen in Figure~\ref{motif_three_hyperedges}, instances of \motifs $8$ and $10$ consist of three hyperedges one of which contains the most nodes.

\begin{figure}[t!]  
	\centering
	\includegraphics[width=0.48\textwidth]{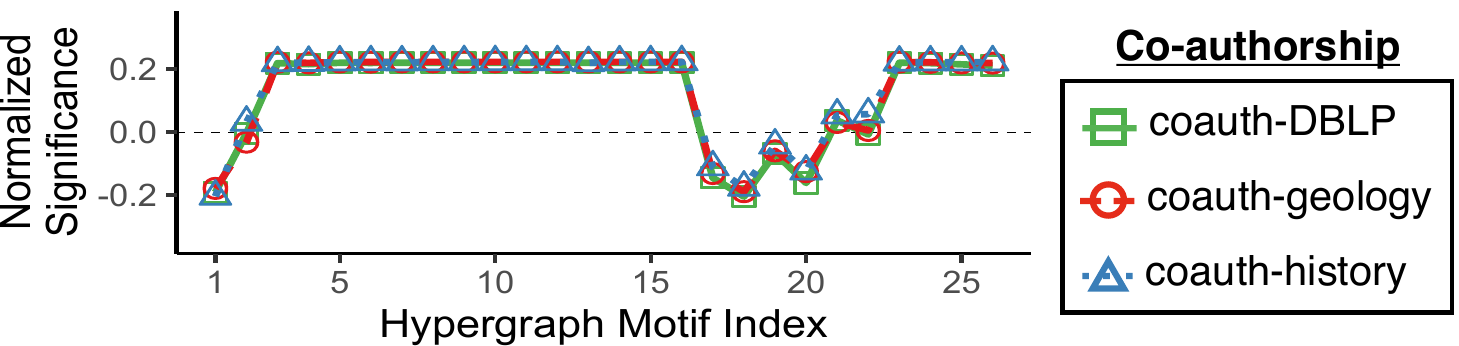}
	\includegraphics[width=0.48\textwidth]{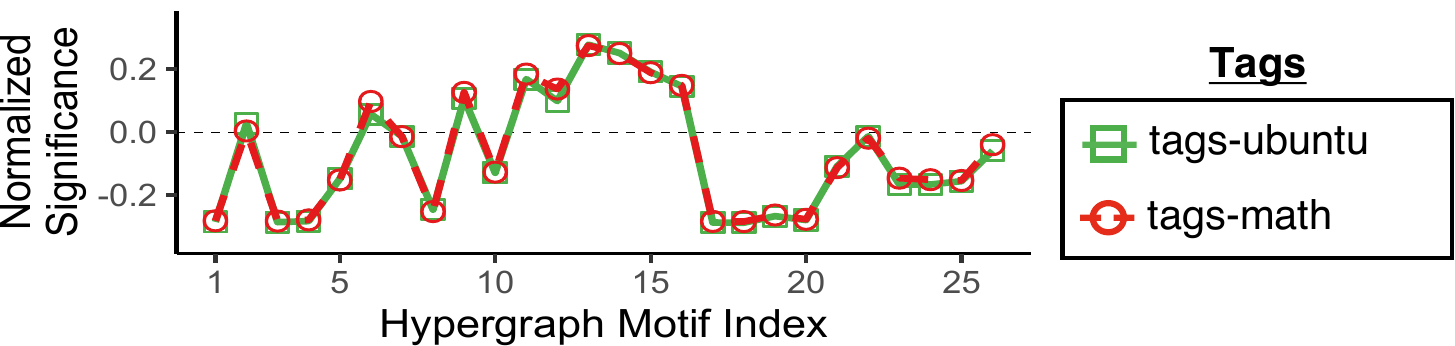} 
	\includegraphics[width=0.48\textwidth]{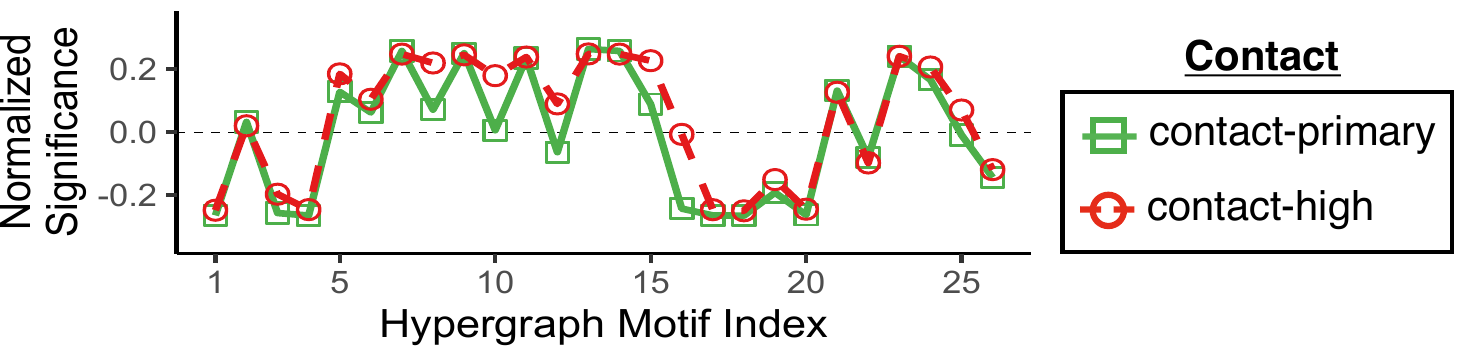}
	\includegraphics[width=0.48\textwidth]{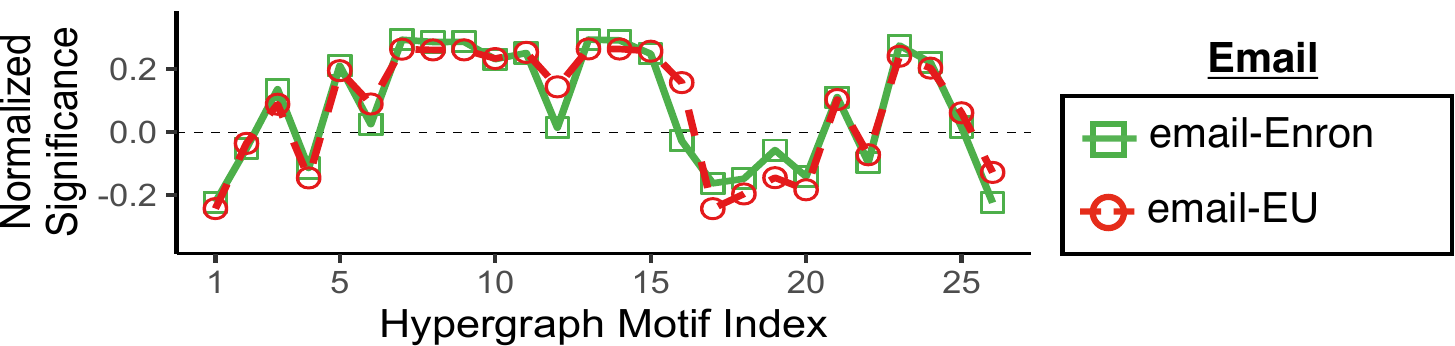}
	\includegraphics[width=0.48\textwidth]{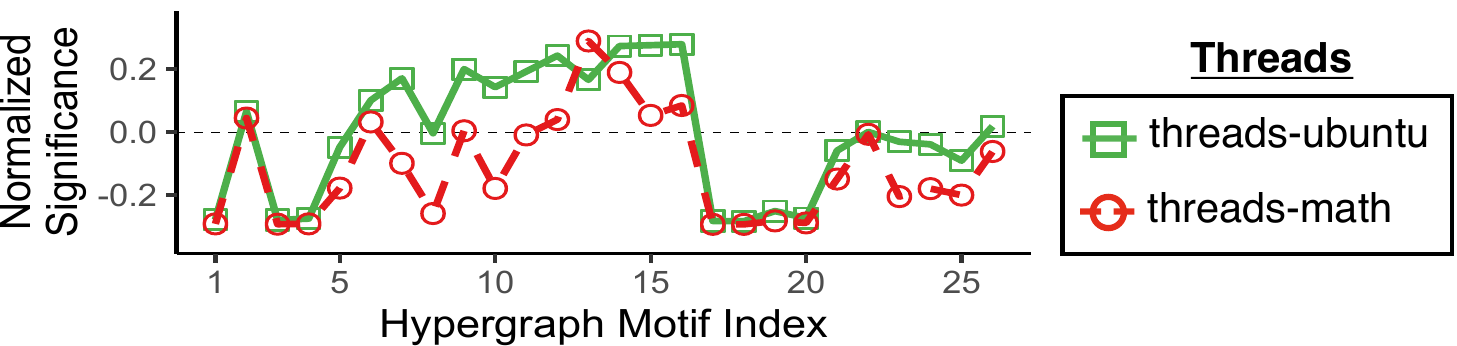}
	\caption{\label{fig:cp}
		Characteristic profiles (CPs) capture local structural patterns of real-world hypergraphs accurately.
		The CPs are similar within domains but different across domains. 
		Note that the significance of \motif 3 distinguishes 
		the contact hypergraphs from the email hypergraphs.
	}
\end{figure}

\smallsection{\Motifs in Tags Hypergraphs:}
In addition to instances of \motif $11$, which are common in most real hypergraphs, 
instances of \motif $16$, where all seven regions are not empty (see Figure~\ref{motif_three_hyperedges}), are particularly frequent in both \texttt{tags} datasets than in corresponding random hypergraphs.

\smallsection{\Motifs in Threads Hypergraphs:}
Lastly, in both data sets from the \texttt{threads} domain, instances of \motifs $12$ and $24$ are noticeably more frequent than expected from the corresponding random hypergraphs.

In Appendix~\ref{appendix:addExp-networkProperties}, we analyze how the significance of each \motif is correlated with the global structural properties of hypergraphs.

\subsection{Q2. Comparison across Domains}
\label{sec:exp:domain}

We compare the characteristic profiles (CPs) of the real-world hypergraphs.
In Figure~\ref{fig:cp}, we present the CPs (i.e., the significances of the $26$ \motifs) of each hypergraph.
As seen in the figure, hypergraphs from the same domains have similar CPs. 
Specifically, all three hypergraphs from the \texttt{co-authorship} domain share extremely similar CPs, even when the absolute counts of \motifs in them are several orders of magnitude different.
Similarly, the CPs of both hypergraphs from the \texttt{tags} domain are extremely similar.
However, the CPs of the three hypergraphs from the \texttt{co-authorship} domain are clearly distinguished by them of the hypergraphs from the  
\texttt{tags} domain.
While the CPs of the hypergraphs from the \texttt{contact} domain and the CPs of those from the \texttt{email} domain are similar for the most part, they are distinguished by the significance of \motif 3.
These observations confirm that CPs accurately capture local structural patterns of real-world hypergraphs.

\begin{figure}[t]  
	\centering
	\includegraphics[width=0.48\textwidth]{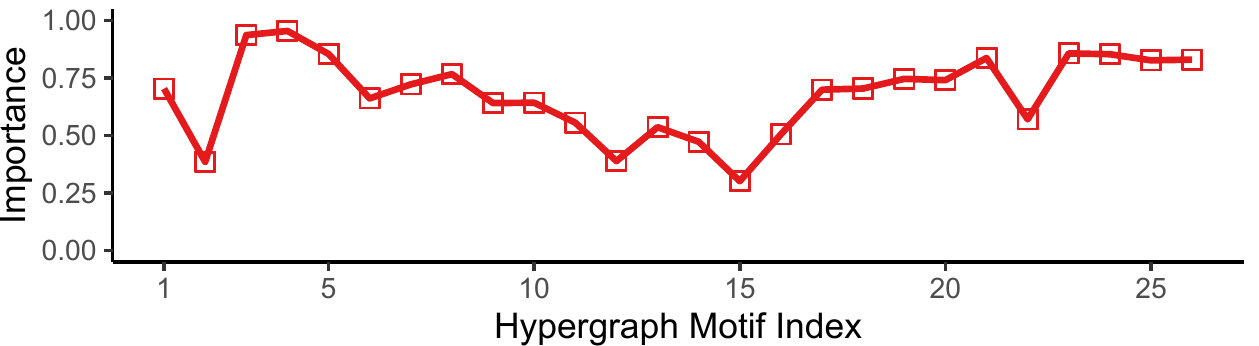}
	\caption{\label{sig_fig}
		\black{Importance of \motifs in differentiating hypergraph domains: All 26 \motifs contribute to distinguishing hypergraph domains, with each \motif having varying levels of importance.}
	}
\end{figure}

\begin{figure*}[t!]
	\centering     
	\hspace{-2mm}
	\subfigure[Similarity matrix based on network motifs
	]{
		\includegraphics[width=0.65\columnwidth]{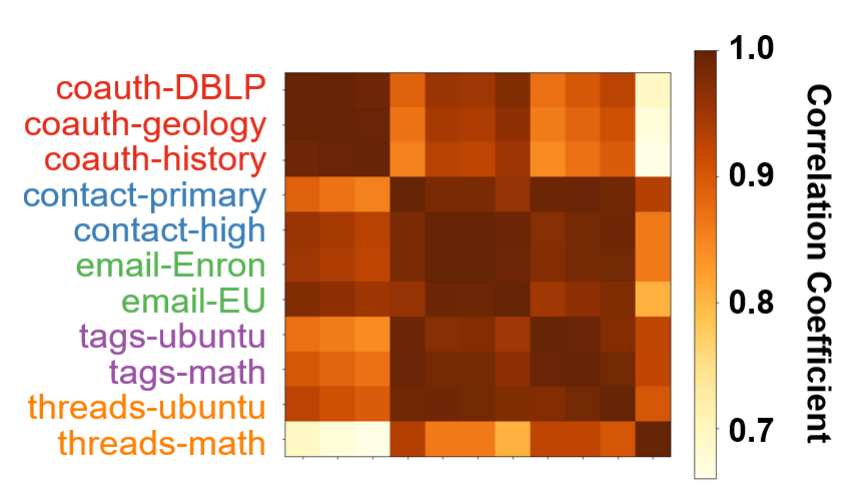}
	}
    \hspace{-2mm}
	\subfigure[Similarity matrix based on hypergraph motifs (\motifs)]{
		\includegraphics[width=0.65\columnwidth]{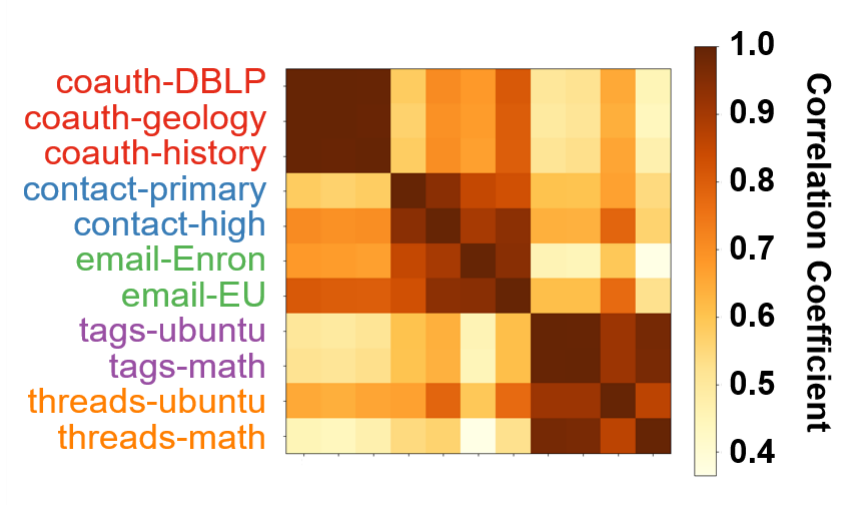}
	}
    \hspace{-2mm}
    \subfigure[\black{Similarity matrix based on ternary hypergraph motifs (\tmotifs)}
	]{
		\includegraphics[width=0.65\columnwidth]{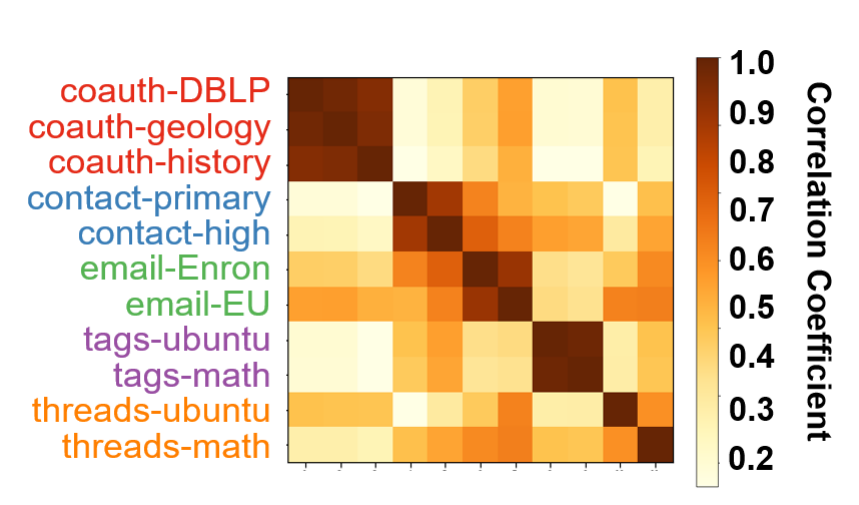}
	}

     \subfigure[\black{Numerical comparisons of the similarity matrices}
	]{
    	\scalebox{0.75}{
      \begin{tabular}{c|ccc}
                    \toprule
                    & \textbf{Network Motifs} & \textbf{H-Motifs} & \textbf{3H-Motifs}\\
                    \midrule
                    \multirow{1}{*}{\textbf{(1) Average Similarity within Domains}} & 0.988 & 0.978 & 0.932\\
                    \midrule
                    \multirow{1}{*}{\textbf{(2) Average Similarity across Domains}} & 0.919 & 0.654 & 0.370\\
                    \midrule
                    \multirow{1}{*}{\textbf{Gap between (1) and (2)}} & 0.069 & 0.324 & 0.562\\
                    
                    \midrule
                    \multirow{1}{*}{\textbf{Clustering Performance (NMI Score)}} & 0.678 & 0.905 & 1.000 \\
                    
                    \bottomrule
        	\end{tabular}
         }
	}
	\caption{\label{heatmap_fig}
		\black{Real-world hypergraphs from the same domain exhibit similar characteristic profiles (CPs), while those from different domains have distinct CPs. Notably, the CPs based on \motifs and \tmotifs capture local structural patterns more accurately than those based on network motifs, as supported numerically in the table.
  }}
\end{figure*}

\smallsection{\black{Importance of \Motifs:}}
\black{Since some \motifs can be more useful than others, we measure the importance of each \motif in distinguishing the domains of hypergraphs. 
We define the \textit{importance} of a \motif as its contribution to differentiating the domains of hypergraphs. 
The importance of each \motif $t$ is defined as: 
\begin{equation*}
importance[t] = 1-\frac{dist_{within}[t]}{dist_{across}[t]},
\end{equation*}
where $dist_{within}[t]$ is the average CP distance between hypergraphs from the same domain, and $dist_{across}[t]$ is the average CP distance between hypergraphs from different domains. As seen in Figure~\ref{sig_fig}, all $26$ \motifs have positive importances, indicating that all \motifs do contribute to distinguishing the domains of hypergraphs. Note that each \motif has different importance: some \motifs are extremely important (e.g., \motifs $3$, $4$, and $23$), while some are less important (e.g., \motifs $2$, $12$, and $15$).
\blue{
It is important to note that these importance scores should be interpreted with caution, as they may be overfitted given the limited number of datasets (specifically, the similarities observed in 7 within-domain pairs and 48 cross-domain pairs).
}
}



\subsection{Q3. Comparison of Characterization Powers}
\label{sec:exp:characterization_power}

\black{We compare the characterization power of \motifs, \tmotifs, and basic network motifs. Through this comparison, we demonstrate the effectiveness of \motifs and \tmotifs in capturing the structural properties of real-world hypergraphs.}

\smallsection{CPs Based on Network Motifs:}
In addition to characteristic profiles (CPs) based on \motifs and \tmotifs, we additionally compute CPs based on network motifs.
\blue{Specifically, we construct the incidence graph $G'$  (defined in Section~\ref{sec:concept:prelim}) of each hypergraph $G=(V,E)$.}
Then, we compute the CPs based on the network motifs consisting of $3$ to $5$ nodes, using \cite{bressan2019motivo}.\footnote{\black{Nine patterns can be obtained from incident graphs, which are bipartite graphs, and thus CPs based on network motifs are 9-element vectors.}} \black{Using each of the three types of CPs, we compute the similarity matrices (specifically, correlation coefficient matrices) of the real-world hypergraphs and provide them in Figure~\ref{heatmap_fig}.}

\smallsection{Comparison of Pearson Correlations:}
As seen in Figures~\ref{heatmap_fig}(a), \ref{heatmap_fig}(b) and \ref{heatmap_fig}(d), the domains of the real-world hypergraphs are distinguished more clearly by the CPs based on \motifs than by the CPs based on network motifs.
Numerically, when the CPs based on \motifs are used, the average correlation coefficient is $0.978$ within domains and $0.654$ across domains, and the gap is $0.324$.
However, when the CPs based on network motifs are used, the average correlation coefficient is $0.988$ within domains and $0.919$ across domains, and the gap is just $0.069$.
\black{As seen in Figures~\ref{heatmap_fig}(c) and \ref{heatmap_fig}(d), the hypergraph domains are distinguished even more distinctly differentiated by the CPs based on \tmotifs.
Using \tmotifs as a basis for the CPs results in significantly lower correlation coefficients between the contact and email domains, as well as between the tag and thread domains, allowing for a better distinction between these domains. Numerically,  when \tmotifs are used, the average correlation coefficient is $0.932$ within domains and $0.370$ across domains, and the gap is $0.562$.
These results support that \motifs and \tmotifs play a key role in capturing local structural patterns of real-world hypergraphs.
}

\smallsection{Comparison of Clustering Performances:}
\black{We further compare the characterization powers by evaluating clustering performance using each similarity matrix as the input for spectral clustering \cite{scikit-learn}. We set the target number of clusters to the number of hypergraph domains.
As summarized in Figure~\ref{heatmap_fig}(d), the NMI scores, where higher scores indicate better clustering performance, are $0.678$, $0.905$, and $1$ when network motifs, \motifs, and \tmotifs, respectively, are used as a basis for the CPs.
Notably, when \tmotifs are used, the hypergraph domains are perfectly classified into distinct clusters. These results confirm again the effectiveness of \motifs and \tmotifs in characterizing real-world hypergraphs.}

\begin{figure*}[t]
	\centering     
	\subfigure[\label{fig:dblp:year} Fraction of the instances of each \motif in the coauth-DBLP dataset over time.]{\includegraphics[width=0.71\textwidth]{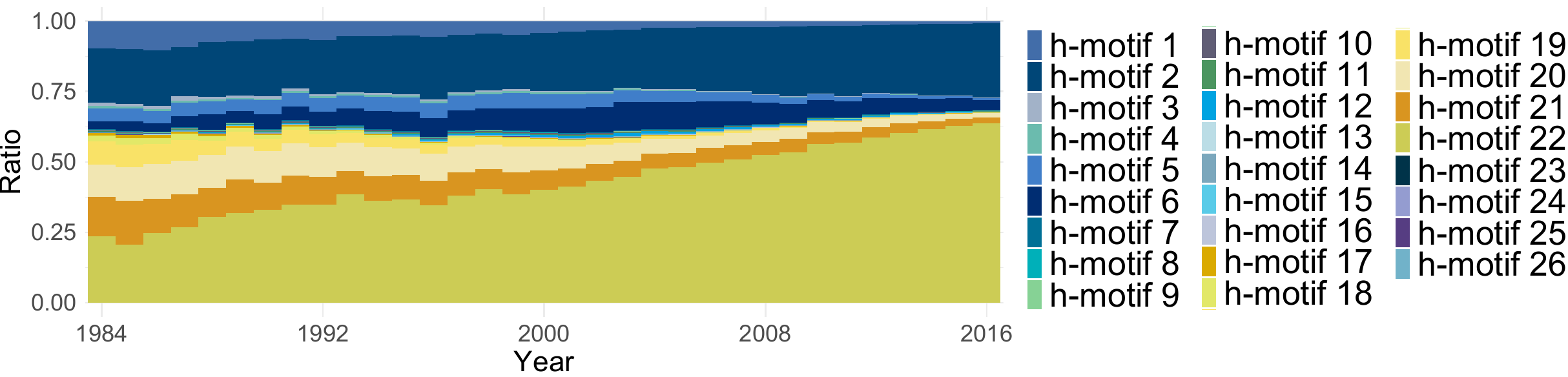}}
	\subfigure[\label{fig:dblp:openclosed} Open and closed \motifs.]{
		\includegraphics[width=0.01\textwidth]{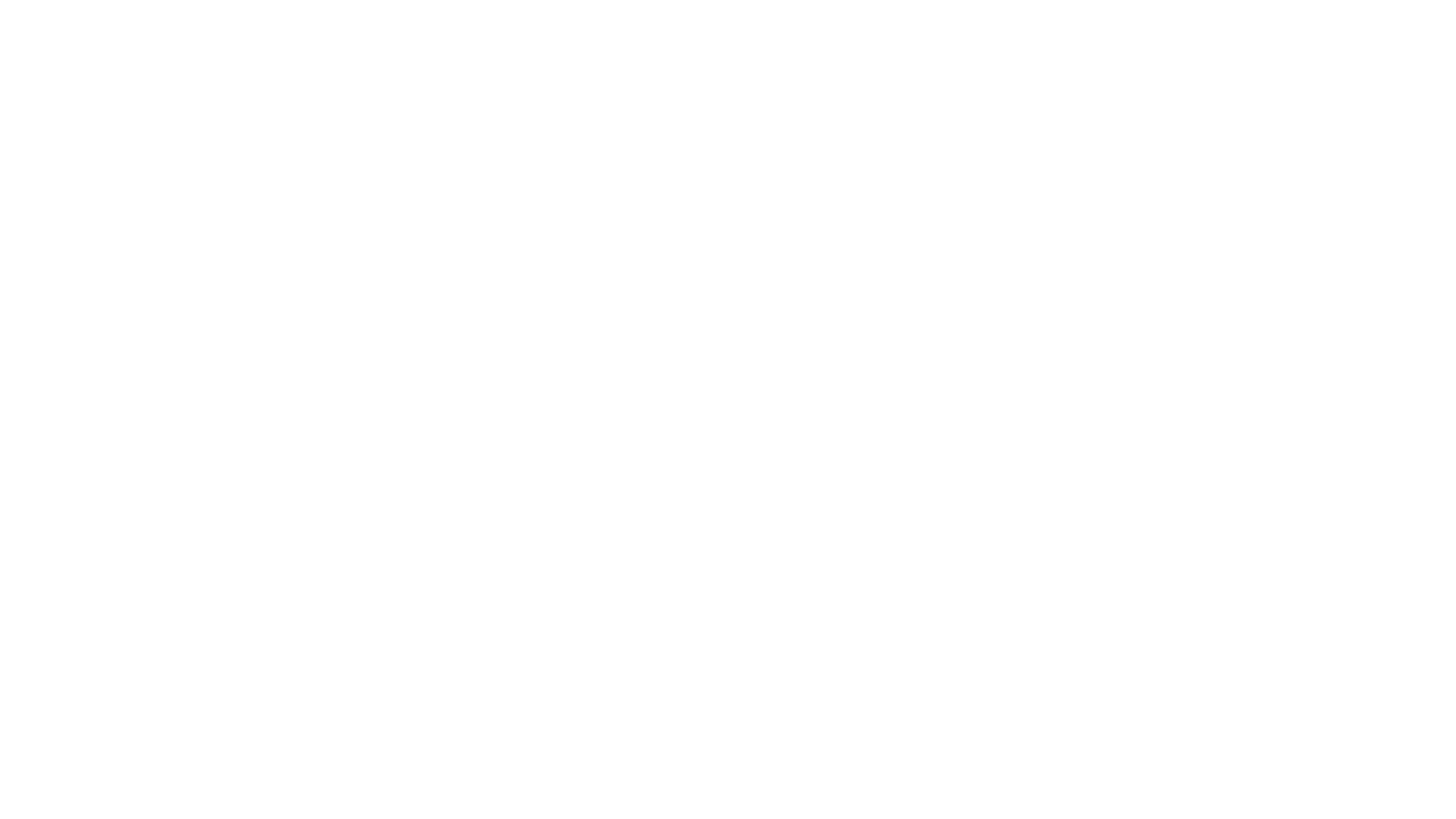}
		\includegraphics[width=0.24\textwidth]{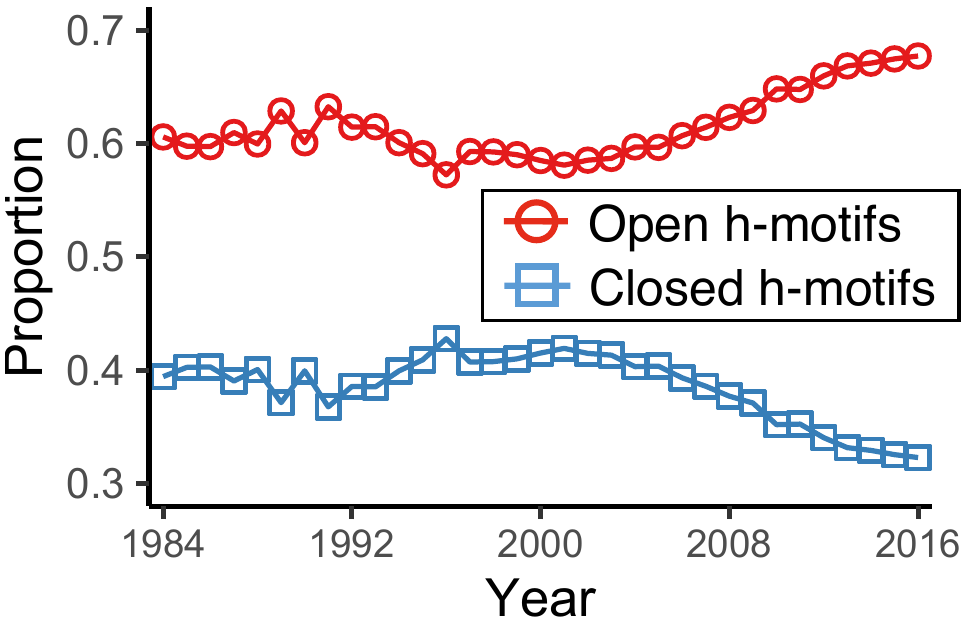}
		\includegraphics[width=0.01\textwidth]{empty.pdf}}
	\caption{\label{fig:dblp_year}\change{Trends in the formation of collaborations are captured by \motifs. (a) The fractions of the instances of \motifs 2 and 22 have increased rapidly.
	(b) The fraction of the instances of open \motifs has increased steadily since 2001. }}
\end{figure*}

\subsection{Q4. Machine Learning Applications}\label{sec:exp:applications}
\label{sec:exp:observations:prediction}

\black{We demonstrate that \motifs and \tmotifs provide useful input features for two machine learning tasks.}

\smallsection{\black{Hyperedge Prediction:}}
\black{We first consider the task of predicting future hyperedges in the seven real-world hypergraphs where \methodE completes within a reasonable duration.}
As in~\cite{yoon2020much}, we formulate this problem as a binary classification problem, aiming to classify real hyperedges and fake ones. To this end, we create fake hyperedges in both training and test sets by replacing some fraction of nodes in each real hyperedge with random nodes. Refer to Appendix~\ref{appendix:prediction} for detailed settings. Then, we train  classifiers using each of the following sets of input hyperedge features:
\begin{itemize}
    \item {{\bf HP26} ($\in\mathbb{R}^{26}$): HP based on \motifs.}
    \item {{\bf HP7} ($\in\mathbb{R}^{7}$): The seven features with the largest variance among those in HP based on \motifs.}
    \item \black{{{\bf THP} ($\in\mathbb{R}^{431}$): HP based on \tmotifs.}}
    \item {{\bf BASELINE} ($\in\mathbb{R}^{7}$): The mean, maximum, and minimum degree\footnote{The degree of a node $v$ is the number of hyperedges that $v$ is in.} and the mean, maximum, and minimum number of neighbors\footnote{The neighbors of a node $v$ is the nodes that appear in at least one hyperedge together with $v$.} of the nodes in each hyperedge and its size.}
\end{itemize}
\black{We employ XGBoost \cite{chen2016xgboost} as the classifier since it outperforms other classifiers, specifically logistic regression, random forest, decision tree, and multi-layer perception, on average, regardless of the feature sets used. Results with other classifiers can be found in Appendix~\ref{appendix:prediction}.} 

\black{We report the accuracy (ACC) and the area under the ROC curve (AUC) in each setting in Table~\ref{prediction_table}. Using HP26 and HP7, which are based on \motifs, yields consistently better predictions than using BASELINE, which is a baseline feature set. In addition, using THP, which is based on \tmotifs, leads to the best performance in almost all settings. These results suggest that \motifs provide informative hyperedge features, and \tmotifs provide even stronger hyperedge features for hyperedge prediction.}

\begin{table}[t]
	\begin{center}
		\caption{\label{prediction_table}
\black{
\Motifs and \tmotifs give informative hyperedge features.
The use of \motifs and \tmotifs for input features in HP26 and THP, respectively, consistently outperforms using the baseline features in BASELINE for predicting hyperedges in all datasets. Even when reducing the dimension of HM26 to that of BASELINE (i.e., using HP7), the accuracy of predictions using \motif-based features remains superior.
For each setting, the best result is in \textbf{bold} and the second best one is \underline{underlined}.
The standard deviations of all the results are smaller than $0.0001$.
}
            }
		\scalebox{0.68}{
			\begin{tabular}{cc|cccc}
				\toprule
				\multicolumn{2}{c|}{} & \textbf{HP26} & \textbf{HP7} & \textbf{THP} & \textbf{BASELINE}\\
				\midrule
				\multirow{2}{*}{ \textbf{coauth-DBLP}} & \textbf{ACC} & \underline{0.801} & 0.744 & \textbf{0.836} & 0.646\\
				& \textbf{AUC} & \underline{0.886} & 0.820 & \textbf{0.909} & 0.707\\
                \midrule
				\multirow{2}{*}{ \textbf{coauth-MAG-Geology}} & \textbf{ACC} & \underline{0.782} & 0.722 & \textbf{0.819} & 0.661\\
				& \textbf{AUC} & \underline{0.865} & 0.798 & \textbf{0.892} & 0.741\\
                \midrule
				\multirow{2}{*}{ \textbf{coauth-MAG-History}} & \textbf{ACC} & \underline{0.696} & 0.683 & \textbf{0.716} & 0.608\\
				& \textbf{AUC} & \underline{0.811} & 0.761 & \textbf{0.820} & 0.732\\
                \midrule
				\multirow{2}{*}{ \textbf{contact-primary-school}} & \textbf{ACC} & \underline{0.772} & 0.769 & \textbf{0.779} & 0.603\\
				& \textbf{AUC} & \underline{0.879} & 0.868 & \textbf{0.886} & 0.647\\
                \midrule
				\multirow{2}{*}{ \textbf{contact-high-school}} & \textbf{ACC} & \textbf{0.907} & 0.860 & \underline{0.904} & 0.585\\
				& \textbf{AUC} & \textbf{0.968} & 0.949 & \underline{0.967} & 0.641\\
                \midrule
				\multirow{2}{*}{ \textbf{email-Enron}} & \textbf{ACC} & \underline{0.815} & 0.725 & \textbf{0.827} & 0.633\\
				& \textbf{AUC} & \textbf{0.922} & 0.816 & \underline{0.921} & 0.701\\
                \midrule
				\multirow{2}{*}{ \textbf{email-Eu}} & \textbf{ACC} & \underline{0.911} & 0.878 & \textbf{0.920} & 0.702\\
				& \textbf{AUC} & \underline{0.972} & 0.954 & \textbf{0.977} & 0.781\\
				\bottomrule
			\end{tabular}
		}
	\end{center}
\end{table}

\begin{table}[t]
	\begin{center}
		\scriptsize
		\caption{\label{node_classification_table}
         \black{\Motifs and \tmotifs provide valuable input features for node classification, with \tmotifs showing particularly strong performance. The use of them for input features in NP26 and TNP  yields better classification results than using the baseline features in BASELINE.
         For each metric, the best result is in \textbf{bold} and the second best one is \underline{underlined}. The standard deviations of all the results are smaller than $0.0001$.}}
                \begin{tabular}{c|cccc} 
        				\toprule
        				& \textbf{NP26} & \textbf{NP7} & \textbf{TNP} & \textbf{BASELINE}\\
        				\midrule
        				\textbf{ACC} & \underline{0.682} & 0.545 & \textbf{0.723} & 0.659\\  
                        \textbf{AVG AUC} & \underline{0.952} & 0.901 & \textbf{0.967} & 0.950\\
                        \bottomrule
        		\end{tabular}
	\end{center}
\end{table}

\begin{figure*}[t]
    \centering     
    \includegraphics[width=0.35\textwidth]{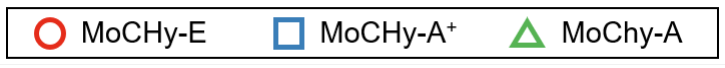}\\
    \subfigure[threads-ubuntu]{\includegraphics[width=0.16\textwidth]{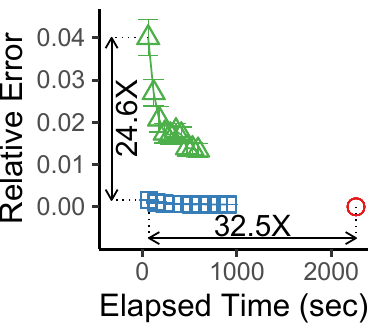}}
    \subfigure[email-Eu]{\includegraphics[width=0.16\textwidth]{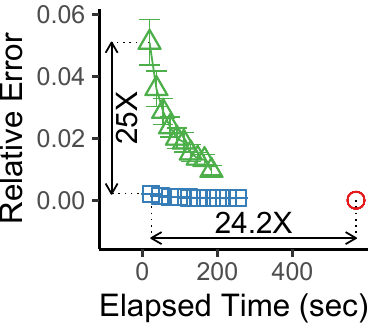}}
    \subfigure[contact-primary]{\includegraphics[width=0.16\textwidth]{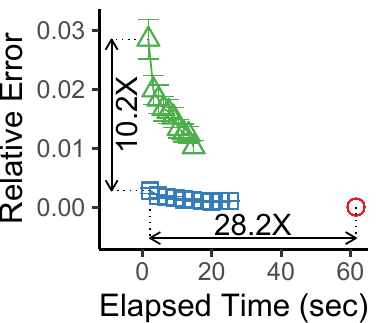}}
    \subfigure[coauth-history]{\includegraphics[width=0.16\textwidth]{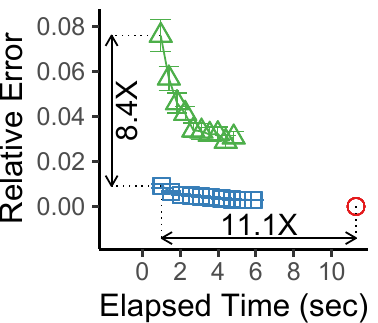}}
    \subfigure[contact-high]{\includegraphics[width=0.16\textwidth]{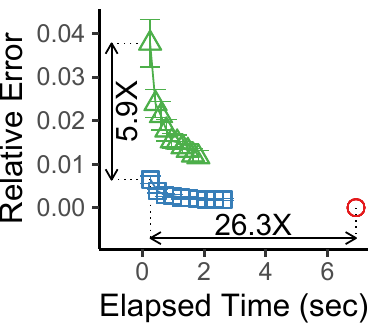}}
    \subfigure[email-Enron]{\includegraphics[width=0.16\textwidth]{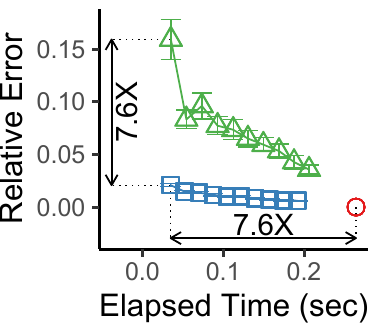}}
    \caption{\label{fig:tradeoff}
        \methodAWX gives the best trade-off between speed and accuracy. It yields up to $25 \times$ more accurate estimation than \methodAEX, and it is up to $32.5 \times$ faster than \methodEX.
		The error bars indicate $\pm$ $1$ standard error over $20$ trials.
	}
\end{figure*}

\begin{figure*}[t]
	\centering
	\includegraphics[width=0.15\textwidth]{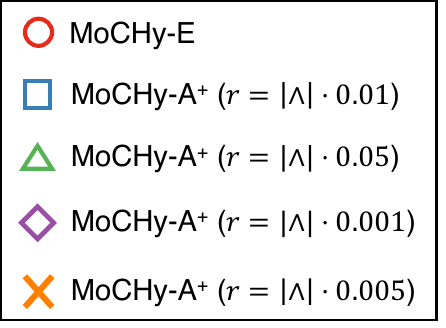}
	\subfigure[email-EU]{\includegraphics[width=0.276\textwidth]{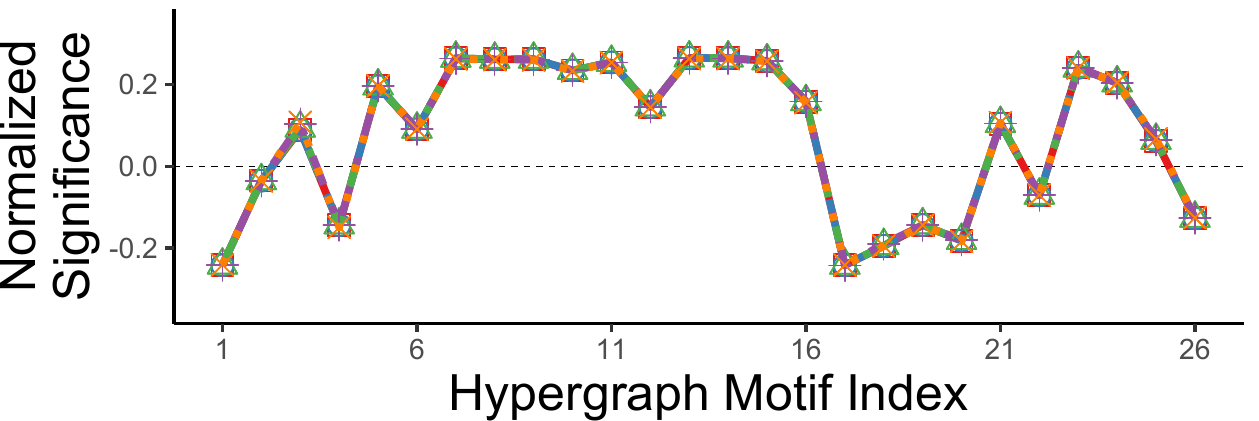}}
	\subfigure[contact-primary]{\includegraphics[width=0.276\textwidth]{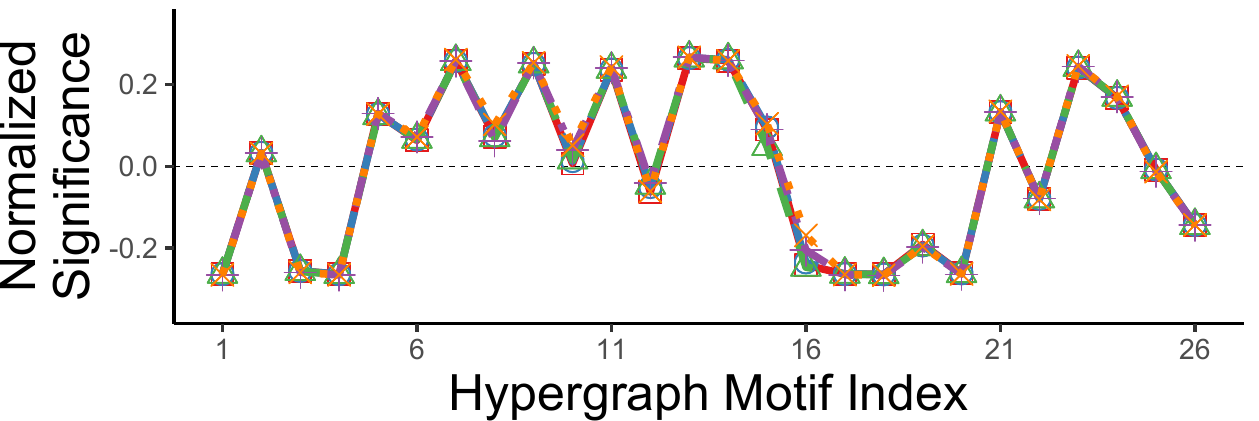}}
	\subfigure[coauth-history]{\label{sensitivity_fig:a}\includegraphics[width=0.276\textwidth]{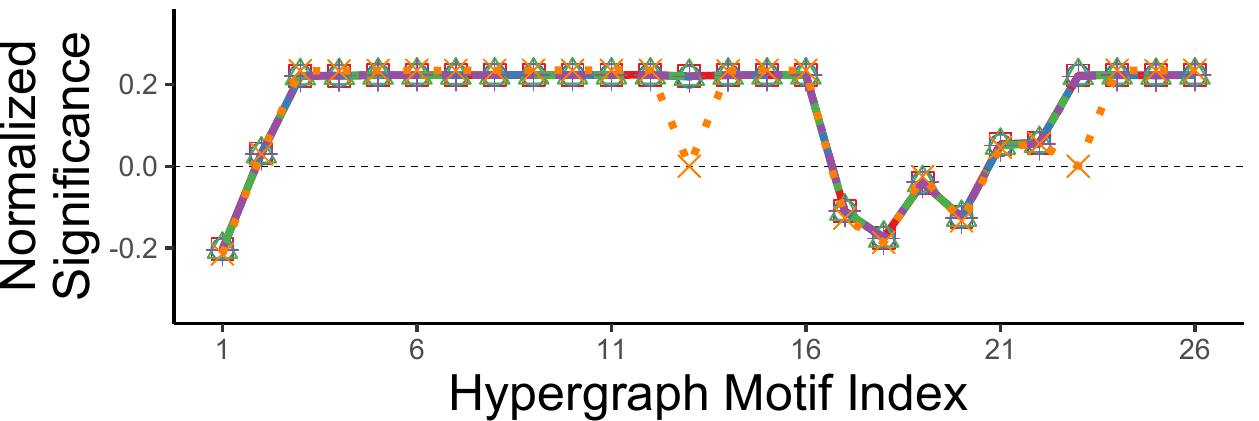}}
	\caption{\label{sensitivity_fig}Using \methodAWX, characteristic profiles (CPs) can be estimated accurately from a small number of samples.}
\end{figure*}

\smallsection{\black{Node Classification:}}
\black{As another machine learning application, we consider the task of node classification, where the label of each node is the hypergraph it belongs to.
Since we utilize all eleven real-world hypergraphs, each node can have one of eleven possible labels. 
We draw 100 nodes uniformly at random from each hypergraph, and we use $80\%$ of them for training and the remaining $20\%$ of them for testing. 
Refer to Appendix~\ref{appendix:node_classification} for detailed experimental settings.
We train four classifiers using each of the following sets of input node features:
\begin{itemize}
    \item {{\bf NP26} ($\in\mathbb{R}^{26}$): NP based on \motifs.}
    \item {{\bf NP7} ($\in\mathbb{R}^{7}$): The seven features with the largest variance among those in NP based on \motifs.}
    \item {{\bf TNP} ($\in\mathbb{R}^{431}$): NP based on 3h-motifs.}
    \item {{\bf BASELINE} ($\in\mathbb{R}^{7}$): The  
    node count, hyperedge count, average hyperedge size, average overlapping size, density \cite{hu2017maintaining}\footnote{The ratio between the hyperedge count and the node count.}, overlapness \cite{lee2021hyperedges}\footnote{The ratio between the sum of hyperedge sizes and the node count.}, and the number of hyperedges that contain the ego-node in each ego-network.}
\end{itemize}
For all feature sets, we use radial ego-networks as the ego-networks and XGBoost \cite{chen2016xgboost} as the classifier. This is because using radial ego-networks and XGBoost gives better classification results than using other types of ego-networks and other classifiers in most cases. Refer to Appendix~\ref{appendix:node_classification} for full experimental results with other types of ego-networks and other classifiers.}


\black{We report the accuracy (ACC) and the average area under the ROC curve (AVG AUC) in each setting in Table~\ref{node_classification_table}. Using TNP, which is based on \tmotifs, yields the best classification result.
Using NP26, which is based on \motifs, outperforms using BASELINE, which is a baseline feature set.
However, reducing the dimension of NP26 to that of BASELINE results in the worst performance.
These results demonstrate that \motifs and particularly \tmotifs provide effective input features for node classification, highlighting the importance of local structural patterns in hypergraphs for this task.}



\subsection{Q5. Further Observations}
\label{sec:exp:observations}
\label{sec:exp:observations:evolution}  
We analyze the evolution of the co-authorship Hypergraphs by employing \motifs.
The dataset contains bibliographic information on computer science publications.	
Using the publications in each year from $1984$ to $2016$, we create $33$ hypergraphs where each node corresponds to an author, and each hyperedge indicates the set of the authors of a publication.
Then, we compute the fraction of the instances of each \motif in each hypergraph to analyze patterns and trends in the formation of collaborations.
 As shown in Figure~\ref{fig:dblp_year}, over the 33 years, the fractions have changed with distinct trends.
 First, as seen in Figure~\ref{fig:dblp:openclosed}, the fraction of the instances of open \motifs has increased steadily since 2001, 
 indicating that collaborations have become less clustered, i.e., the probability that two collaborations intersecting with a collaboration also intersect with each other has decreased. 
 Notably, the fractions of the instances of \motif $2$ (closed) and \motif $22$ (open) have increased rapidly, accounting for most of the instances.

\begin{figure*}[t]
	\centering     
	\includegraphics[width=0.14\textwidth]{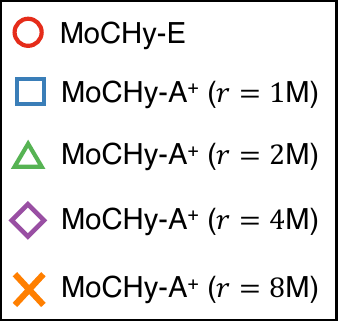} 
        \hspace{15pt}
	\subfigure[\label{fig:par:time} Elapsed time on threads-ubuntu]{\includegraphics[width=0.164\textwidth]{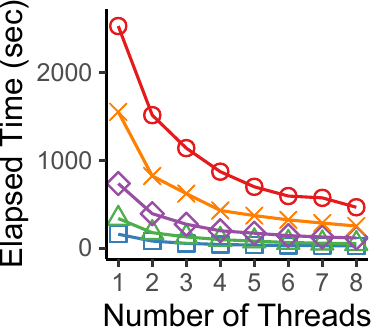}}
        \hspace{10pt}
	\subfigure[\label{fig:par:speedup} Speedup on threads-ubuntu]{\includegraphics[width=0.164\textwidth]{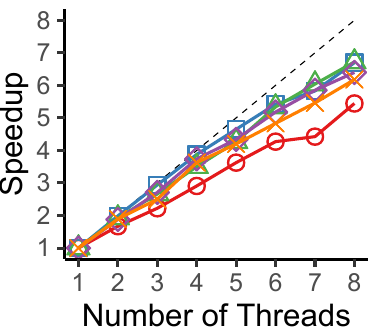}} 
        \hspace{10pt}
        \subfigure[\label{fig:par:time} Elapsed time on coauth-DBLP]{\includegraphics[width=0.164\textwidth]{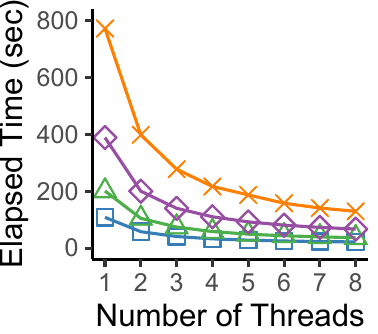}}
        \hspace{10pt}
	\subfigure[\label{fig:par:speedup} Speedup on coauth-DBLP]{\includegraphics[width=0.164\textwidth]{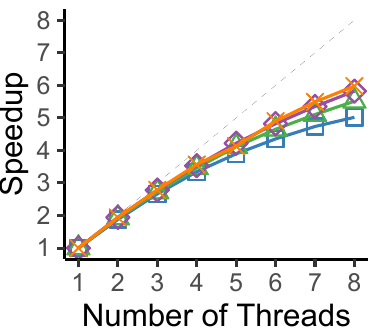}} 
	\caption{\label{fig:par} Both \methodEX and \methodAWX achieve significant speedups with multiple threads.}
\end{figure*}

\begin{figure*}[t]
	\centering     
        \includegraphics[width=0.6\textwidth]{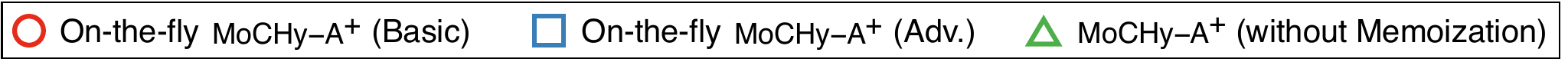}\\
	\subfigure[\label{fig:mem:time2} Speedup $r$=1M]{\includegraphics[width=0.195\textwidth]{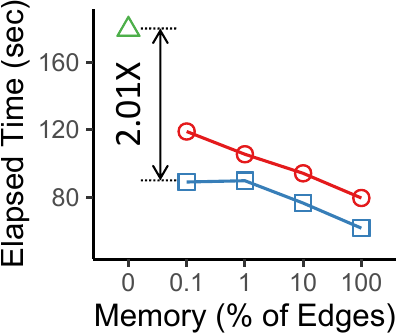}}
        \hspace{10pt}
        \subfigure[\label{fig:mem:time2} Speedup $r$=1M]{\includegraphics[width=0.195\textwidth]{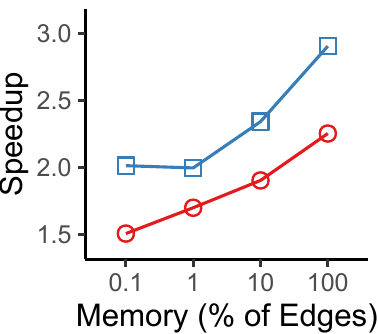}}
        \hspace{10pt}
        \subfigure[\label{fig:mem:time3} Elapsed Time $r$=8M]{\includegraphics[width=0.195\textwidth]{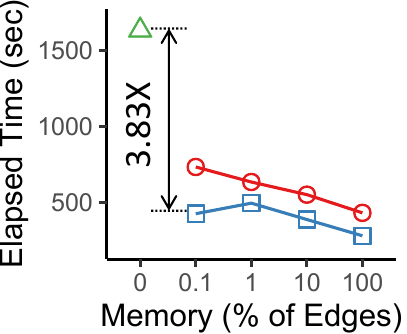}}
        \hspace{10pt}
        \subfigure[\label{fig:mem:time4} Speedup $r$=8M]{\includegraphics[width=0.195\textwidth]{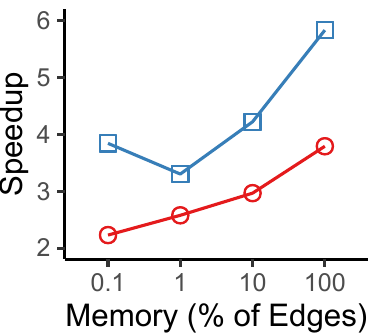}}
	\caption{\label{fig:mem}
		\blue{\methodAWrandom and \methodAWgreedy achieve substantial speed improvements, compared to \methodAW without memoization, even when memoizing a small fraction of line graphs
        Between the two methods,
        \methodAWgreedy is faster up to $1.72\times$ than \methodAWrandom, due to its carefully ordered processing scheme for sampled hyperwedges. 
        }
	}
\end{figure*}

\begin{figure}[t]
	\centering     
	\includegraphics[width=0.5\textwidth]{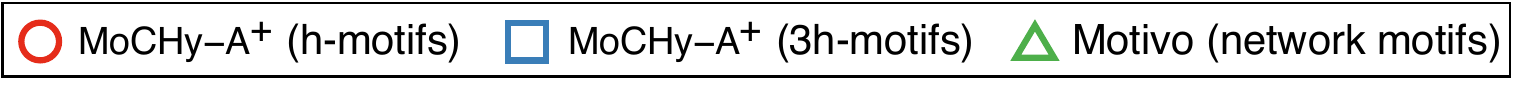} \\
	\subfigure[\label{fig:motivo:time} Elapsed Time]{\includegraphics[width=0.175\textwidth]{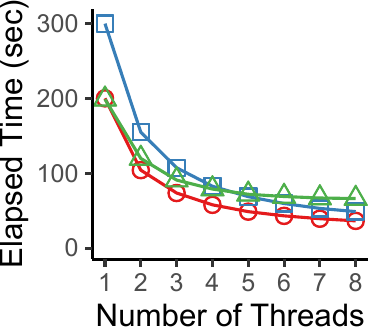}}
        \hspace{10pt}
	\subfigure[\label{fig:motivo:speedup} Speedup]{\includegraphics[width=0.175\textwidth]{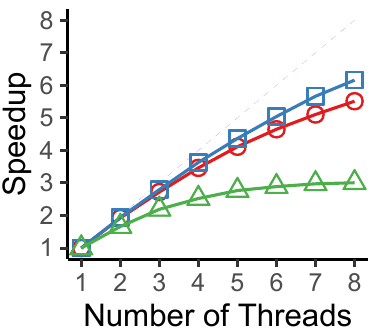}} 
	\caption{\label{fig:motivo} \blue{
 \methodAW for counting the instances of \motifs is consistently faster than Motivo~\cite{bressan2019motivo}, which counts the instances of network motifs of size up to $5$, across different numbers of threads. When counting \tmotifs, \methodAW is faster than Motivo with five or more threads. 
 This is attributed to the fact that \methodAW exhibits better speedup as the number of threads increases compared to Motivo. 
 }}
\end{figure}

\subsection{Q6. Performance of Counting Algorithms}
\label{sec:exp:algo}

We test the speed and accuracy of all versions of \method under various settings.
To this end, we measure elapsed time and relative error defined as
$$\frac{\sum_{t=1}^{26}|\MT-\MBT|}{\sum_{t=1}^{26}\MT} \text{  and  } \frac{\sum_{t=1}^{26}|\MT-\MHT|}{\sum_{t=1}^{26}\MT},$$
for \methodAE and \methodAW, respectively.
\blue{Unless otherwise stated, we use a single thread without the on-the-fly computation scheme.}

\smallsection{Speed and Accuracy:}
In Figure~\ref{fig:tradeoff}, we report the elapsed time and relative error of all versions of \method on the $6$ different datasets where \methodE terminates within a reasonable time.
The numbers of samples in \methodAE and \methodAW are set to $\{2.5\times k: 1\leq k \leq 10\}$ percent of the counts of hyperedges and \hwedges, respectively. \methodAW provides the best trade-off between speed and accuracy. For example, in the \textit{threads-ubuntu} dataset, \methodAW provides $24.6\times$ lower relative error than \methodAE, consistently with our theoretical analysis (see the last paragraph of Section~\ref{sec:method:approx}). Moreover, in the same dataset, \methodAW is $32.5\times$ faster than \methodE with little sacrifice on accuracy.

\smallsection{Effects of the Sample Size on CPs:}
In Figure~\ref{sensitivity_fig}, we report the CPs obtained by \methodAW with different numbers of hyperwedge samples on $3$ datasets.
Even with a smaller number of samples, the CPs are estimated near perfectly.


\smallsection{Parallelization:}
We measure the running times of the proposed method with different numbers of threads on the \textit{threads-ubuntu} \blue{and \textit{coauth-DBLP}} datasets. 
As seen in Figure~\ref{fig:par}, in both datasets, MoCHy achieves significant speedups with multiple threads. 
Specifically, with $8$ threads, \methodE and \methodAW ($r=1M$) achieve speedups of $5.4$ and $6.7$, respectively in \textit{threads-ubuntu} dataset.
\blue{In the \textit{coauth-DBLP} dataset, similar trends can be observed with speedups of $5.0$ and $6.0$ when using \methodAW for $r=1M$ and $r=8M$, respectively.
\methodE cannot be tested on the \textit{coauth-DBLP} dataset since it does not complete within a reasonable duration. 
}

\smallsection{Effects of On-the-fly Computation on Speed:}
We analyze the effects of the on-the-fly computation of \blue{line} graphs (discussed in Section~\ref{sec:method:par_fly}) on the speed of \methodAW under different memory budgets for memoization.
To this end, we use the \textit{coauth-DBLP} dataset, and we set the memory budgets so that up to \{0\%, 0.1\%, 1\%, 10\%, 100\%\} of the edges in the \blue{line} graph can be memoized.
\blue{
When the budget is $0\%$, we compute the neighbors of each hyperedge within the sampled hyperwedge every time, without precomputing or memoizing (a part of) the line graph. 
As shown in Figure~\ref{fig:mem}, both \methodAWrandom and \methodAWgreedy faster than \methodAW without memoization, and their speed tends to improve as the memory budget increases. 
In addition, \methodAWgreedy is consistently faster than \methodAWrandom across different memory budgets.
Specifically, it achieves up to $1.72\times$ reduced runtime, demonstrating the effectiveness of its carefully ordered processing schemes for sampled hyperwedges. 
}

\smallsection{Comparison with Network-motif Counting:}
\blue{
We assess the computational time needed for counting the instances of \motifs, \tmotifs, and network motifs on the \textit{coauth-DBLP} dataset, which is our largest dataset.
We employ \methodAW for both \motifs and \tmotifs, and for network motifs, we utilize Motivo~\cite{bressan2019motivo}, a recently introduced algorithm, to count the instances of network motifs up to size 5.
In all cases, we fix the sample size to 2 million. 
As shown in Figure~\ref{fig:motivo:time}, when counting instances of \motifs, \methodAW is consistently faster than Motivo across different numbers of threads, and the gap increases as the number of threads grows.
When it comes to counting \tmotifs, \methodAW is slower than Motivo with a single thread, but it becomes faster with five or more threads. 
This is attributed to \methodAW achieving significant speedups with more threads, compared to Motivo, as shown in Figure~\ref{fig:motivo:speedup}.
}




\section{Related Work}
\label{sec:related}

We review prior work on network motifs, algorithms for counting them, and hypergraphs.
While the definition of a network motif varies among studies, here we define it as a connected graph composed by a predefined number of nodes. 

\smallsection{Network Motifs}. Network motifs were proposed as a tool for understanding the underlying design principles and capturing the local structural patterns of graphs \cite{holland1977method,shen2002network,milo2002network}. 
The occurrences of motifs in real-world graphs are significantly different from those in random graphs \cite{milo2002network}, and they vary also depending on the domains of graphs \cite{milo2004superfamilies}.
The concept of network motifs has been extended to various types of graphs, including dynamic \cite{paranjape2017motifs} graphs, bipartite graphs \cite{borgatti1997network}, heterogeneous  graphs \cite{rossi2019heterogeneous}, and simplicial complexes \cite{benson2018simplicial,kim2023characterization,preti2022fresco} 
The occurrences of network motifs have been used in a wide range of graph applications: community detection \cite{benson2016higher,tsourakakis2017scalable,yin2017local,li2019edmot}, ranking \cite{zhao2018ranking}, graph embedding \cite{rossi2018higher,yu2019rum}, and graph neural networks \cite{lee2019graph}, to name a few.

\smallsection{Algorithms for Network Motif Counting.}
	We focus on algorithms for counting the occurrences of every network motif whose size is fixed or within a certain range \cite{ahmed2015efficient,ahmed2017graphlet,aslay2018mining,bressan2019motivo,chen2016general,han2016waddling,pinar2017escape}, while
	many are for a specific motif (e.g., the clique of size $3$) \cite{ahmed2017sampling,de2016triest,hu2013massive,hu2014efficient,jha2013space,kim2014opt,ko2018turbograph,pagh2012colorful,sanei2018butterfly,shin2017wrs,shin2020fast,tsourakakis2009doulion,wang2019rept,wang2017approximately}.	
	Given a graph,
	they aim to count rapidly and accurately the instances of motifs with $4$ or more nodes, despite the combinatorial explosion of the instances, using the following techniques:
	\begin{enumerate}
		{\setlength\itemindent{5pt}\item[(1)] {\bf Combinatorics:} For exact counting, combinatorial relations between counts have been employed \cite{ahmed2015efficient,pinar2017escape,paranjape2017motifs}.
		That is, prior studies deduce the counts of the instances of motifs from those of other smaller or equal-size motifs.}
		{\setlength\itemindent{5pt}\item[(2)] {\bf MCMC:} Most approximate algorithms sample motif instances from which they estimate the counts.
		\blue{Based on MCMC sampling, 
		the idea of performing a random walk over instances (i.e, connected subgraphs) until it reaches the stationarity
		 to sample an instance from a fixed probability distribution (e.g., uniform) has been employed \cite{bhuiyan2012guise,chen2016general,han2016waddling,saha2015finding,wang2014efficiently,matsuno2020improved}}.}
		{\setlength\itemindent{5pt}\item[(3)] {\bf Color Coding:} \blue{Instead of MCMC, color coding \cite{alon1995color} can be employed for sampling \cite{bressan2017counting,bressan2019motivo,bressan2021faster}. Specifically, prior studies color each node uniformly at random among $k$ colors, count the number of $k$-trees with $k$ colors rooted at each node, and use them to sample instances from a fixed probability distribution.}}
	\end{enumerate}
	In our problem, which focuses on \motifs with only $3$ hyperedges, sampling instances with fixed probabilities is straightforward without (2) or (3), 
	and the combinatorial relations on graphs in (1) are not applicable.
	In algorithmic aspects, we address the computational challenges discussed at the beginning of Section~\ref{sec:method} by answering
	(a) what to precompute (Section~\ref{sec:method:projection}),
	(b) how to leverage it (Sections~\ref{sec:method:exact} and \ref{sec:method:approx}), and 
	(c) how to prioritize it (Sections~\ref{sec:method:par_fly} and \ref{sec:exp:algo}), with formal analyses (Lemma~\ref{lemma:motif:time}; Theorems~\ref{thm:exact:time},~\ref{thm:sampling_ver1:time}, and \ref{thm:sampling_ver2:time}).


\smallsection{Hypergraph Mining}. 
Hypergraphs naturally represent group interactions  
occurring in a wide range of fields, including computer vision \cite{huang2010image,yu2012adaptive}, bioinformatics \cite{hwang2008learning}, circuit design \cite{karypis1999multilevel,ouyang2002multilevel}, social network analysis \cite{li2013link,yang2019revisiting}, cryptocurrency \cite{kim2022reciprocity}, 
and recommender systems \cite{bu2010music,li2013news}.
\black{There also has been considerable attention on machine learning on hypergraphs, including clustering \cite{agarwal2005beyond,amburg2019hypergraph,karypis2000multilevel,lee2022tri,zhou2007learning}, classification \cite{jiang2019dynamic,lee2022tri,sun2008hypergraph,yu2012adaptive}, hyperedge prediction \cite{benson2018simplicial,yoon2020much,zhang2018beyond,hwang2022ahp}, and anomaly detection \cite{lee2022hashnwalk}.
Recent studies on real-world hypergraphs revealed interesting patterns commonly observed across domains, including (a) global structural properties (e.g., giant connected components and small diameter) \cite{do2020multi,ko2022growth,bu2023hypercore} and their temporal evolution (e.g., shrinking diameter) \cite{ko2022growth}; (b) structural properties of ego-networks (e.g., density and overlapness) \cite{lee2021hyperedges} and their temporal evolution (e.g., decreasing rates of novel nodes) \cite{comrie2021hypergraph}; and (c) temporal patterns regarding arrivals of the same or overlapping hyperedges \cite{benson2018sequences,choo2022persistence,cencetti2021temporal}.}
Notably, Benson et al. \cite{benson2018simplicial} studied how several local features, including edge density, average degree, and probabilities of simplicial closure events for $4$ or less nodes\footnote{The emergence of the first hyperedge that includes a set of nodes each of whose pairs co-appear in previous hyperedges. The configuration of the pairwise co-appearances affects the probability.}, differ across domains.
Our analysis using \motifs is complementary to these approaches in that it (1) captures local patterns systematically without hand-crafted features, (2) captures static patterns without relying on temporal information, and (3) naturally uses hyperedges with any number of nodes without decomposing them into small ones. 

\black{Recently, there has been an extension of hypergraph motifs to temporal hypergraphs, which evolve over time \cite{lee2021thyme+,lee2023temporal}. This extension introduces 96 temporal hypergraph motifs (TH-motifs) that capture not only the overlapping patterns but also the relative order among three connected hyperedges. This extension has been shown to improve the characterization power of \motifs in hypergraph classification and hyperedge prediction tasks. Along with the concept of TH-motifs, a family of algorithms has been proposed for the exact and approximate counting of TH-motifs. The focuses of the algorithms are the dynamic update of the \blue{line graph} over time and the prioritized sampling of time intervals for estimation.
It is important to note that this conceptual and algorithmic extension requires temporal information as input, and is orthogonal to our extension to \tmotifs, which only requires topological information.
}

\section{Conclusions and Future Directions}
\label{sec:summary}

In this section, we present conclusions and future research directions.

\subsection{Conclusions}
In this work, we introduce hypergraph motifs (\motifs), and their extensions, ternary hypergraph motifs (\tmotifs). Using them, we investigate the local structures of $11$ real-world hypergraphs from $5$ different domains. We summarize our contributions as follows:
\begin{itemize}
	\item {\bf Novel Concepts:} We define 26 \motifs, which describe connectivity patterns of three connected hyperedges in a unique and exhaustive way, independently of the sizes of hyperedges (Figure~\ref{motif_three_hyperedges}). \black{We extend this concept to 431 \tmotifs, enabling a more specific differentiation of local structures (Figure~\ref{fig:tmotif_example}).}
	\item {\bf Fast and Provable Algorithms:} We propose $3$ parallel algorithms for (approximately) counting every \motif's instances, and we theoretically and empirically analyze their speed and accuracy. Both approximate algorithms yield unbiased estimates (Theorems~\ref{thm:sampling_ver1:accuracy} and \ref{thm:sampling_ver2:accuracy}), and especially the advanced one is up to $32\times$ faster than the exact algorithm, with little sacrifice on accuracy (Figure~\ref{fig:tradeoff}).
	\item {\bf Discoveries in $11$ Real-world Hypergraphs:} We confirm the efficacy of \motifs and \black{\tmotifs} by showing that local structural patterns captured by them are similar within domains but different across domains (Figures \ref{fig:cp} and \ref{heatmap_fig}).
    \item {\bf Machine Learning Applications:} 
    \black{Our experiments have shown that h-motifs are effective in extracting features for hypergraphs, hyperedges, and nodes in tasks such as hypergraph clustering, hyperedge prediction, and node classification. Furthermore, using 3h-motifs has been demonstrated to improve the feature extraction capabilities, resulting in even better performances on these applications.}
\end{itemize}

\subsection{Future Research Directions}

\blue{Future directions include exploring the practical applications of \motifs and \tmotifs,
motivated by the numerous successful use cases of network motifs in practical applications. 
For example, network motifs have been used in the domain of biology, for identifying crucial interactions between proteins, DNA, and metabolites within biological networks~\cite{yeger2004network,ma2004extended}.
Another compelling example lies within mobile communication networks, where network motifs have been observed to significantly impact the efficiency of information delivery across users~\cite{zhang2022influence}.
In addition, network motifs are proven to be powerful tools for enhancing the performance of other practical applications including anomaly detection~\cite{noble2003graph,yuan2021higher} and recommendation~\cite{gupta2014real,zhao2019motif,sun2022motifs,cui2021motif}.
Furthermore, they are recognized as a useful ingredient when designing graph-related algorithms, such as graph neural networks~\cite{sankar2017motif,yu2022molecular} and graph clustering algorithms~\cite{yin2017local,benson2016higher}.
These examples demonstrate the substantial potential of \motifs in diverse applications, and notably, most of them are less explored in hypergraphs than in graphs.
}

\blue{
In Sections~\ref{sec:exp:characterization_power} and \ref{sec:exp:applications}, we demonstrated the critical role of \motifs and \tmotifs in enhancing performance across hypergraph learning tasks, including node classification and hyperedge prediction
We believe that the considered hypergraph learning tasks can be readily applied to practical applications~\cite{choe2023classification,antelmi2023survey}.
For example, to achieve effective educational management and evaluation, it is important to classify the academic performance (e.g., poor, medium, and excellent) of students (nodes) based on the associations (hyperedges) among them \cite{li2022multi}.
It is also crucial to classify fake news (nodes) based on the patterns of news consumption by users (hyperedges)~\cite{jeong2022nothing}.
Accurately identifying labels for objects (nodes) in images (hyperedges) containing multiple entities is a crucial task in computer vision~\cite{wu2020adahgnn}.
Refer to a survey \cite{antelmi2023survey} for a broader range of applications formulated as node classification on hypergraphs.
In addition, hyperedge prediction can be employed for identifying novel sets (e.g., outfits) of items (e.g., fashion items) to be purchased together~\cite{li2021hyperbolic} (b) suggesting novel combinations of ingredients for recipes~ \cite{zhang2018beyond}, (c) recommending new collaborations among researchers~\cite{liu2018context}, and (d) discovering groups of genes collaborating for specific biological functions \cite{nguyen2022sparse}.
As we demonstrated in Sections~\ref{sec:exp:characterization_power} and \ref{sec:exp:applications}, \motifs and \tmotifs serve as valuable tools for addressing such tasks, indicating their potential applicability in practical scenarios, which we leave for future work.
}

\blue{Other promising research directions include (a) extending \motifs and \tmotifs to complex and rich hypergraphs, such as labeled or heterogeneous hypergraphs, and (b) investigating alternative random hypergraph models for assessing the significance of \motifs and \tmotifs.}

\smallsection{Reproducibility:} The code and datasets used in this work are available at \url{https://github.com/jing9044/MoCHy-with-3h-motif}.


\bibliographystyle{abbrv}
\bibliography{geon}

\appendix

\section{Proof of Theorem~2}
\label{sampling_ver1:proof}
We let $X_{ij}[t]$ be a random variable indicating whether the $i$-th sampled hyperedge (in line~\ref{sampling_ver1:sample} of Algorithm~\ref{sampling_ver1}) is included in the $j$-th instance of \motif $t$ or not. That is, $X_{ij}[t]=1$ if the hyperedge is included in the instance, and $X_{ij}[t]=0$ otherwise. We let $\mBT$ be the number of times that \motif $t$'s instances are counted while processing $s$ sampled hyperedges. That is,
\begin{equation} 
\mBT := \sum_{i=1}^{s} \sum_{j=1}^{\MT}X_{ij}[t]. \label{eq:mbt}
\end{equation}
Then, by lines~\ref{sampling_ver1:scale:start}-\ref{sampling_ver1:scale:end} of Algorithm~\ref{sampling_ver1}, 
\begin{equation}
\MBT=\mBT\cdot \tfrac{|E|}{3s}. \label{eq:mbt:scale}
\end{equation}

\smallsection{Proof of the Bias of $\MBT$ (Eq.~\eqref{sampling_ver1:bias}):}
	Since each \motif instance contains three hyperedges, the probability that each $i$-th sampled hyperedge is contained in each $j$-th instance of \motif $t$ is 
	\begin{equation} 
	P[X_{ij}[t]=1] = \mathbb{E}[X_{ij}[t]]=\tfrac{3}{|E|}. \label{eq:Xij:exp}
	\end{equation}
	From linearity of expectation,
	\begin{equation*} 
	\mathbb{E}[\mBT] = \sum_{i=1}^{s} \sum_{j=1}^{\MT}\mathbb{E}[X_{ij}[t]]= \sum_{i=1}^{s} \sum_{j=1}^{\MT}\frac{3}{|E|} = \frac{3s\cdot \MT}{|E|}. 
	\end{equation*}
	Then, by Eq.~\eqref{eq:mbt:scale}, $\mathbb{E}[\MBT] = \tfrac{|E|}{3s} \cdot \mathbb{E}[\mBT] = \MT$. \hfill $\qed$ \\

\smallsection{Proof of the Variance of $\MBT$ (Eq.~\eqref{sampling_ver1:variance}):} From Eq.~\eqref{eq:Xij:exp} and $X_{ij}[t]=X_{ij}[t]^2$, the variance of $X_{ij}[t]$ is
	\begin{equation}
		\mathbb{V}\mathrm{ar}[X_{ij}[t]] = \mathbb{E}[X_{ij}[t]^2] - \mathbb{E}[X_{ij}[t]]^2 = \frac{3}{|E|} - \frac{9}{|E|^2}. \label{eq:Xij:var}
	\end{equation}
	Consider the covariance between $X_{ij}[t]$ and $X_{i'j'}[t]$. 
	If $i = i'$, then from Eq.~\eqref{eq:Xij:exp},
	\begin{align}
			& \mathbb{C}\mathrm{ov}(X_{ij}[t], X_{i'j'}[t]) = \mathbb{E}[X_{ij}[t]\cdot X_{ij'}[t]]-\mathbb{E}[X_{ij}[t]]\mathbb{E}[X_{ij'}[t]] \nonumber\\
			& = P[X_{ij}[t]=1, X_{ij'}[t]=1] -\mathbb{E}[X_{ij}[t]]\mathbb{E}[X_{ij'}[t]] \nonumber\\
			& = P[X_{ij}[t]=1]\cdot P[X_{ij'}[t]=1|X_{ij}[t]=1] \nonumber\\
			& \hspace{10pt} - \mathbb{E}[X_{ij}[t]]\mathbb{E}[X_{ij'}[t]] \nonumber\\
			& = \frac{3}{|E|}\cdot \frac{l_{jj'}}{3} - \frac{9}{|E|^2} = \frac{l_{jj'}}{|E|} - \frac{9}{|E|^2}, \label{eq:Xij:cov}
			\vspace{-1mm}
	\end{align}
	where $l_{jj'}$ is the number of hyperedges that the $j$-th and $j'$-th instances share.	
	However, since hyperedges are sampled independently (specifically, uniformly at random with replacement), if $i \neq i'$, then $\mathbb{C}\mathrm{ov}(X_{ij}[t], X_{i'j'}[t])=0$. 
	This observation, Eq.~\eqref{eq:mbt}, Eq.~\eqref{eq:Xij:var}, and Eq.~\eqref{eq:Xij:cov} imply 
	\begin{align*}
			&\mathbb{V}\mathrm{ar}[\mBT] = \mathbb{V}\mathrm{ar}[\sum_{i=1}^{s} \sum_{j=1}^{\MT}X_{ij}[t]] \nonumber\\
			& = \sum_{i=1}^{s} \sum_{j=1}^{\MT} \mathbb{V}\mathrm{ar}[X_{ij}[t]] + \sum_{i=1}^{s} \sum_{j \neq j'}\mathbb{C}\mathrm{ov}(X_{ij}[t], X_{ij'}[t]) \nonumber\\
			&=s \cdot \MT \cdot  (\frac{3}{|E|} - \frac{9}{|E|^2}) + s \sum_{l=0}^{2} p_l[t] (\frac{l}{|E|} - \frac{9}{|E|^2}),
	\end{align*}
	where $p_l[t]$ is the number of pairs of \motif $t$'s instances sharing $l$ hyperedges.
	This and Eq.~\eqref{eq:mbt:scale} imply Eq.~\eqref{sampling_ver1:variance}. $\qed$

\section{Proof of Theorem~4}\label{concentration_1:proof}
\blue{Let $\tau:=M[t]\cdot\epsilon$ and $X_{ij}[t]$ be a random variable indicating whether the $i$-th sampled hyperedge (in line~\ref{sampling_ver1:sample} of Algorithm~\ref{sampling_ver1}) is included in the $j$-th instance of \motif $t$ or not. Also, let $\tilde{X}_{i}[t]=\frac{|E|}{3s}\sum_{j=1}^{M[t]}X_{ij}[t]$ (where the sum indicates the number of instances of h-motif t that contains i-th sampled hyperedge) so that \(\bar{M}[t]=\sum_{i=1}^s \tilde{X}_{i}[t]\). Note that \(0\leq \tilde{X}_i[t]\leq \frac{|E|d_{\max}[t]^2}{3s}\) holds for every \(i\).
     Since $\tilde{X}_{1}[t], \tilde{X}_2[t], \dots, \tilde{X}_s[t]$ are independent random variables and \(\mathbb{E}[\bar{M}[t]]= M[t]\) (Theorem~\ref{thm:sampling_ver1:accuracy}), we can apply Hoeffding's inequality (Lemma~\ref{lem:hoeff}):
    \begin{align*}
        \Pr&[|\bar{M}[t]-M[t]|\geq M[t]\cdot \epsilon]
        \\&\leq 2\exp\left(-\frac{2\epsilon^2M[t]^2}{s\cdot (d_{\max}[t]^2|E|/3s)^2}\right)
        \\&\leq 2\exp(-\frac{18s\epsilon^2M[t]^2}{|E|^2d_{\max}[t]^4})\leq \delta, 
    \end{align*}
    which implies the condition for the sample size. \hfill $\qed$.
    }

\section{Proof of Theorem 5}
\label{sampling_ver2:proof}
A random variable $Y_{ij}[t]$ denotes whether the $i$-th sampled \hwedge (in line~\ref{sampling_ver2:sample} of Algorithm~\ref{sampling_ver2}) is included in the $j$-th instance of \motif $t$.
That is, $Y_{ij}[t]=1$ if the sampled \hwedge is included in the instance, and $Y_{ij}[t]=0$ otherwise.
We let $\mHT$ be the number of times that \motif $t$' instances are counted while processing $r$ sampled \hwedges. That is,
\begin{equation}
\mHT := \sum_{i=1}^{r} \sum_{j=1}^{\MT}Y_{ij}[t] \label{eq:mht}
\end{equation}
We use $w[t]$ to denote the number of \hwedges included in each instance of \motif $t$.
That is,  
\begin{equation}\label{eq:wt}
w[t]:=\begin{cases}
2 & \text{if \motif $t$ is open,} \\
3 & \text{if \motif $t$ is closed.}
\end{cases}
\end{equation}

Then, by lines~\ref{sampling_ver2:rescale:start}-\ref{sampling_ver2:rescale:end} of Algorithm~\ref{sampling_ver2}, 
\begin{equation}
\MHT=
\mHT\cdot \tfrac{1}{w[t]}\cdot \tfrac{|\wedge|}{r}. \label{eq:mht:scale}
\end{equation}


\smallsection{Proof of the Bias of $\MHT$ (Eq.~\eqref{sampling_ver2:bias}):} Since each instance of \motif $t$ contains $w[t]$ hyperwedges, the probability that each $i$-th sampled \hwedge is contained in each $j$-th instance of \motif $t$ is 
	\begin{equation}
	P[Y_{ij}[t]=1]=\mathbb{E}[Y_{ij}[t]]=\tfrac{w[t]}{|\wedge|}. \label{eq:Yij:exp}
	\end{equation}
	From linearity of expectation,
	\begin{equation*} 
		\mathbb{E}[\mHT] = \sum_{i=1}^{r} \sum_{j=1}^{\MT}\mathbb{E}[Y_{ij}[t]]= \sum_{i=1}^{r} \sum_{j=1}^{\MT}\frac{w[t]}{|\wedge|} = \frac{w[t]\cdot r\cdot \MT}{|\wedge|}.
	\end{equation*}
	Then, by Eq.~\eqref{eq:mht:scale}, $\mathbb{E}[\MHT] =\mathbb{E}[\mHT]\cdot\tfrac{1}{w[t]}\cdot \tfrac{|\wedge|}{r}=\MT$. $\qed$ \\

\smallsection{Proof of the Variance of $\MHT$ (Eq.~\eqref{sampling_ver2:variance:closed} and Eq.~\eqref{sampling_ver2:variance:open}:} \\
	From Eq.~\eqref{eq:Yij:exp} and $Y_{ij}[t]=Y_{ij}[t]^2$, the variance of each random variable $Y_{ij}[t]$ is
	\begin{equation}
	\mathbb{V}\mathrm{ar}[Y_{ij}[t]] = \mathbb{E}[Y_{ij}[t]^2] - \mathbb{E}[Y_{ij}[t]]^2 = \frac{w[t]}{|\wedge|} - \frac{w[t]^2}{|\wedge|^2}. \label{eq:Yij:var}
	\end{equation}
	Consider the covariance between $Y_{ij}[t]$ and $Y_{i'j'}[t]$.
	If $i = i'$, then from Eq.~\eqref{eq:Yij:exp},
	\begin{align}
	& \mathbb{C}\mathrm{ov}(Y_{ij}[t], Y_{i'j'}[t])  = \mathbb{E}[Y_{ij}[t]\cdot Y_{ij'}[t]]-\mathbb{E}[Y_{ij}[t]]\mathbb{E}[Y_{ij'}[t]] \nonumber\\
	& = P[Y_{ij}[t]=1, Y_{ij'}[t]=1] -\mathbb{E}[Y_{ij}[t]]\mathbb{E}[Y_{ij'}[t]] \nonumber\\
	& = P[Y_{ij}[t]=1]\cdot P[Y_{ij'}[t]=1|Y_{ij}[t]=1] -\mathbb{E}[Y_{ij}[t]]\mathbb{E}[Y_{ij'}[t]] \nonumber\\
	& = \frac{w[t]}{|\wedge|}\cdot \frac{n_{jj'}}{w[t]} - \frac{w[t]^2}{|\wedge|^2} = \frac{n_{jj'}}{|\wedge|} - \frac{w[t]^2}{|\wedge|^2} \label{eq:Yij:cov},
	\end{align}
	where $n_{jj'}$ is the number of \hwedges that the $j$-th and $j'$-th instances share.	
	However, 	
	since \hwedges are sampled independently (specifically, uniformly at random with replacement), if $i\neq i'$, then $\mathbb{C}\mathrm{ov}(Y_{ij}[t], Y_{i'j'}[t])=0$. This observation, Eq.~\eqref{eq:mht}, Eq.~\eqref{eq:Yij:var}, and Eq.~\eqref{eq:Yij:cov} imply 
	\begin{align*}
		&\mathbb{V}\mathrm{ar}[\mHT] = \mathbb{V}\mathrm{ar}[\sum_{i=1}^{r} \sum_{j=1}^{\MT}Y_{ij}[t]] \nonumber\\
		& = \sum_{i=1}^{r} \sum_{j=1}^{\MT} \mathbb{V}\mathrm{ar}[Y_{ij}[t]] + \sum_{i=1}^{r} \sum_{j \neq j'}\mathbb{C}\mathrm{ov}(Y_{ij}[t], Y_{ij'}[t]) \nonumber\\
		& = r \cdot \MT \cdot(\frac{w[t]}{|\wedge|} - \frac{w[t]^2}{|\wedge|^2}) + r \sum_{n=0}^{1}q_n[t] \cdot(\frac{n}{|\wedge|} - \frac{w[t]^2}{|\wedge|^2}), \label{eq:mht:var}
	\end{align*}
	where $q_n[t]$ is the number of pairs of \motif $t$'s instances that share $n$ \hwedges. 
	This and Eq.~\eqref{eq:mht:scale} imply Eq.~\eqref{sampling_ver2:variance:closed} and Eq.~\eqref{sampling_ver2:variance:open}.	\hfill $\qed$

\section{Proof of Theorem 7}~\label{concentration_2:proof}

\blue{
Let $\tau:=M[t]\cdot\epsilon$ and $Y_{ij}[t]$ denotes whether the $i$-th sampled \hwedge (in line~\ref{sampling_ver2:sample} of Algorithm~\ref{sampling_ver2}) is included in the $j$-th instance of \motif $t$. Also, let $\tilde{Y}_{i}[t]=\frac{|\Lambda|}{w[t]s}\sum_{j=1}^{M[t]}Y_{ij}[t]$ with $w[t]$ defined in Eq.~\eqref{eq:wt}, where the sum indicates the number of instances of h-motif $t$ that contains the $i$-th sampled hyperwedge, so that \(\hat{M}[t]=\sum_{i=1}^r \tilde{Y}_{i}[t]\). Then, \(0\leq \tilde{Y}_i[t]\leq \frac{|\Lambda|d_{\max}[t]}{w[t]r}\) holds for every \(i\).
    Since $\tilde{Y}_{1}[t], \tilde{Y}_2[t], \dots, \tilde{Y}_r[t]$ are independent random variables and \(\mathbb{E}[\hat{M}[t]]= M[t]\) (in Theorem~\ref{thm:sampling_ver2:accuracy}), we can apply Hoeffding's inequality:
    \begin{align*}
    \Pr&\left[|\hat{M}[t]-M[t]|\geq M[t]\cdot \epsilon\right]
    \\&
    \leq 2\exp\left(-\frac{2\epsilon^2M[t]^2}{r\cdot (d_{\max}[t]|\Lambda|/w[t]r)^2}\right)
    \\&\leq 2\exp(-\frac{2\cdot w[t]^2r\epsilon^2M[t]^2}{|\Lambda|^2d_{\max}[t]^2})\leq \delta, \qedhere
    \end{align*}
     which implies the condition for the sample size.\hfill $\qed$.
    }

\begin{table*}[t]
	\begin{center}
		\scriptsize
		\caption{\label{global_table} \black{Global structural properties of real-world hypergraphs.}}
		\scalebox{1.1}{
			\begin{tabular}{l|cc|cc|cc|c|c}
				\toprule
				& \multicolumn{2}{c|}{Size} & \multicolumn{2}{c|}{Average Degree} & \multicolumn{2}{c|}{Clustering Coeff.} & \multirow{2}{*}{\shortstack{Effective\\ Diamter}} & \multirow{2}{*}{\shortstack{\# of\\ \Motifs}}\\
				\textbf{Dataset} & Node & Hyperedge & Node & Hyperedge & Node & Hyperedge & &\\
				\midrule
				coauth-DBLP & 1,924,991 & 2,466,792 & 4.013 & 3.132 & 0.296 & 0.225 & 6.590 & 26.3B\\
				coauth-geology& 1,256,385 & 1,203,895 & 3.015 & 3.146 & 0.382 & 0.208 & 6.809 & 6B\\
				coauth-history& 1,014,734 & 895,439 & 1.354 & 1.535 & 0.575 & 0.331 & 12.946 & 83.2M\\
				\midrule
				contact-primary  & 242 & 12,704 & 126.9 & 2.418 & 0.011 & 0.270 & 1.888 & 617M\\
				contact-high & 327 & 7,818 & 55.6 & 2.326 & 0.018 & 0.286 & 2.564 & 69.7M\\
				\midrule
				email-Enron & 143 & 1,512 & 31.8 & 3.009 & 0.064 & 0.249 & 2.583 & 9.6M\\
				email-EU & 998 & 25,027 & 85.9 & 3.425 & 0.030 & 0.207 & 2.836 & 7B\\
				\midrule
				tags-ubuntu & 3,029 & 147,222 & 164.8 & 3.391 & 0.007 & 0.182 & 2.262 & 4.3T\\
				tags-math & 1,629 & 170,476 & 364.1 & 3.479 & 0.005 & 0.180 & 2.189 & 9.2T\\
				\midrule
				threads-ubuntu & 125,602 & 166,999 & 2.538 & 1.908 & 0.160 & 0.301 & 4.657 & 11.4B\\
				threads-math & 176,445 & 595,749 & 8.261 & 2.446 & 0.033 & 0.250 & 3.662 & 2.2T\\
				\bottomrule
			\end{tabular}
		}
	\end{center}
\end{table*}

\section{Correlation between Global Structural Properties and \Motifs} \label{sec:addExp-networkProperties}
\label{appendix:addExp-networkProperties}

\black{
We investigate the correlation between global structural properties and the hypergraph motifs (\motifs). We consider eight global properties, categorized into five as follows.
\begin{itemize}
	\item \textbf{Size}: We consider (1) the number of nodes and (2) the number of hyperedges.
	\item \textbf{Average degree}: The degree of a node is defined as the number of hyperedges that contain the node. The degree of a hyperedge is defined as the number of nodes that the hyperedge contains. 
	We consider (3) the average degree of all nodes and (4) the average degree of all hyperedges.
	\item \textbf{Clustering coefficients}: 
	In \cite{gallagher2013clustering,latapy2008basic}, the clustering coefficient of two nodes $u\neq v$ is defined as:
	\begin{equation*}
	CC(u,v)=\frac{|E_u \cap E_v|}{|E_u \cup E_v|},
	\end{equation*}
	where $E_u$ is the set of hyperedges that contain the node $u$. Then, the average clustering coefficient of each node is defined as the mean of the clustering coefficients of it and each of its neighbors, i.e.,
	\begin{equation*}
	CC(u)=\frac{\sum_{v \in N_u}CC(u,v)}{|N_u|},
	\end{equation*}
	where $N_u$ is the set of neighbors of node $u$ (i.e., $\{v\in V: E_u \cap E_v \neq \emptyset \}$).
	Similarly, we can define the clustering coefficients of hyperedge pairs and those of hyperedges.
	We consider (5) the mean of the clustering coefficients of all nodes and (6) the mean of the clustering coefficients of all hyperedges.
	\item \textbf{Effective diameter}: The diameter of a graph is defined as the maximum length of the shortest paths in the graph. 
	Similarly, we define (7) the diameter of a hypergraph as the maximum length of the shortest hyperpaths in the hypergraph. 
	In hypergraphs, a hyperpath between two nodes $v_1$ and $v_k$ is a sequence of distinct nodes and hyperedges $v_1$, $e_1$, $v_2$, $e_2$, ...,$e_{k-1}$, $v_k$ where there $v_i \in e_i$ and $v_{i+1} \in e_i$ for $1 \leq i < k$~\cite{zhou2007learning}. The length of a hyperpath is the number of intermediate hyperedges contained in the hyperpath.
	In hypergraphs that have disconnected components, the diameter is not computable. Thus, we compute effective diameter~\cite{palmer2002anf}, which is the 90th percentile of the distribution of shortest-path lengths. We use \texttt{GetAnfEffDiam} function provided by~\cite{leskovec2016snap}.
	\item \textbf{Total number of \motifs}: We consider (8) the total number of hypergraph motifs computed using \method.
\end{itemize}
The statistics of the global properties of hypergraphs are provided in Table~\ref{global_table}. 
We compute the Pearson correlation coefficients between each \motif's CP value and eight different global properties. 
As seen in Table~\ref{cor_cp_table}, different \motifs show different correlations with each global property. 
For example, \motif 13 shows the lowest correlation with the node size ($-0.558$), the node-based clustering coefficient ($-0.650$), the hyperedge-based clustering coefficient ($-0.499$), and the effective diameter ($-0.620$), but the highest correlation with the average node degree ($0.523$) and the average hyperedge degree ($0.522$). 
In addition, some global properties are strongly correlated with \motifs, while some are weakly correlated. 
For example, there are many \motifs highly correlated with the node size, the node-based clustering coefficient, and the effective diameter, while most of the \motifs have near-zero correlations with the average hyperedge degree and the hyperedge-based clustering coefficient.}


\begin{table*}[t]
	\begin{center}
		\scriptsize
		\caption{\label{cor_cp_table} \black{Correlation between global structural properties and characteristic profiles (CPs). Strong positive or negative correlations ($\geq 0.5$) are colored as \textcolor{red}{red} (positive) or \textcolor{blue}{blue} (negative).}}
		\scalebox{1.1}{
			\begin{tabular}{c|cc|cc|cc|c|c}
				\toprule
				& \multicolumn{2}{c|}{Size} & \multicolumn{2}{c|}{Average Degree} & \multicolumn{2}{c|}{Clustering Coeff.} & \multirow{2}{*}{\shortstack{Effective\\ Diamter}} & \multirow{2}{*}{\shortstack{\# of\\ \Motifs}}\\
				\textbf{\motif} & Node & Hyperedge & Node & Hyperedge & Node & Hyperedge & &\\
				\midrule
				1 & \textcolor{red}{$+0.787$} & \textcolor{red}{$+0.648$} & $-0.497$ & $+0.003$ & \textcolor{red}{$+0.751$} & $+0.115$ & \textcolor{red}{$+0.626$} & \textcolor{blue}{$-0.530$} \\
				2 & $-0.169$ & $-0.110$ & $-0.023$ & \textcolor{blue}{$-0.645$} & $-0.027$ & $+0.479$ & $+0.099$ & $+0.101$ \\
				3 & \textcolor{red}{$+0.710$} & \textcolor{red}{$+0.587$} & $-0.492$ & $+0.065$ & \textcolor{red}{$+0.718$} & $+0.067$ & \textcolor{red}{$+0.633$} & \textcolor{blue}{$-0.501$} \\
				4 & \textcolor{red}{$+0.901$} & \textcolor{red}{$+0.783$} & $-0.480$ & $-0.071$ & \textcolor{red}{$+0.891$} & $+0.122$ & \textcolor{red}{$+0.806$} & $-0.410$ \\
				5 & $+0.469$ & $+0.315$ & \textcolor{blue}{$-0.506$} & $-0.126$ & $+0.498$ & $+0.316$ & $+0.406$ & \textcolor{blue}{$-0.730$} \\
				6 & \textcolor{red}{$+0.877$} & \textcolor{red}{$+0.762$} & $-0.311$ & $-0.161$ & \textcolor{red}{$+0.872$} & $+0.151$ & \textcolor{red}{$+0.801$} & $-0.249$ \\
				7 & $+0.229$ & $+0.071$ & $-0.436$ & $-0.188$ & $+0.313$ & $+0.394$ & $+0.215$ & \textcolor{blue}{$-0.744$} \\
				8 & $+0.388$ & $+0.237$ & \textcolor{blue}{$-0.561$} & $-0.154$ & $+0.447$ & $+0.359$ & $+0.361$ & \textcolor{blue}{$-0.772$} \\
				9 & $+0.138$ & $-0.023$ & $-0.266$ & $-0.081$ & $+0.229$ & $+0.293$ & $+0.124$ & \textcolor{blue}{$-0.606$} \\
				10 & $+0.444$ & $+0.288$ & \textcolor{blue}{$-0.584$} & $-0.178$ & \textcolor{red}{$+0.537$} & $+0.363$ & $+0.447$ & \textcolor{blue}{$-0.733$} \\
				11 & $+0.123$ & $-0.041$ & $+0.023$ & $+0.098$ & $+0.200$ & $+0.077$ & $+0.087$ & $-0.329$ \\
				12 & \textcolor{red}{$+0.611$} & \textcolor{red}{$+0.540$} & $-0.246$ & $-0.096$ & \textcolor{red}{$+0.676$} & $+0.050$ & \textcolor{red}{$+0.643$} & $-0.061$ \\
				13 & \textcolor{blue}{$-0.558$} & $-0.485$ & \textcolor{red}{$+0.523$} & \textcolor{red}{$+0.552$} & \textcolor{blue}{$-0.650$} & $-0.499$ & \textcolor{blue}{$-0.620$} & $+0.413$ \\
				14 & $-0.490$ & \textcolor{blue}{$-0.567$} & $+0.232$ & $+0.156$ & $-0.351$ & $-0.010$ & $-0.417$ & $-0.064$ \\
				15 & $+0.166$ & $+0.042$ & $-0.148$ & $+0.080$ & $+0.299$ & $+0.045$ & $+0.224$ & $-0.224$ \\
				16 & $+0.481$ & $+0.464$ & $-0.184$ & $+0.028$ & \textcolor{red}{$+0.531$} & $-0.110$ & \textcolor{red}{$+0.532$} & $+0.106$ \\
				17 & \textcolor{red}{$+0.754$} & \textcolor{red}{$+0.608$} & \textcolor{blue}{$-0.501$} & $-0.127$ & \textcolor{red}{$+0.828$} & $+0.231$ & \textcolor{red}{$+0.741$} & $-0.481$ \\
				18 & $+0.442$ & $+0.298$ & $-0.444$ & $+0.031$ & \textcolor{red}{$+0.551$} & $+0.149$ & $+0.473$ & \textcolor{blue}{$-0.532$} \\
				19 & \textcolor{red}{$+0.623$} & $+0.471$ & \textcolor{blue}{$-0.507$} & $-0.089$ & \textcolor{red}{$+0.677$} & $+0.251$ & \textcolor{red}{$+0.584$} & \textcolor{blue}{$-0.596$} \\
				20 & \textcolor{red}{$+0.628$} & $+0.473$ & $-0.483$ & $-0.001$ & \textcolor{red}{$+0.724$} & $+0.136$ & \textcolor{red}{$+0.633$} & \textcolor{blue}{$-0.514$} \\
				21 & $+0.089$ & $-0.047$ & $-0.315$ & $-0.136$ & $+0.141$ & $+0.346$ & $+0.074$ & \textcolor{blue}{$-0.676$} \\
				22 & \textcolor{red}{$+0.554$} & \textcolor{red}{$+0.522$} & $-0.157$ & $-0.299$ & \textcolor{red}{$+0.724$} & $+0.127$ & \textcolor{red}{$+0.763$} & $+0.154$ \\
				23 & $+0.332$ & $+0.181$ & $-0.434$ & $-0.130$ & $+0.365$ & $+0.334$ & $+0.273$ & \textcolor{blue}{$-0.722$} \\
				24 & $+0.428$ & $+0.275$ & $-0.492$ & $-0.147$ & $+0.459$ & $+0.341$ & $+0.368$ & \textcolor{blue}{$-0.737$} \\
				25 & \textcolor{red}{$+0.747$} & \textcolor{red}{$+0.593$} & \textcolor{blue}{$-0.508$} & $-0.151$ & \textcolor{red}{$+0.758$} & $+0.269$ & \textcolor{red}{$+0.666$} & \textcolor{blue}{$-0.610$} \\
				26 & \textcolor{red}{$+0.883$} & \textcolor{red}{$+0.812$} & $-0.330$ & $-0.203$ & \textcolor{red}{$+0.877$} & $+0.118$ & \textcolor{red}{$+0.830$} & $-0.129$ \\
				\bottomrule
			\end{tabular}
		}
	\end{center}
\end{table*}


\begin{algorithm}[t]
	\setstretch{1.25}
	\small
	\caption{\small Chung-Lu Model~\cite{aksoy2017measuring}}
	\label{randomized_hypergraph_alg}
	\SetAlgoLined
	\SetKwInOut{Input}{Input}
	\SetKwInOut{Output}{Output}
	\nonl \hspace{-4mm} \Input{bipartite graph $G'$: $G'=(V\cup E, E')$}
	\nonl \hspace{-4mm} \Output{randomized bipartite graph $\tilde{G}=(V\cup E,\tilde{E})$}
	${\{d_{i}^V\}}_{i=1}^{|V|} \leftarrow$ degrees of nodes in $V$\\
	${\{d_{j}^E\}}_{j=1}^{|E|} \leftarrow$ degrees of nodes in $E$\\
    $\tilde{E} \leftarrow$ an empty multiset \hspace{5mm} \\
	\For{$k=1,...,|E'|$ \label{randomized_hypergraph:loop1}}{  
		$v_i \leftarrow$ sample from $V$ with probability $\propto$ degree \label{random_hyper_alg:select_node}\\
		$e_j \leftarrow$ sample from $E$ with probability $\propto$ degree \label{random_hyper_alg:select_hyperedge}\\
        add $(v_i,e_j)$ to $\tilde{E}$\\
	}
	\textbf{return} $\tilde{G}=(\tilde{V},\tilde{E})$
\end{algorithm}
\begin{algorithm}[t]
	\setstretch{1.25}
	\small
	\caption{\small Transformation from Incidence Graphs to Hypergraphs}
	\label{alg:transform}
	\SetAlgoLined
	\SetKwInOut{Input}{Input}
	\SetKwInOut{Output}{Output}
	\nonl \hspace{-4mm} \Input{incidence graph $\tilde{G}=(V\cup E, \tilde{E})$}
	\nonl \hspace{-4mm} \Output{hypergraph $\hat{G}=(V,\hat{E})$}

        \For{\textbf{each} $\tilde{e}_j\in \tilde{E}$}{  
		$\hat{e}_j\leftarrow  \varnothing$
	}
        \For{\textbf{each} $(v_i,e_j) \in \tilde{E}$}{  
		$\hat{e}_j \leftarrow \hat{e}_j \cup \{v_i\}$\\ \label{alg:transform:set}
	}
        $\hat{E}\leftarrow\varnothing$\\
        \For{\textbf{each} $e_j\in E$}{  
		\If{$\hat{e}_j\neq  \varnothing$}{
              $\hat{E}\leftarrow \hat{E}\cup\{\hat{e}_j\}$
        }
	}
	\textbf{return} $\hat{G}=(V,\hat{E})$
\end{algorithm}

\section{Randomized Hypergraphs} \label{appendix:random}
\blue{To create randomized hypergraphs, we extend the Chung-Lu model~\cite{aksoy2017measuring}, which is a configuration model designed to generate random bipartite graphs, to hypergraphs.
Algorithm~\ref{randomized_hypergraph_alg} provides the pseudocode for the Chung-Lu model, where $V$ and $E$ represent two sets of nodes in both the input and output bipartite graphs.
Note that the output bipartite graph $\tilde{G}$ allows parallel edges (i.e. $\tilde{E}$ is a multiset).
The Chung-Lu model preserves the degrees of nodes in expectation, i.e., the following equalities hold:
\begin{equation}
    \mathbb{E}[\tilde{d}_i^V] = d_i^V, \ \mathbb{E}[\tilde{d}_j^E] = d_j^E, \label{eq:degree_preserve}
\end{equation}
where $d_i^V$ and $d_j^E$  denote the degrees of nodes $v_i\in V$ and $e_j\in E$ in the input graph; and
$\tilde{d}_i^V$ and $\tilde{d}_j^E$ denote the degrees (weighted by edge multiplicity) of nodes $v_i\in V$ and $e_j\in E$ in the output graph $\tilde{G}$, respectively.
Moreover, it can be shown that the obtained random graph is a uniform sample.
The Chung-Lu model is widely used due to efficiency because configuration models that preserve node degrees exactly exist are slower in generating a uniform sample.}

\blue{In order to create a hypergraph randomized from the input hypergraph $G=(V,E)$, we first apply this model to its incidence graph $G'$ to obtain a random bipartite graph $\tilde{G}$, and then we transform $\tilde{G}$ into a random hypergraph $\hat{G}$, as described in Algorithm~\ref{alg:transform}. Note that the multiplicity of edges in $\tilde{E}$ does not affect the output hypergraph since each hyperedge $\hat{e}_j$ in line~\ref{alg:transform:set} is a set, instead of a multiset.
If $G'$ does not have parallel edges (although they are allowed), Eq.~\eqref{eq:degree_preserve} implies the following equalities:
\begin{equation}
    \mathbb{E}[\hat{d}_i] = d_i, \ 
    \mathbb{E}[|\hat{e}_j|] = |e_j|,
\end{equation}
where $d_i$ and $|e_j|$ denote the degree of $v_i\in V$ and the size of $e_j\in E$ in the input hypergraph $G$; and
$\hat{d}_i$ and $|\hat{e}_j|$ denote those in the random hypergraph $\hat{G}$.
That is, both the node degree distribution and the hyperedge size distribution of $G$ are preserved in expectation in $\hat{G}$.
While these properties do not hold theoretically when $G'$ has parallel edges, empirically both distributions are  preserved accurately for real-world hypergraphs, as shown in Figure~\ref{deg_fig}.
We present
additional basic statistics for the generated hypergraphs in Table~\ref{dataset_random_table}.}

\begin{table}[t]
	\begin{center}
		\scriptsize
		\caption{\label{dataset_random_table}Statistics of generated random hypergraphs. Isolated nodes are ignored. The average values of $5$ random hypergraphs are reported.}
		\scalebox{1.1}{
			\begin{tabular}{l|cc}
				\toprule
				\textbf{Dataset} & $|V|$ & $|E|$\\
				\midrule
				coauth-DBLP & 1,483,582 & 2,221,017\\
				coauth-geology& 927,693 & 1,042,749\\
				coauth-history& 685,637 & 634,385\\
				\midrule
				contact-primary  & 242 & 11,416\\
				contact-high & 327 & 6,954\\
				\midrule
				email-Enron & 142 & 1,365\\
				email-EU & 964 & 22,764\\
				\midrule
				tags-ubuntu & 2,974 & 139,170\\
				tags-math & 1,606 & 162,260\\
				\midrule
				threads-ubuntu & 88,225 & 135,076\\
				threads-math & 134,722 & 522,149\\
				\bottomrule
			\end{tabular}
		}
	\end{center}
\end{table}

\begin{table*}[t]
	\begin{center}
		\scriptsize
		\caption{\label{hmotif_cnt_table}The number of \motifs with $k=2$ to $6$ hyperedges. For $k=2$ to $5$-hyperedge \motifs, the number of cases are counted via enumeration. For $k=6$, the number of cases is inferred.}
		\scalebox{1.1}{
			\begin{tabular}{l|ccccc}
				\toprule
				\textbf{Conditions} & $2$ Hyperedges & $3$ Hyperedges & $4$ Hyperedges & $5$ Hyperedges & $6$ Hyperedges\\
				\midrule
				$2^{2^k-1}$ & 8 & 128 & 32,768 & 2,147,483,648 & 9,223,372,036,854,775,808\\
				\midrule
				$2^{2^k-1} - |P_1^{(k)}|$ & 6 & 40 & 1,992 & 18,666,624 & 12,813,206,169,137,152\\
				$2^{2^k-1} - |P_1^{(k)} \cup P_2^{(k)}|$ & 3 & 30 & 1,912 & 18,662,590 & 12,813,206,131,799,685\\
				$2^{2^k-1} - |P_1^{(k)} \cup P_2^{(k)} \cup P_3^{(k)} |$ & 2 & 26 & 1,853 & 18,656,322 & ?\\
				\midrule
				Actual Count & 2 & 26 & 1,853 & 18,656,322 & ?\\
				\bottomrule
			\end{tabular}
		}
	\end{center}
\end{table*}

\section{\black{Extensions beyond Three Hyperedges}} \label{appendix:hmotif:generalized}
\black{The concept of \motifs can be generalized to $k$ hyperedges.
Defining \motifs describing the connectivity pattern of $k$ connected hyperedges requires the following conditions:
\begin{itemize}
	\item \textbf{No symmetric patterns}: Each connectivity pattern should be described by exactly one \motif.
	\item \textbf{No disconnected hyperedges}: All $k$ hyperedges should be connected.
	\item \textbf{No duplicated hyperedges}: \blue{All $k$ hyperedges should be distinct.}
\end{itemize}}

\black{Given a set $\egeneral$ of $k$ connected hyperedges, a node can be included in $1$ (e.g., $e_{s_1} \setminus e_{s_2} \setminus ... \setminus e_{s_k}$) to $k$ ($e_{s_1} \cap e_{s_2} \cap ... \cap e_{s_k}$) hyperedges at the same time. 
Thus, its connectivity pattern can be described by the emptiness of $2^k-1$ sets, which can be expressed as a binary vector of size $2^k-1$. 
Therefore, there can be $2^{2^k-1}$ \motifs, while only a subset of them remains once we exclude those violating any of the three conditions above. 
As a result, the number of remaining \motifs is 
\begin{equation}
2^{2^k-1} - |P_1^{(k)} \cup P_2^{(k)} \cup P_3^{(k)}|, \label{eq:motif_cnt}
\end{equation}
where $P_1^{(k)}$, $P_2^{(k)}$, and $P_3^{(k)}$ represent the sets of \blue{(1) the $k$-hyperedge patterns that are symmetric, (2) those that cannot be obtained from connected hyperedges, and (3) those that cannot be obtained from distinct hyperedges, respectively.}
For $P_1^{(k)}$, among the binary representations that imply the same pattern, all but the lexicographically smallest one are removed. 
That is, $P_1^{(k)}$ represents a set of patterns that are \textit{not} lexicographically smallest among those representing the same pattern.}

\smallsection{\black{\Motifs with Two Hyperedges:}}
\black{Given a pair of two connected hyperedges $\{e_i,e_j\}$, its connectivity can be described by the emptiness of three sets: $e_i - e_j$, $e_j - e_i$, and $e_i \cap e_j$. 
Once we remove the pattern that has duplicated hyperedges ($e_i - e_j = \varnothing$, $e_j - e_i = \varnothing$, and $e_i \cap e_j \neq \varnothing$), 3 patterns remain: (1) $e_i - e_j = \varnothing$, $e_j - e_i \neq \varnothing$, and $e_i \cap e_j \neq \varnothing$, (2) $e_i - e_j \neq \varnothing$, $e_j - e_i = \varnothing$, and $e_i \cap e_j \neq \varnothing$, and (3) $e_i - e_j \neq \varnothing$, $e_j - e_i \neq \varnothing$, and $e_i \cap e_j \neq \varnothing$. 
Since (1) and (2) are symmetric, we remove one of them, then 2 patterns remain.}

\smallsection{\black{\Motifs with Four Hyperedges:}}
\black{Given a set $\{e_i, e_j, e_k, e_l\}$ of four hyperedges, its connectivity pattern can be described by the emptiness of $15$ sets, which makes the total of $32,768$ patterns. 
Once we remove symmetric patterns and leave the unique ones, $1,992$ patterns remain. 
\blue{Among these, $80$ patterns cannot be obtained from connected hyperedges, $73$ patterns cannot be obtained from distinct hyperedges, and there are $14$ patterns that are common between the two sets.} 
Removing these leaves $1,853$ \motifs with four hyperedges.}

\smallsection{\black{\Motifs with $k$ Hyperedges:}}
\black{From~\eqref{eq:motif_cnt}, we can notice that the number of connectivity patterns of $k$-hyperedge \motifs rapidly increases with $k$. 
If $k=5$, there exists a total of $2,147,483,648$ ($=2^{31}$) patterns, while $18,656,322$ remain once we filter those that violate any one of the three conditions. 
The statistics of the number of \motifs for $k=2$ to $6$ are shown in Table~\ref{hmotif_cnt_table}. 
In case of $k=2$ to $5$-hyperedge \motifs, the number of connectivity patterns is counted via enumeration.
For a larger $k$, we use \textit{OEIS}~\cite{neil1996oeis} to obtain the numbers.
The sequence of the numbers of patterns after removing symmetric (redundant) ones ($2^{2^k-1}-|P_1^{(k)}|$) can be found at~\cite{seqA000612}, which is explained as the sequence of \textit{number of P-equivalence classes of switching functions of $k$ or fewer variables, divided by 2}~\footnote{It is also commented that the sequence is equivalent to the number of non-isomorphic fillings of a Venn diagram of $k$ sets.}. 
Since we can infer from Table~\ref{hmotif_cnt_table} that the number of \motif patterns is tightly upper-bounded by the remainder of the first filtering, this sequence gives us a good reference for the actual number of \motifs ($2^{2^k-1}-|P_1^{(k)} \cup P_2^{(k)} \cup P_3^{(k)}|$). 
The sequence increases rapidly, and if $k=10$, there exist $2.48^{301}$ patterns. 
We also note that~\cite{seqA323819} describes the sequence after the second filtering, $2^{2^k-1}-|P_1^{(k)} \cup P_2^{(k)}|$, which is \textit{the number of non-isomorphic connected set-systems covering $k$ vertices}.
Note that there are known formulas for both sequences \cite{seqA000612,seqA323819}.}

\begin{algorithm*}[t]
	\setstretch{1.25}
	\small
	\caption{\methodAWrandom: A Basic On-the-Fly Version of \methodAW without Line Graph Construction}
	\label{alg_memdeg}
	\SetAlgoLined
	\SetKwInOut{Input}{Input}
	\SetKwInOut{Output}{Output}
	\SetKwFunction{proc}{getNeighbors}
	\SetKwProg{myproc}{Subroutine}{}{}
	\nonl \hspace{-4mm} \Input{ \ \ (1) input hypergraph: $G=(V,E)$ \\  (2) number of samples: $r$ \\ {(3) budget size: $b$}}
	\nonl \hspace{-4mm} \Output{estimated count of each \motif $t$'s instances: $\MHT$}

        $\{d_1,\cdots,d_{|E|}\}\leftarrow$ hyperedge degrees in $\GT$ (\# of adjacent hyperedges)\\
        
	$\MH\leftarrow$ map whose default value is $0$ \\
		$Q\leftarrow$ a priority queue whose default value is $\varnothing$\\
		$A\leftarrow$ a map whose default value is $\varnothing$\\  
		$cap\leftarrow$ the remaining capacity of $A$. Its default value is $b$\\
	
	\For{$n\leftarrow 1...r$\label{alg_memdeg:loop1}}{
		$\wedge_{ij}\leftarrow$ a uniformly random \hwedge \label{alg_memdeg:sample} \\ 
			\For{\textbf{each} index $l\in \{i,j\}$}{
				\If{$A[e_l] = \varnothing$ \label{alg_memdeg:if_memoized}}{
					\While{$cap < d_l$ \label{alg_memdeg:while}}{
						$m \leftarrow$ $Q$.pop() \label{alg_memdeg:pop}\\
						\If{$m \notin \{i,j\}$ \label{alg_memdeg:if_ij}}{
							$cap \leftarrow cap + d_m$\label{alg_memdeg:capUp}\\
							$A[e_m] \leftarrow \varnothing$\label{alg_memdeg:remove}
						}
					}
					$A[e_l]\leftarrow $\proc{$e_l$, $G$}\label{alg_memdeg:insert}\\
					$cap \leftarrow cap - d_l$\label{alg_memdeg:capDown}\\
					$Q$.push$(l, d_l)$
				}
			}
			\For{\textbf{each} index $l\in \{i,j\}$ \label{alg_memdeg:loop}}{
				\If{$A[e_l] \neq \varnothing$ \textbf{and} $l \notin Q$}{
					$Q$.push$(l, d_l)$	\label{alg_memdeg:pushBack}
				}
			}
		\For{\textbf{each} hyperedge $e_k \in \left(A[e_i] \cup A[e_j] \setminus \{e_i,e_j\}\right)$\label{alg_memdeg:loop2}}{
			$\MH[\hijk] \mathrel{+}= 1$
			\label{alg_memdeg:count}
		}
		\label{alg_memdeg:loop2:end}
	}
	\For{\textbf{each} \motif $t$\label{alg_memdeg:rescale:start}}{
		\eIf(\hfill $\triangleright$ \texttt{\color{blue}open \motifs}){17 $\leq$ t $\leq$ 22}{ 
			$\MHT \leftarrow \MHT \cdot \tfrac{|\wedge|}{2r}$ 
		}
		(\hfill $\triangleright$ \texttt{\color{blue}closed \motifs}){
			$\MHT \leftarrow \MHT \cdot \tfrac{|\wedge|}{3r}$
		}
	}\label{alg_memdeg:rescale:end}
	\textbf{return} $\MH$ \\ 
	\vspace{2.5mm}
	
		\myproc{\proc{$e$, $G$} \label{alg_memdeg:getNeighbors}}{
			$\hat{N}_e\leftarrow$ a map whose default value is $0$\\
			\For{\textbf{each} node $v\in e$}{
				\For{\textbf{each} hyperedge $e' \in E_v \setminus \{e\}$}{
					$\hat{N}_e[e'] \leftarrow \hat{N}_e[e'] + 1$
				}
			}
			\textbf{return} $\hat{N}_e$ \label{alg_memdeg:getNeighborsFinish}
		}
\end{algorithm*}

\begin{algorithm*}[t]
	\setstretch{1.25}
	\small
	\caption{\blue{\methodAWgreedy: An Advanced On-the-Fly Version of \methodAW without Line Graph Construction}}
	\label{alg_memdeg_adv}
	\SetAlgoLined
	\SetKwInOut{Input}{Input}
	\SetKwInOut{Output}{Output}
	\SetKwFunction{proc}{getNeighbors}
	\SetKwProg{myproc}{Subroutine}{}{}
	\nonl \hspace{-4mm} \Input{ \ \ (1) input hypergraph: $G=(V,E)$ \\  (2) number of samples: $r$ \\ {(3) budget size: $b$}}
	\nonl \hspace{-4mm} \Output{estimated count of each \motif $t$'s instances: $\MHT$}

        $\{d_1,\cdots,d_{|E|}\}\leftarrow$ hyperedge degrees in $\GT$ (\# of adjacent hyperedges)\\
        
	$\MH\leftarrow$ map whose default value is $0$ \\
		$Q\leftarrow$ a priority queue whose default value is $\varnothing$\\
		$A\leftarrow$ a map whose default value is $\varnothing$\\  
		$cap\leftarrow$ the remaining capacity of $A$. Its default value is $b$\\

        $W \leftarrow$ a map whose default value is $\varnothing$ \\
	\For{$n\leftarrow 1...r$\label{alg_memdeg_adv:loop1}}{
		      $\wedge_{ij}\leftarrow$ a uniformly random \hwedge \label{alg_memdeg_adv:sample} \\ 
                \If{\red{$d_i > d_j$ or ($d_i=d_j$ and $i>j$)}\label{alg_memdeg_adv:group:begin}}{
                    $W[e_i] \leftarrow \red{W[e_i] \cup \{\wedge_{ij}\}}$
                }
                \Else{
                    $W[e_j] \leftarrow \red{W[e_j] \cup \{\wedge_{ij}\}}$
                }
	}\label{alg_memdeg_adv:group:end}
        $K \leftarrow$ sort the keys (i.e., hyperedges) of $W$ in descending order of \red{$(d_i,i)$ of each key $e_i$} \label{alg_memdeg_adv:group:sort}\\
        \For{\textbf{each} hyperedge $e_i \in K$\label{alg_memdeg_adv:group:loop}}{
            \For{\textbf{each} hyperwedge $\wedge_{ij}\in W[e_i]$}{
                \For{\textbf{each} index $l\in \{i,j\}$}{
    				\If{$A[e_l] = \varnothing$}{
    					\While{$cap < d_l$}{
    						$m \leftarrow$ $Q$.pop()\\
    						\If{$m \notin \{i,j\}$}{
    							$cap \leftarrow cap + d_m$\\
    							$A[e_m] \leftarrow \varnothing$
    						}
    					}
    					$A[e_l]\leftarrow $\proc{$e_l$, $G$} \hfill $\triangleright$ \texttt{\color{blue} Algorithm~\ref{alg_memdeg}} \label{alg_memdeg_adv:insert}\\
    					$cap \leftarrow cap - d_l$\label{alg_memdeg_adv:capDown}\\
    					$Q$.push$(l, d_l)$
    				}
    		  }
                \For{\textbf{each} hyperedge $e_k \in \left(A[e_i] \cup A[e_j] \setminus \{e_i,e_j\}\right)$\label{alg_memdeg_adv:enum_W:loop}}{
    			$\MH[\hijk] \mathrel{+}= 1$
    			\label{alg_memdeg_adv:enum_W:count}
    		  }
            }
                $Q$.remove$(i)$\label{alg_memdeg_adv:remove:start}\\
                $cap \leftarrow cap + d_{i}$\\
                $A[e_i]\leftarrow \varnothing$ \label{alg_memdeg_adv:remove:end}\\
        }
	\For{\textbf{each} \motif $t$\label{alg_memdeg_adv:rescale:start}}{
		\eIf(\hfill $\triangleright$ \texttt{\color{blue}open \motifs}){17 $\leq$ t $\leq$ 22}{ 
			$\MHT \leftarrow \MHT \cdot \tfrac{|\wedge|}{2r}$ 
		}
		(\hfill $\triangleright$ \texttt{\color{blue}closed \motifs}){
			$\MHT \leftarrow \MHT \cdot \tfrac{|\wedge|}{3r}$
		}
	}\label{alg_memdeg_adv:rescale:end}
	\textbf{return} $\MH$ \\ 
\end{algorithm*}

\section{\black{Number of $k$H-motifs}} \label{appendix:kh-motif:generalized}
\black{
The number of $k$h-motifs for any $k>1$ can be calculated by subdividing each \motif into multiple $k$h-motifs. 
For example, in \motif 1 in Figure~\ref{motif_three_hyperedges}, depending on the cardinality of the the seven considered sets,
there are $(k-1)$ possibilities for the red region,\footnote{A colored region cannot be empty, and thus there are $k-1$ possible states.}
and ${k-1 \choose 2} + k-1={k \choose 2}$ possibilities for the two symmetric green regions.\footnote{There are ${k-1 \choose 2}$  possibilities when they have different states and $k-1$ (symmetric) possibilities when they have the same state.}
Therefore, $(k-1) \times$$k \choose 2$$=\frac{k(k-1)^2}{2}$ $k$h-motifs are subdivided from \motif 1. By applying the same process to each \motif and summing up the results, we obtain the following formula:
\begin{equation}
X(k)=\frac{k(k-1)(k^5+k^4+4k^3+k^2-4k+2)}{6} \label{eq:kh_motif_cnt}
\end{equation}
where $X(k)$ is the number of $k$h-motifs for each $k\geq 2$. Note that the number of \motifs is $X(2)=26$, and the number of \tmotifs is $X(3)=431$.}

\section{Details of On-the-Fly \method}
\label{appendix:on-the-fly}
In this section, we provide detailed explanations of the on-the-fly version of \method, which counts the instances of \motifs without \blue{line-graph construction} in the preprocessing step (see Section~\ref{method:on_the_fly}). 
\blue{We present two versions of on-the-fly algorithms for \methodAW: \methodAWrandom and \methodAWgreedy.
The pseudocodes for these algorithms are provided in Algorithms~\ref{alg_memdeg} and \ref{alg_memdeg_adv}, respectively.}
Unlike the basic version of \methodAW, the \blue{line} graph is not given as an input. 
Instead, the budget $b$ of memoization is given.
To this end, we use three additional data structures:
\begin{itemize}
	\item $Q$: a priority queue that prioritizes high-degree hyperedges. The index of an hyperedge is stored as a key, and its degree is used as the corresponding priority. It is initialized to $\varnothing$.
	\item $A$: a map that memoizes the neighbors of a subset of hyperedges. It is initialized to $\varnothing$. The maximum capacity of $A$ is given as the budget $b$.
	\item $cap$: a variable that records the remaining capacity of $A$. This variable is initialized to $b$.
\end{itemize}

\smallsection{\methodAWrandom (Algorithm~\ref{alg_memdeg}):}
For each hyperedge $e_i$ in each sampled \hwedge $\wedge_{ij} \in \wedge$, Algorithm~\ref{alg_memdeg} first checks whether its neighbors are memoized in the map $A$ (line~\ref{alg_memdeg:if_memoized}). 
If the neighbors are not memoized, it checks the remaining capacity $cap$ and memoizes the neighbors of $e_i$ in $A$ if the remaining capacity allows (i.e., if $d_i \leq cap$).
Otherwise, it removes the neighbors of a hyperedge with the lowest priority from $A$.
Specifically, since we prioritize hyperedges with high degree in $\GT$, a hyperedge with the lowest degree is dequeued from $Q$ (line~\ref{alg_memdeg:pop}), and its neighbors are evicted from $A$ (line~\ref{alg_memdeg:remove}).
Evicting these neighbors increases the remaining capacity $cap$ of $A$ (line~\ref{alg_memdeg:capUp}).
This is repeated until the remaining capacity $cap$ becomes large enough to memoize $e_i$'s neighbors (i.e., $cap \geq d_i$).
Then, the neighbors of $e_i$ are retrieved by calling the \texttt{getNeighbors} function and memoized (line~\ref{alg_memdeg:insert}), followed by the reduction of the remaining capacity $cap$.

\smallsection{\blue{\methodAWgreedy (Algorithm~\ref{alg_memdeg_adv}):}}
\blue{
We additionally develop \methodAWgreedy, an advanced on-the-fly version of \methodAW in Algorithm~\ref{alg_memdeg_adv}.
In \methodAWrandom, when the same hyperedge is processed again after its neighbors are evicted from the line graph, it requires the reconstruction of its neighbors, thereby increasing the computational overhead.
\methodAWgreedy is designed to mitigate this by taking the processing order of hyperwedges into account, aiming to reduce unnecessary reconstruction.
The high-level idea behind  \methodAWgreedy is to consecutively process hyperwedges consisting of the same hyperedges, thereby increasing the chance of utilizing memoized neighbors before they are evicted.
\methodAWgreedy first collects $r$ hyperwedges (lines~\ref{alg_memdeg_adv:loop1}-\ref{alg_memdeg_adv:sample}). 
For each hyperedge, we define its \textit{key hyperedge} as one of the two hyperedges forming the hyperwedge whose degree in $\GT$ \red{(or index in the case of a tie) is greater} than that of the other hyperedge. 
Then, it groups the collected hyperwedges based on their respective key hyperedge (line~\ref{alg_memdeg_adv:group:begin}-\ref{alg_memdeg_adv:group:end}).
After that, the hyperwedges are processed group by group 
(line~\ref{alg_memdeg_adv:group:loop})
in descending order of the degree \red{in $\GT$ (or index in the case of a tie) of key hyperedges} (line~\ref{alg_memdeg_adv:group:sort}), thereby increasing the chance of utilizing memoized neighbors before they are evicted.
The processing method for each hyperwedge remains the same as that of \methodAWrandom.
After processing each group, the neighbors of its key hyperedge are evicted from the line graph (lines~\ref{alg_memdeg_adv:remove:start}-\ref{alg_memdeg_adv:remove:end}) permanently, as all hyperwedges containing it have been processed.
}

 %

\begin{table*}[t]
	\begin{center}
		\caption{\label{classifier_prediction_table}
		\black{Accuracy (ACC) and area under the ROC curve (AUC) for hyperedge prediction using five different classifiers for each dataset.  D1 to D7 refer to the coauth-DBLP, coauth-MAG-Geology, coauth-MAG-History, contact-primary-school, contact-high-school, email-Enron, and email-Eu datasets in order.
        For each dataset,
        the best and second-best results are in \textbf{bold} and \underline{underlined}, respectively.
        }}
        \scalebox{0.62}{
            \subtable[HP26]{
        		\begin{tabular}{c|c|ccccccc|c} 
        				\toprule
        				& \textbf{Classifier} & \textbf{D1} & \textbf{D2} & \textbf{D3} & \textbf{D4} & \textbf{D5} & \textbf{D6} & \textbf{D7} & \textbf{AVG}\\
        				\midrule
        				\multirow{5}{*}{\textbf{ACC$*$}} & \textbf{Logistic Regression} & 0.764 $\pm$ 0.000 & 0.743 $\pm$ 0.000 & 0.654 $\pm$ 0.000 & 0.770 $\pm$ 0.000 & 0.895 $\pm$ 0.000 & \textbf{0.815 $\pm$ 0.000} & 0.904 $\pm$ 0.000 & 0.792 $\pm$ 0.000\\  
                        & \textbf{Decision Tree} & 0.729 $\pm$ 0.000 & 0.705 $\pm$ 0.001 & \underline{0.689 $\pm$ 0.005} & 0.749 $\pm$ 0. 003 & 0.886 $\pm$ 0.002 & 0.769 $\pm$ 0.013 & 0.881 $\pm$ 0.001 & 0.773 $\pm$ 0.004\\
                         & \textbf{Random Forest} & 0.774 $\pm$ 0.002 & 0.740 $\pm$ 0.004 & 0.633 $\pm$ 0.010 & 0.765 $\pm$ 0.002 & 0.879 $\pm$ 0.003 & 0.704 $\pm$ 0.005 & 0.876 $\pm$ 0.003 & 0.767 $\pm$ 0.004\\
                        & \textbf{MLP} & \textbf{0.804 $\pm$ 0.002} & \textbf{0.786 $\pm$ 0.003} & 0.678 $\pm$ 0.007 & \textbf{0.773 $\pm$ 0.009} & \underline{0.906 $\pm$ 0.012} & \underline{0.812 $\pm$ 0.011} & \textbf{0.915 $\pm$ 0.004} & \underline{0.811 $\pm$ 0.007}\\
                        & \textbf{XGBoost} & \underline{0.801 $\pm$ 0.000} & \underline{0.782 $\pm$ 0.000} & \textbf{0.696 $\pm$ 0.000} & \underline{0.772 $\pm$ 0.000} & \textbf{0.907 $\pm$ 0.000} & \textbf{0.815 $\pm$ 0.000} & \underline{0.911 $\pm$ 0.000} & \textbf{0.812 $\pm$ 0.000}\\
                        \midrule
                        \multirow{5}{*}{\textbf{AUC$\dagger$}} & \textbf{Logistic Regression} & 0.843 $\pm$ 0.000 & 0.827 $\pm$ 0.000 & 0.688 $\pm$ 0.000 & \underline{0.892 $\pm$ 0.000} & \underline{0.971 $\pm$ 0.000} & \textbf{0.928 $\pm$ 0.000} & 0.969 $\pm$ 0.000 & 0.874 $\pm$ 0.000\\  
                        & \textbf{Decision Tree} & 0.730 $\pm$ 0.000 & 0.707 $\pm$ 0.001 & 0.698 $\pm$ 0.007 & 0.749 $\pm$ 0.003 & 0.886 $\pm$ 0.002 & 0.769 $\pm$ 0.013 & 0.881 $\pm$ 0.001 & 0.774 $\pm$ 0.003\\
                        & \textbf{Random Forest} & \underline{0.862 $\pm$ 0.002} & 0.826 $\pm$ 0.003 & 0.679 $\pm$ 0.006 & 0.885 $\pm$ 0.001 & 0.966 $\pm$ 0.001 & 0.897 $\pm$ 0.004 & 0.946 $\pm$ 0.001 & 0.866 $\pm$ 0.004\\
                        & \textbf{MLP} & \textbf{0.886 $\pm$ 0.002} & \textbf{0.867 $\pm$ 0.002} & \underline{0.781 $\pm$ 0.008} & \textbf{0.893 $\pm$ 0.000} & \textbf{0.973 $\pm$ 0.001} & 0.917 $\pm$ 0.006 & \textbf{0.973 $\pm$ 0.001} & \underline{0.899 $\pm$ 0.007}\\
                        & \textbf{XGBoost} & \textbf{0.886 $\pm$ 0.000} & \underline{0.865 $\pm$ 0.000} & \textbf{0.811 $\pm$ 0.000} & 0.879 $\pm$ 0.000 & 0.968 $\pm$ 0.000 & \underline{0.922 $\pm$ 0.000} & \underline{0.972 $\pm$ 0.000} & \textbf{0.900 $\pm$ 0.000}\\
                        \bottomrule
        				\multicolumn{5}{l}{$*$ acuracy, $\dagger$ area under the ROC curve.}
        		\end{tabular}
          }
        }
        \scalebox{0.62}{
            \subtable[HP7]{
        		\begin{tabular}{c|c|ccccccc|c} 
        				\toprule
        				& \textbf{Classifier} & \textbf{D1} & \textbf{D2} & \textbf{D3} & \textbf{D4} & \textbf{D5} & \textbf{D6} & \textbf{D7} & \textbf{AVG}\\
        				\midrule
        				\multirow{5}{*}{\textbf{ACC}} & \textbf{Logistic Regression} & 0.713 $\pm$ 0.000 & 0.698 $\pm$ 0.000 & 0.605 $\pm$ 0.000 & 0.764 $\pm$ 0.000 & 0.859 $\pm$ 0.000 & 0.676 $\pm$ 0.000 & 0.858 $\pm$ 0.000 & 0.739 $\pm$ 0.000\\  
                        & \textbf{Decision Tree} & 0.657 $\pm$ 0.000 & 0.629 $\pm$ 0.001 & 0.645 $\pm$ 0.003 & 0.751 $\pm$ 0.002 & 0.847 $\pm$ 0.008 & 0.699 $\pm$ 0.005 & 0.845 $\pm$ 0.002 & 0.725 $\pm$ 0.003\\
                         & \textbf{Random Forest} & 0.736 $\pm$ 0.000 & 0.704 $\pm$ 0.003 & 0.592 $\pm$ 0.006 & \textbf{0.777 $\pm$ 0.002} & 0.826 $\pm$ 0.008 & 0.639 $\pm$ 0.005 & 0.785 $\pm$ 0.004 & 0.723 $\pm$ 0.004\\
                        & \textbf{MLP} & \underline{0.742 $\pm$ 0.005} & \textbf{0.723 $\pm$ 0.005} & \underline{0.654 $\pm$ 0.006} & 0.767 $\pm$ 0.006 & \textbf{0.873 $\pm$ 0.011} & \underline{0.722 $\pm$ 0.008} & \underline{0.876 $\pm$ 0.007} & \underline{0.765 $\pm$ 0.007}\\
                        & \textbf{XGBoost} & \textbf{0.744 $\pm$ 0.000} & \underline{0.722 $\pm$ 0.000} & \textbf{0.683 $\pm$ 0.000} & \underline{0.769 $\pm$ 0.000} & \underline{0.860 $\pm$ 0.000} & \textbf{0.725 $\pm$ 0.000} & \textbf{0.878 $\pm$ 0.000} & \textbf{0.769 $\pm$ 0.000}\\
                        \midrule
                        \multirow{5}{*}{\textbf{AUC}} & \textbf{Logistic Regression} & 0.775 $\pm$ 0.000 & 0.770 $\pm$ 0.000 & 0.612 $\pm$ 0.000 & \underline{0.879 $\pm$ 0.000} & \underline{0.955 $\pm$ 0.000} & 0.804 $\pm$ 0.000 & 0.939 $\pm$ 0.000 & 0.819 $\pm$ 0.000\\  
                        & \textbf{Decision Tree} & 0.657 $\pm$ 0.000 & 0.629 $\pm$ 0.001 & 0.631 $\pm$ 0.006 & 0.743 $\pm$ 0.002 & 0.847 $\pm$ 0.008 & 0.711 $\pm$ 0.006 & 0.830 $\pm$ 0.003 & 0.721 $\pm$ 0.004\\
                        & \textbf{Random Forest} & \underline{0.776 $\pm$ 0.001} & 0.755 $\pm$ 0.003 & 0.630 $\pm$ 0.003 & \textbf{0.880 $\pm$ 0.000} & 0.938 $\pm$ 0.001 & 0.737 $\pm$ 0.003 & 0.882 $\pm$ 0.002 & 0.800 $\pm$ 0.002\\
                        & \textbf{MLP} & \textbf{0.820 $\pm$ 0.002} & \textbf{0.799 $\pm$ 0.002} & \underline{0.731 $\pm$ 0.006} & \textbf{0.880 $\pm$ 0.002} & \textbf{0.960 $\pm$ 0.001} & \textbf{0.831 $\pm$ 0.004} & \textbf{0.955 $\pm$ 0.000} & \textbf{0.854 $\pm$ 0.002}\\
                        & \textbf{XGBoost} & \textbf{0.820 $\pm$ 0.000} & \underline{0.798 $\pm$ 0.000} & \textbf{0.761 $\pm$ 0.000} & 0.868 $\pm$ 0.000 & 0.949 $\pm$ 0.000 & \underline{0.816 $\pm$ 0.000} & \underline{0.954 $\pm$ 0.000} & \underline{0.852 $\pm$ 0.000}\\
                        \bottomrule
        		\end{tabular}
          }
        }
        \scalebox{0.62}{
            \subtable[THP]{
        		\begin{tabular}{c|c|ccccccc|c} 
        				\toprule
        				& \textbf{Classifier} & \textbf{D1} & \textbf{D2} & \textbf{D3} & \textbf{D4} & \textbf{D5} & \textbf{D6} & \textbf{D7} & \textbf{AVG}\\
        				\midrule
        				\multirow{5}{*}{\textbf{ACC}} & \textbf{Logistic Regression} & 0.803 $\pm$ 0.000 & 0.786 $\pm$ 0.000 & 0.693 $\pm$ 0.000 & \underline{0.770 $\pm$ 0.000} & 0.895 $\pm$ 0.000 & \underline{0.815 $\pm$ 0.000} & \underline{0.915 $\pm$ 0.000} & 0.811 $\pm$ 0.000\\  
                        & \textbf{Decision Tree} & 0.772 $\pm$ 0.000 & 0.738 $\pm$ 0.001 & \underline{0.695 $\pm$ 0.005} & 0.755 $\pm$ 0.003 & 0.872 $\pm$ 0.005 & 0.806 $\pm$ 0.011 & 0.887 $\pm$ 0.002 & 0.789 $\pm$ 0.004\\
                         & \textbf{Random Forest} & 0.776 $\pm$ 0.003 & 0.740 $\pm$ 0.003 & 0.597 $\pm$ 0.011 & 0.768 $\pm$ 0.003 & 0.860 $\pm$ 0.007 & 0.712 $\pm$ 0.016 & 0.870 $\pm$ 0.002 & 0.760 $\pm$ 0.006\\
                        & \textbf{MLP} & \underline{0.824 $\pm$ 0.004} & \underline{0.805 $\pm$ 0.003} & \underline{0.695 $\pm$ 0.010} & 0.767 $\pm$ 0.009 & \textbf{0.906 $\pm$ 0.016} & 0.807 $\pm$ 0.006 & 0.911 $\pm$ 0.006 & \underline{0.816 $\pm$ 0.008}\\
                        & \textbf{XGBoost} & \textbf{0.836 $\pm$ 0.000} & \textbf{0.819 $\pm$ 0.000} & \textbf{0.716 $\pm$ 0.000} & \textbf{0.779 $\pm$ 0.000} & \underline{0.904 $\pm$ 0.000} & \textbf{0.827 $\pm$ 0.000} & \textbf{0.920 $\pm$ 0.000} & \textbf{0.829 $\pm$ 0.000}\\
                        \midrule
                        \multirow{5}{*}{\textbf{AUC}} & \textbf{Logistic Regression} & 0.870 $\pm$ 0.000 & 0.853 $\pm$ 0.000 & 0.705 $\pm$ 0.000 & \textbf{0.891 $\pm$ 0.000} & \underline{0.969 $\pm$ 0.000} & \underline{0.907 $\pm$ 0.000} & \underline{0.975 $\pm$ 0.000} & 0.881 $\pm$ 0.000\\  
                        & \textbf{Decision Tree} & 0.772 $\pm$ 0.000 & 0.738 $\pm$ 0.001 & 0.717 $\pm$ 0.006 & 0.755 $\pm$ 0.003 & 0.872 $\pm$ 0.005 & 0.806 $\pm$ 0.011 & 0.887 $\pm$ 0.002 & 0.792 $\pm$ 0.004\\
                        & \textbf{Random Forest} & 0.858 $\pm$ 0.004 & 0.830 $\pm$ 0.003 & 0.632 $\pm$ 0.008 & 0.885 $\pm$ 0.001 & 0.962 $\pm$ 0.001 & 0.899 $\pm$ 0.005 & 0.956 $\pm$ 0.001 & 0.860 $\pm$ 0.003\\
                        & \textbf{MLP} & \underline{0.898 $\pm$ 0.003} & \underline{0.876 $\pm$ 0.006} & \underline{0.769 $\pm$ 0.011} & 0.882 $\pm$ 0.003 & \textbf{0.971 $\pm$ 0.001} & 0.895 $\pm$ 0.002 & 0.971 $\pm$ 0.003 & \underline{0.895 $\pm$ 0.004}\\
                        & \textbf{XGBoost} & \textbf{0.909 $\pm$ 0.000} & \textbf{0.892 $\pm$ 0.000} & \textbf{0.820 $\pm$ 0.000} & \underline{0.886 $\pm$ 0.000} & 0.967 $\pm$ 0.000 & \textbf{0.921 $\pm$ 0.000} & \textbf{0.977 $\pm$ 0.000} & \textbf{0.910 $\pm$ 0.000}\\
                        \bottomrule
        		\end{tabular}
          }
        }
        \scalebox{0.62}{
            \subtable[BASELINE]{
        		\begin{tabular}{c|c|ccccccc|c} 
        				\toprule
        				& \textbf{Classifier} & \textbf{D1} & \textbf{D2} & \textbf{D3} & \textbf{D4} & \textbf{D5} & \textbf{D6} & \textbf{D7} & \textbf{AVG}\\
        				\midrule
        				\multirow{5}{*}{\textbf{ACC}} & \textbf{Logistic Regression} & \underline{0.640 $\pm$ 0.000} & 0.659 $\pm$ 0.000 & \underline{0.670 $\pm$ 0.000} & 0.539 $\pm$ 0.000 & 0.532 $\pm$ 0.000 & 0.517 $\pm$ 0.000 & 0.607 $\pm$ 0.000 & 0.595 $\pm$ 0.000\\
                        & \textbf{Decision Tree} & 0.613 $\pm$ 0.000 & 0.627 $\pm$ 0.001 & 0.612 $\pm$ 0.002 & \underline{0.588 $\pm$ 0.003} & \underline{0.580 $\pm$ 0.005} & 0.566 $\pm$ 0.025 & \underline{0.655 $\pm$ 0.002} & 0.606 $\pm$ 0.005\\
                         & \textbf{Random Forest} & \textbf{0.646 $\pm$ 0.001} & \textbf{0.678 $\pm$ 0.002} & \textbf{0.699 $\pm$ 0.000} & 0.557 $\pm$ 0.002 & 0.527 $\pm$ 0.002 & \underline{0.614 $\pm$ 0.037} & 0.612 $\pm$ 0.003 & \underline{0.619 $\pm$ 0.007}\\
                        & \textbf{MLP} & \textbf{0.646 $\pm$ 0.001} & 0.653 $\pm$ 0.006 & 0.663 $\pm$ 0.005 & 0.537 $\pm$ 0.012 & 0.537 $\pm$ 0.008 & 0.562 $\pm$ 0.007 & 0.574 $\pm$ 0.032 & 0.596 $\pm$ 0.010\\
                        & \textbf{XGBoost} & \textbf{0.646 $\pm$ 0.000} & \underline{0.661 $\pm$ 0.000} & 0.608 $\pm$ 0.000 & \textbf{0.603 $\pm$ 0.000} & \textbf{0.585 $\pm$ 0.000} & \textbf{0.633 $\pm$ 0.000} & \textbf{0.702 $\pm$ 0.000} & \textbf{0.634 $\pm$ 0.000}\\
                        \midrule
                        \multirow{5}{*}{\textbf{AUC}} & \textbf{Logistic Regression} & 0.696 $\pm$ 0.000 & 0.712 $\pm$ 0.000 & \textbf{0.800 $\pm$ 0.000} & 0.555 $\pm$ 0.000 & 0.561 $\pm$ 0.000 & 0.574 $\pm$ 0.000 & 0.650 $\pm$ 0.000 & 0.650 $\pm$ 0.000\\  
                        & \textbf{Decision Tree} & 0.617 $\pm$ 0.000 & 0.628 $\pm$ 0.001 & 0.613 $\pm$ 0.002 & \underline{0.588 $\pm$ 0.003} & \underline{0.580 $\pm$ 0.005} & 0.566 $\pm$ 0.025 & 0.655 $\pm$ 0.002 & 0.607 $\pm$ 0.005\\
                        & \textbf{Random Forest} & 0.699 $\pm$ 0.001 & \textbf{0.742 $\pm$ 0.001} & 0.656 $\pm$ 0.009 & 0.580 $\pm$ 0.001 & 0.563 $\pm$ 0.002 & \textbf{0.706 $\pm$ 0.010} & \underline{0.667 $\pm$ 0.002} & 0.659 $\pm$ 0.004\\
                        & \textbf{MLP} & \underline{0.702 $\pm$ 0.002} & 0.727 $\pm$ 0.003 & \underline{0.770 $\pm$ 0.007} & 0.569 $\pm$ 0.008 & 0.569 $\pm$ 0.002 & 0.635 $\pm$ 0.015 & 0.663 $\pm$ 0.011 & \underline{0.662 $\pm$ 0.007}\\
                        & \textbf{XGBoost} & \textbf{0.707 $\pm$ 0.000} & \underline{0.741 $\pm$ 0.000} & 0.732 $\pm$ 0.000 & \textbf{0.647 $\pm$ 0.000} & \textbf{0.641 $\pm$ 0.000} & \underline{0.701 $\pm$ 0.000} & \textbf{0.781 $\pm$ 0.000} & \textbf{0.707 $\pm$ 0.000}\\
                        \bottomrule
        		\end{tabular}
          }
        }
	\end{center}
\end{table*}



\section{Hyperedge Prediction}
\label{appendix:prediction}
In this section, we provide detailed information about the hyperedge prediction task in Section~\ref{sec:exp:observations:prediction} and additional experimental results.

\smallsection{Settings:}
\label{appendix:prediction:settings}
To generate training and test sets, we first extract hyperedges whose sizes are at least $3$ from each domain. 
In the test set, we remove hyperedges that contain out-of-sample nodes (i.e., nodes that are not contained in the training set). 
In both training and test sets, a fake hyperedge is generated from each real hyperedge to balance the numbers of real and fake hyperedges. 

\smallsection{Negative Hyperedge Sampling:} 
\label{appendix:prediction:negative}
For each positive hyperedge $e$, we first generate a random probability $\alpha$ ($\frac{1}{3} \leq \alpha \leq 1$~\footnote{$\alpha$ is at least $\frac{1}{3}$ since the minimum size of sampled hyperedges is $3$.}). Then, we replace $\alpha \cdot |e|$ among the nodes with randomly selected nodes. Instead of sampling negative nodes from $V$ uniformly at random, we sample based on the noisy distribution, as in~\cite{mikolov2013distributed}, where the probability of each node $v_i$ being selected is:
\begin{equation*}
	P(v_i) = \frac{|E_{v_i}|^{\frac{3}{4}}}{\sum_{j=1}^{|V|}|E_{v_j}|^{\frac{3}{4}}}.
\end{equation*}
For each generated negative hyperedge $e'$, we ensure that the size is the same as the positive one (i.e., no duplicated nodes in the hyperedge satisfying $|e|=|e'|$) and it is negative (i.e., $e'$ does not exist in both training and test sets of positive hyperedges).

\smallsection{\black{Results with More Classifiers:}}
\label{appendix:prediction:more_classifier}
\black{
We evaluate the performance of the four considered feature sets in hyperedge prediction using four additional classifiers: Logistic Regression, Decision Tree, Random Forest, and MLP, in addition to XGBoost.
We use the implementation of all classifiers provided by scikit-learn~\cite{scikit-learn} with default hyperparameters.
In Table~\ref{classifier_prediction_table}, we report the accuracy (ACC) and the area under the ROC curve (AUC) of each classifier on each dataset.
The results indicate that XGBoost, which we use in the main paper, yields higher average ACC and AUC than the other classifiers for all feature sets except for HP7, where XGBoost yields the highest average ACC but the second-highest average AUC.}

\begin{table*}[t]
	\begin{center}
		\caption{\label{classifier_node_classification_table}
		\black{Accuracy (ACC) and average area under the ROC curve (AVG AUC) for node classification using three types of ego-networks and five classifiers. For each feature set, the best and second-best results are in \textbf{bold} and \underline{underlined}, respectively.
  }}
        \scalebox{0.59}{
            \subtable[NP26]{
        		\begin{tabular}{c|c|ccc} 
        				\toprule
        				& \textbf{Classifier} & \textbf{Star} & \textbf{Radial} & \textbf{Contracted}\\
        				\midrule
        				\multirow{5}{*}{\textbf{ACC}} & \textbf{Logistic Regression} & 0.327 $\pm$ 0.000 & 0.200 $\pm$ 0.000 & 0.236 $\pm$ 0.000\\
                        & \textbf{Decision Tree} & 0.475 $\pm$ 0.013 & 0.654 $\pm$ 0.009 & 0.540 $\pm$ 0.009\\
                         & \textbf{Random Forest} & 0.488 $\pm$ 0.010 & 0.570 $\pm$ 0.015 & 0.496 $\pm$ 0.013\\
                        & \textbf{MLP} & 0.383 $\pm$ 0.057 & 0.569 $\pm$ 0.035 & 0.474 $\pm$ 0.086\\
                        & \textbf{XGBoost} & 0.555 $\pm$ 0.000 & \textbf{0.682 $\pm$ 0.000} & \underline{0.618 $\pm$ 0.000}\\
                        \midrule
                        \multirow{5}{*}{\textbf{AVG AUC}} & \textbf{Logistic Regression} & 0.827 $\pm$ 0.000 & 0.765 $\pm$ 0.000 & 0.727 $\pm$ 0.000\\
                        & \textbf{Decision Tree} & 0.726 $\pm$ 0.007 & 0.810 $\pm$ 0.005 & 0.747 $\pm$ 0.005\\
                        & \textbf{Random Forest} & 0.890 $\pm$ 0.001 & 0.912 $\pm$ 0.002 & 0.909 $\pm$ 0.002\\
                        & \textbf{MLP} & 0.768 $\pm$ 0.034 & 0.836 $\pm$ 0.023 & 0.778 $\pm$ 0.056\\
                        & \textbf{XGBoost} & 0.919 $\pm$ 0.000 & \textbf{0.952 $\pm$ 0.000} & \underline{0.942 $\pm$ 0.000}\\
                        \bottomrule
        		\end{tabular}
          }
        }
        \scalebox{0.59}{
            \subtable[NP7]{
        		\begin{tabular}{c|c|ccc} 
        				\toprule
        				& \textbf{Classifier} & \textbf{Star} & \textbf{Radial} & \textbf{Contracted}\\
        				\midrule
        				\multirow{5}{*}{\textbf{ACC}} & \textbf{Logistic Regression} & 0.309 $\pm$ 0.000 & 0.223 $\pm$ 0.000 & 0.241 $\pm$ 0.000\\
                        & \textbf{Decision Tree} & 0.469 $\pm$ 0.011 & 0.490 $\pm$ 0.009 & \underline{0.526 $\pm$ 0.013}\\
                         & \textbf{Random Forest} & 0.400 $\pm$ 0.015 & 0.501 $\pm$ 0.011 & 0.425 $\pm$ 0.008\\
                        & \textbf{MLP} & 0.342 $\pm$ 0.028 & 0.440 $\pm$ 0.017 & 0.368 $\pm$ 0.079\\
                        & \textbf{XGBoost} & 0.523 $\pm$ 0.000 & \textbf{0.545 $\pm$ 0.000} & 0.482 $\pm$ 0.000\\
                        \midrule
                        \multirow{5}{*}{\textbf{AVG AUC}} & \textbf{Logistic Regression} & 0.835 $\pm$ 0.000 & 0.716 $\pm$ 0.000 & 0.731 $\pm$ 0.000\\
                        & \textbf{Decision Tree} & 0.720 $\pm$ 0.006 & 0.744 $\pm$ 0.005 & 0.760 $\pm$ 0.007\\
                        & \textbf{Random Forest} & 0.856 $\pm$ 0.001 & 0.873 $\pm$ 0.003 & 0.880 $\pm$ 0.002\\
                        & \textbf{MLP} & 0.734 $\pm$ 0.024 & 0.749 $\pm$ 0.012 & 0.699 $\pm$ 0.051\\
                        & \textbf{XGBoost} & 0.895 $\pm$ 0.000 & \underline{0.901 $\pm$ 0.000} & \textbf{0.910 $\pm$ 0.000}\\
                        \bottomrule
        		\end{tabular}
          }
        }
        \scalebox{0.59}{
            \subtable[TNP]{
        		\begin{tabular}{c|c|ccc} 
        				\toprule
        				& \textbf{Classifier} & \textbf{Star} & \textbf{Radial} & \textbf{Contracted}\\
        				\midrule
        				\multirow{5}{*}{\textbf{ACC}} & \textbf{Logistic Regression} & 0.432 $\pm$ 0.000 & 0.318 $\pm$ 0.000 & 0.268 $\pm$ 0.000\\
                        & \textbf{Decision Tree} & 0.585 $\pm$ 0.016 & 0.647 $\pm$ 0.011 & 0.593 $\pm$ 0.013\\
                         & \textbf{Random Forest} & 0.515 $\pm$ 0.014 & 0.598 $\pm$ 0.014 & 0.542 $\pm$ 0.028\\
                        & \textbf{MLP} & 0.502 $\pm$ 0.026 & 0.627 $\pm$ 0.038 & 0.480 $\pm$ 0.040\\
                        & \textbf{XGBoost} & 0.650 $\pm$ 0.000 & \textbf{0.723 $\pm$ 0.000} & \underline{0.673 $\pm$ 0.000}\\
                        \midrule
                        \multirow{5}{*}{\textbf{AVG AUC}} & \textbf{Logistic Regression} & 0.876 $\pm$ 0.000 & 0.838 $\pm$ 0.000 & 0.784 $\pm$ 0.000\\
                        & \textbf{Decision Tree} & 0.772 $\pm$ 0.009 & 0.806 $\pm$ 0.006 & 0.776 $\pm$ 0.007\\
                        & \textbf{Random Forest} & 0.915 $\pm$ 0.003 & 0.936 $\pm$ 0.002 & 0.933 $\pm$ 0.001\\
                        & \textbf{MLP} & 0.842 $\pm$ 0.012 & 0.875 $\pm$ 0.021 & 0.787 $\pm$ 0.025\\
                        & \textbf{XGBoost} & 0.946 $\pm$ 0.000 & \textbf{0.967 $\pm$ 0.000} & \underline{0.956 $\pm$ 0.000}\\
                        \bottomrule
        		\end{tabular}
          }
        }
        \scalebox{0.59}{
            \subtable[BASELINE]{
        		\begin{tabular}{c|c|ccc} 
        				\toprule
        				& \textbf{Classifier} & \textbf{Star} & \textbf{Radial} & \textbf{Contracted}\\
        				\midrule
        				\multirow{5}{*}{\textbf{ACC}} & \textbf{Logistic Regression} & 0.309 $\pm$ 0.000 & 0.505 $\pm$ 0.000 & 0.418 $\pm$ 0.000\\
                        & \textbf{Decision Tree} & 0.549 $\pm$ 0.013 & 0.649 $\pm$ 0.013 & 0.590 $\pm$ 0.015\\
                         & \textbf{Random Forest} & 0.597 $\pm$ 0.016 & 0.581 $\pm$ 0.012 & 0.538 $\pm$ 0.011\\
                        & \textbf{MLP} & 0.530 $\pm$ 0.034 & 0.542 $\pm$ 0.045 & 0.476 $\pm$ 0.061\\
                        & \textbf{XGBoost} & 0.550 $\pm$ 0.000 & \underline{0.659 $\pm$ 0.000} & \textbf{0.668 $\pm$ 0.000}\\
                        \midrule
                        \multirow{5}{*}{\textbf{AVG AUC}} & \textbf{Logistic Regression} & 0.848 $\pm$ 0.000 & 0.904 $\pm$ 0.000 & 0.871 $\pm$ 0.000\\
                        & \textbf{Decision Tree} & 0.762 $\pm$ 0.007 & 0.807 $\pm$ 0.007 & 0.774 $\pm$ 0.008\\
                        & \textbf{Random Forest} & 0.920 $\pm$ 0.002 & 0.922 $\pm$ 0.001 & 0.930 $\pm$ 0.001\\
                        & \textbf{MLP} & 0.910 $\pm$ 0.007 & 0.922 $\pm$ 0.016 & 0.898 $\pm$ 0.026\\
                        & \textbf{XGBoost} & 0.922 $\pm$ 0.000 & \textbf{0.951 $\pm$ 0.000} & \underline{0.950 $\pm$ 0.000}\\
                        \bottomrule
        		\end{tabular}
          }
        }
	\end{center}
\end{table*}

\begin{table*}[t]
	\begin{center}
	\caption{\label{variant_3h_motif_table}
		\black{Comparison of \tmotifs and its variants. For MR($p$), HR($p$, mean), HR($p$, max), and HR($p$, min), the min, average, and max performances over $p=\{0.1,0.2,\cdots, 0.9\}$ are reported.  For each task, the best and second-best results are in \textbf{bold} and \underline{underlined}, respectively. Surprisingly, despite the simplicity of \tmotifs, using it consistently outperforms or achieves similar results compared to other variants across all tasks, providing evidence of its effectiveness.}}
    \scalebox{0.70}{
        	\begin{tabular}{c|ccc|ccc|ccc|ccc|ccc}
                    \toprule
                    \multirow{3}{*}{} & \multicolumn{3}{c|}{\textbf{Hypergraph Clustering}} & \multicolumn{6}{c|}{\textbf{Hyperedge Prediction} (coauth-DBLP)} & \multicolumn{6}{c}{\textbf{Node Classification}}\\
                    \cmidrule{2-16}
                    & \multicolumn{3}{c|}{\textbf{NMI score}} & \multicolumn{3}{c|}{\textbf{ACC}} & \multicolumn{3}{c|}{\textbf{AUC}} & \multicolumn{3}{c|}{\textbf{ACC}} & \multicolumn{3}{c}{\textbf{AVG AUC}}\\
                    \cmidrule{2-16}
                    & \textbf{min} & \textbf{mean} & \textbf{max} & \textbf{min} & \textbf{mean} & \textbf{max} & \textbf{min} & \textbf{mean} & \textbf{max} & \textbf{min} & \textbf{mean} & \textbf{max} & \textbf{min} & \textbf{mean} & \textbf{max}\\
        			\midrule
        			\multirow{1}{*}{\textbf{\Tmotifs (i.e., Abs($1$))}} &  & \textbf{1.000} &  &  & \textbf{0.836} &  &  & \underline{0.909} &  &  & \textbf{0.723} &  &  & \textbf{0.967} &\\
                    \midrule
                    \multirow{1}{*}{\textbf{Abs($2$)}} &  & \textbf{1.000} &  &  & \underline{0.830} &  &  & 0.908 &  &  & 0.673 &  &  & 0.955 & \\
                    \midrule
                    \multirow{1}{*}{\textbf{MR($p$)}} & 0.824 & 0.932 & \textbf{1.000} & 0.802 & 0.814 & 0.829 & 0.886 & 0.895 & 0.902 & 0.677 & 0.690 & 0.700 & 0.951 & 0.956 & 0.961\\
                    \midrule
                    \multirow{1}{*}{\textbf{HR($p$, mean)}} & 0.877 & \underline{0.965} & \textbf{1.000} & 0.804 & 0.819 & 0.829 & 0.887 & 0.899 & \textbf{0.910} & 0.664 & 0.681 & 0.695 & 0.952 & 0.955 & 0.960\\
                     \midrule
                    \multirow{1}{*}{\textbf{HR($p$, max)}} & 0.877 & 0.934 & \textbf{1.000} & 0.805 & 0.820 & \underline{0.830} & 0.890 & 0.900 & \textbf{0.910} & 0.682 & 0.695 & \underline{0.709} & 0.951 & 0.957 & \underline{0.963} \\
                    \midrule
                    \multirow{1}{*}{\textbf{HR($p$, min)}} & 0.877 & 0.934 & \textbf{1.000} & 0.805 & 0.818 & 0.826  & 0.890 & 0.898 & 0.907 & 0.673 & 0.694 & \textbf{0.723}  & 0.954 & 0.957 & 0.962\\
                    \bottomrule
        	\end{tabular}
        }
	\end{center}
\end{table*}

\section{Node Classification}
\label{appendix:node_classification}

\black{In this section, we provide detailed information about the node classification task in Section~\ref{sec:exp:observations:prediction} and additional experimental results.}

\smallsection{\black{Settings:}}
\black{We randomly sample $100$ nodes from each of the $11$ real-world hypergraphs, ensuring that the corresponding radial ego-networks contain between $10$ and $13,500$ hyperedges. Note that the tags-math-sx dataset has the largest radial ego-networks, with a mean of around $13,500$ hyperedges.
We split the selected 1,100 nodes into a training set (80\%) and a test set (20\%) while ensuring that each label has an equal number of nodes in both sets.
}


\smallsection{\black{Additional Experimental Results:}}
\black{
We evaluate the performance of the four considered feature sets in node classification using all combinations of three ego-network types (spec., star, radial, and contracted) and five classifiers (spec., Logistic Regression, Decision Tree, Random Forest, MLP, and XGBoost).
We use the implementation of all classifiers provided by scikit-learn~\cite{scikit-learn} with default hyperparameters.
In Table~\ref{classifier_node_classification_table}, we report the accuracy (ACC) and the average area under the ROC curve (AVG AUC) for each combination. 
It should be noticed that the combination of radial ego-networks and XGBoost, which we use in the main paper, performs well for all feature sets.
Specifically, the combination yields higher ACC and AVG AUC than the other combinations for NP26 and TNP. 
Additionally, for NP7, the combination yields the highest ACC but the second highest AVG AUC. For BASELINE, the combination yields the highest AVG AUC but the second highest ACC.
}


\section{\black{Details and Variants of \Tmotifs}}
\label{appendix:variants}

\black{In this section, we explore potential variations of \tmotifs and assess their performance through experiments. We aim to show that the proposed \tmotifs strike a balance between simplicity and strong characterization power, making them an effective tool for analyzing hypergraph structures. Visual representations of all 431 \tmotifs are provided in Figure~\ref{all_3H-motifs_fig}.
We consider the following variants in which the two states of each non-empty set are defined differently:
\begin{itemize}
    \item {\bf Absolute - Abs($\theta$):} The states of each non-empty set are defined based on its cardinality $c$ as (1) $0< c\leq \theta $ and (2) $c>\theta$, where $\theta=1$. Note that \tmotifs are equivalent to Abs($1$).
    \\
    \item {\bf Motif Ratio - MR($p$):} 
    The states of each non-empty set are defined based on its cardinality $c$ and the total number of unique nodes $n$ in the considered instance as (1) $0< \frac{c}{n} \leq p$ and (2) $\frac{c}{n}>p$, where $0<p<1$.   
    \\
    \item {\bf Hyperedge Ratio - HR($p,\sigma$):}
    The states of each non-empty set are defined based on its cardinality $c$ and the hyperedges $S$ containing the set as (1) $0< \sigma((\frac{c}{|e|})_{e\in S}) \leq p$ and (2) $\sigma((\frac{c}{|e|})_{e\in S})>p$, where $0<p<1$ and $\sigma(\cdot)$ is the \textit{mean}, \textit{max}, or \textit{min} function.   
    \\
\end{itemize}
We conduct experiments to compare the effectiveness of \Tmotifs (i.e., Abs($1$)), Abs($2$),  MR($p$), HR($p$, mean), HR($p$, max), HR($p$, min) for $p=\{0.1,0.2,\cdots, 0.9\}$. We evaluate the variants through hypergraph clustering, hyperedge prediction, and node classification using the same experimental settings described in Sections~\ref{sec:exp:characterization_power} and \ref{sec:exp:observations:prediction}.
For hyperedge prediction task is performed on the coauth-DBLP dataset.
}

\black{Table~\ref{variant_3h_motif_table} presents the results of comparing the performances. For MR($p$), HR($p$, mean), HR($p$, max), and HR($p$, min), the table reports the min, average, and max performances over $p=\{0.1,0.2,\cdots, 0.9\}$.
Remarkably, even though the definition of \tmotifs is simplest, the use of \tmotifs leads to the best performance in hypergraph clustering and node classification tasks. In hyperedge prediction, using \tmotifs achieves the highest ACC and the second-highest AUC. These results provide support for the effectiveness of \tmotifs.}

\blue{We further investigate a wider range of $\theta$ value (spec., from $1$ to $15$) in Abs($\theta$)  to assess its impact on performance in the considered tasks. As shown in Figure~\ref{fig:performance_thres}, our empirical findings reveal that employing a threshold of $1$ (i.e., utilizing \tmotifs) consistently results in the highest performance across the tasks.
Especially, as shown in Figure~\ref{fig:performance_thres}(d), with $\theta$ fixed at $1$, the hyperedge prediction performance remains very close to that achieved with the optimal $\theta$ value for each dataset.
In Appendix~\ref{appendix:further_analysis}, we provide additional analysis regarding the effectiveness of employing $\theta=1$.
}

\begin{figure*}[t]
	\centering     
	\subfigure[\label{hypergraph_clustering_thres}\blue{Hypergraph clustering performance (NMI) w.r.t. $\theta$}]{\includegraphics[width=0.22\textwidth]{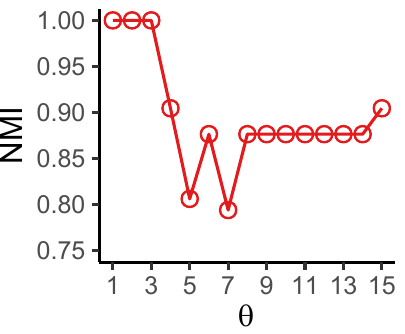}}
        \hspace{10pt}
        \subfigure[\label{hyperedge_prediction_thres}\blue{Average hyperedge prediction accuracy w.r.t. $\theta$}]{\includegraphics[width=0.22\textwidth]{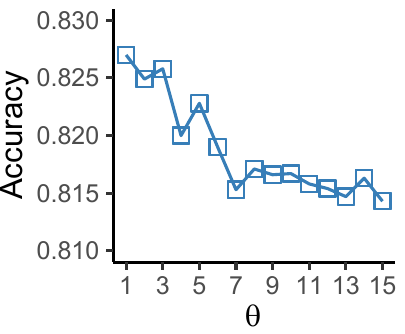}}
        \hspace{10pt}
        \subfigure[\label{node_classification_thres} \blue{Node classification accuracy w.r.t. $\theta$}]{\includegraphics[width=0.22\textwidth]{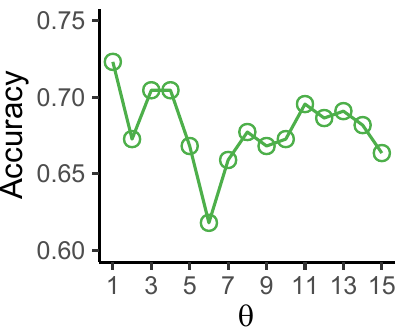}}
        \hspace{10pt}
        \subfigure[\label{1_vs_best_theta} \blue{Average hyperedge prediction accuracy (optimal $\theta$ vs fixed $\theta$)}]
        {\includegraphics[width=0.22\textwidth]{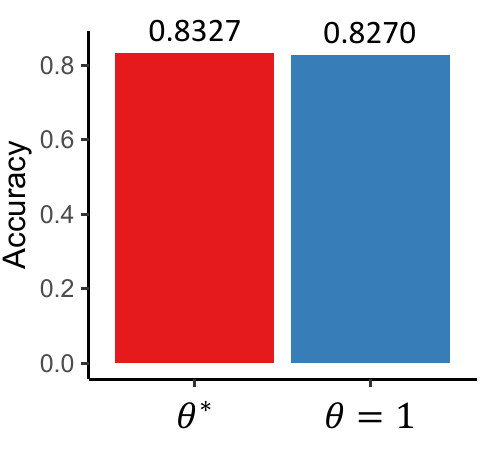}} \\
        \vspace{-3mm}
	\caption{\label{fig:performance_thres} 
		\blue{\Tmotifs (i.e., Abs($1$)) consistently demonstrates superior performance in all three applications compared to Abs($\theta$) with other $\theta$ values.
        In (b), we report the hyperedge prediction accuracy averaged over all datasets.
        In (d), $\theta^*$ (red bar) represents the average hyperedge prediction accuracy achieved with the optimal $\theta$ value for each dataset, while $\theta=1$ (blue bar) represents the average hyperedge prediction accuracy with $\theta$ fixed at $1$. Remarkably, even with $\theta$ fixed at $1$, the performance remains very close to the highest accuracy attainable.}
  }
	
\end{figure*}

\begin{figure*}[t]
	\centering     
	\includegraphics[width=0.23\textwidth]{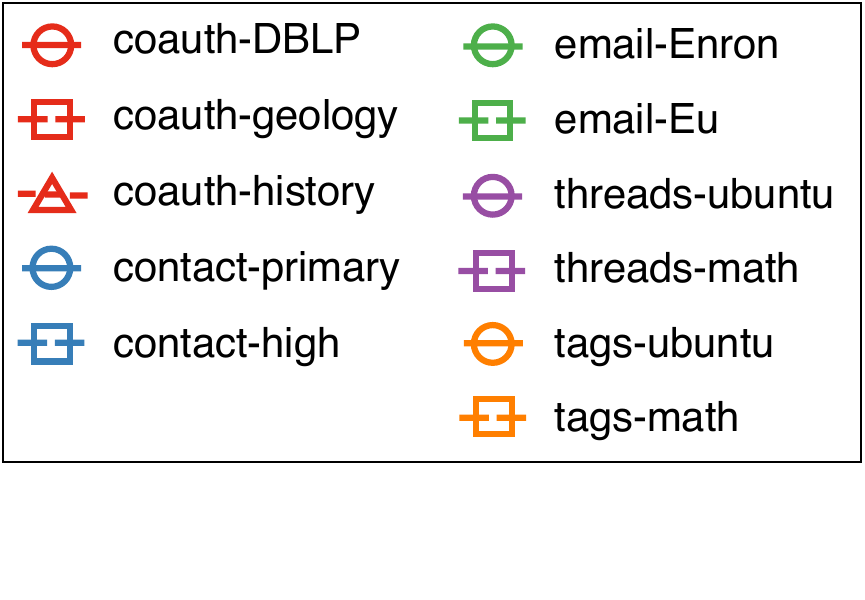}
        \hspace{20pt}
        \subfigure[\label{fig:cardinality_dist} \blue{Subset Cardinality Distribution}]{\includegraphics[width=0.235\textwidth]{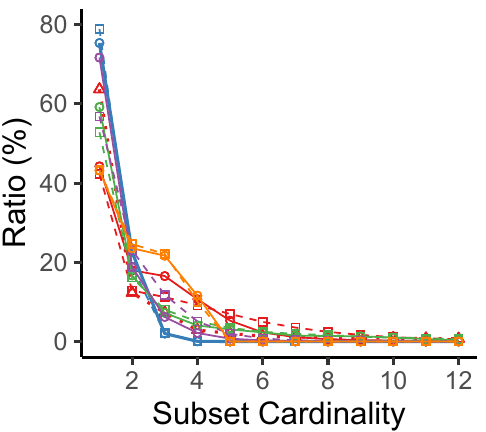}}
        \hspace{10pt}
	\subfigure[\label{fig:IG} \blue{Amount of Extra Information w.r.t. $\theta$}]{\includegraphics[width=0.235\textwidth]{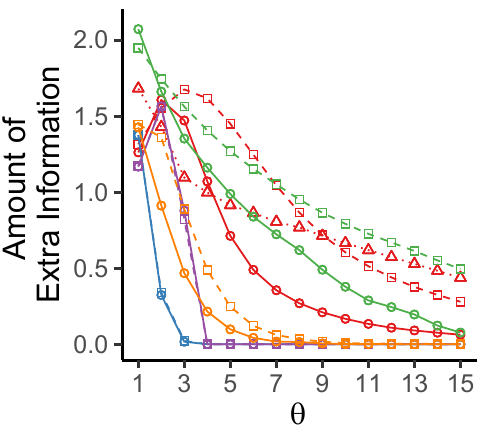}}
	\caption{\label{dist_IG} \blue{(a) The cardinalities of subsets, upon which both \motifs and \tmotifs are defined, exhibit a strong bias toward the value $1$ in all datasets. (b) The amount of extra information gained with \tmotifs over \motifs is maximized when utilizing the cardinality threshold $\theta=1$, as employed by \tmotifs, in 7 out of 11 hypergraphs.}}
\end{figure*}

\section{Data Analysis Regarding the Effectiveness of \Tmotifs}
\label{appendix:further_analysis}

\blue{In this section, we present data analyses for gaining insights into the effectiveness of \tmotifs, which is demonstrated by comprehensive experiments in Appendix~\ref{appendix:variants}.
Recall that \tmotifs employs a threshold value of $\theta=1$ to categorize the states of the seven subsets into three distinct groups based on their cardinalities.
Our analyses below demonstrate why choosing $\theta=1$ is a suitable decision.
}


\smallsection{Subset Cardinality Distribution:}
\blue{We first investigate the cardinalities of the seven subsets, based on which \motifs and \tmotifs are defined, in the 11 real-world hypergraphs.
As shown in Figure~\ref{fig:cardinality_dist}, in all considered real-world hypergraphs, the count of subsets with each cardinality diminishes considerably as the cardinality increases.
This observation suggests that when employing a threshold value $\theta$ greater than $1$, only a very small fraction of subsets falls into the state $>\theta$, which is a new state introduced in \tmotifs but not present in \motifs.
Intuitively, this limited utilization of this new state leads to a reduction in the amount of additional information provided by \tmotifs in comparison to \motifs.}

\smallsection{Amount of Extra Information:}
\blue{To numerically validate our intuition above, we measure the amount of extra information in \tmotifs over \motifs using the \textit{conditional entropy} as follows:
\begin{equation}
H(\text{\tmotif}|\text{\motif})=-\sum_{i=1}^{26}\frac{N_i}{N_{tot}}\sum_{j\in K_i}\frac{n_j}{N_i}\ln{\frac{n_j}{N_i}}, \label{eq:ig}
\end{equation}
where $N_{tot}$ is the count of all \motif instances, $N_i$ is the count of the instances of \motif $i$, $K_i$ is the set of the indices of \tmotifs that can share instances with \motif $i$ (e.g. $K_1=\{\text{1,\dots,6}\}$, as shown in Figure~\ref{fig:tmotif_example}), and $n_{j}$ is the count of the instances of \tmotif $j$.
If we treat the \motif and \tmotif corresponding to each instance as random variables, the conditional entropy measures the amount of information needed to describe the corresponding \tmotif given the corresponding \motif.
If the instances of each \motif are uniformly distributed among the corresponding \tmotifs, we can achieve the maximum amount of additional information, which is $\sum_{i=1}^{26}\frac{N_i}{N_{\text{tot}}}\ln{|K_i|}$. Conversely, if the instance counts of \motif $i$ are highly skewed toward one of the corresponding \tmotifs, the extra information would approach zero, indicating a minimum.
In Figure~\ref{fig:IG}, we report the amount of extra information in each real-world hypergraph while varying $\theta$ from $1$ to $15$. In all datasets, the most significant additional information is obtained with a small $\theta$, with 7 out of 11 hypergraphs yielding the highest amount at $\theta=1$. This potentially explains the superior performance observed in machine tasks when $\theta=1$ (see Figure~\ref{fig:performance_thres}).}

\begin{figure*}[t]
 	\centering
 	\subfigure[\small Node: coauth-DBLP]{\includegraphics[width=0.24\textwidth]{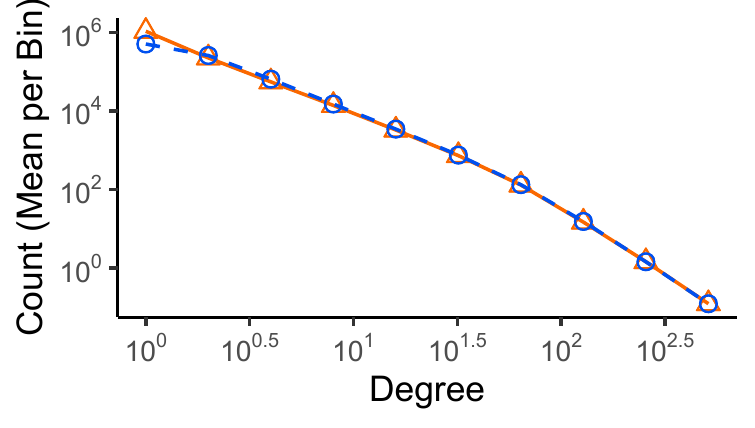}}
 	\subfigure[\small Hyperedge: coauth-DBLP]{\includegraphics[width=0.24\textwidth]{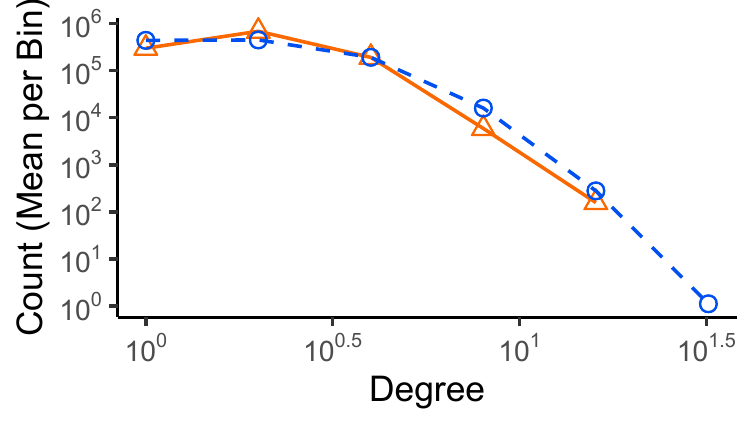}}
 	\subfigure[\small Node: coauth-geology]{\includegraphics[width=0.24\textwidth]{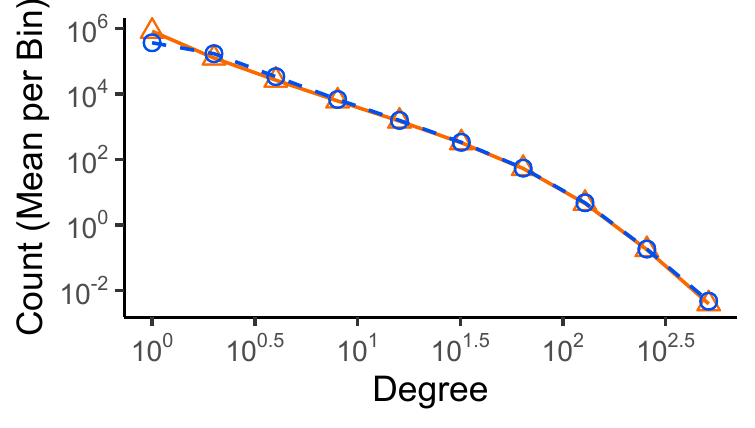}}
 	\subfigure[\small Hyperedge: coauth-geology]{\includegraphics[width=0.24\textwidth]{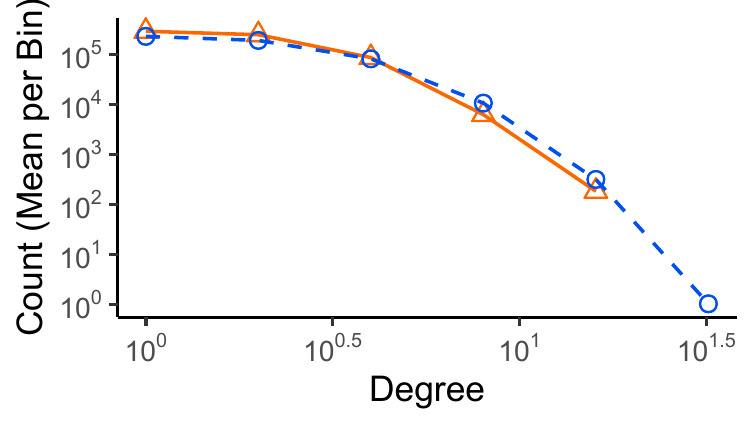}}\\
 	\subfigure[\small Node: coauth-history]{\includegraphics[width=0.24\textwidth]{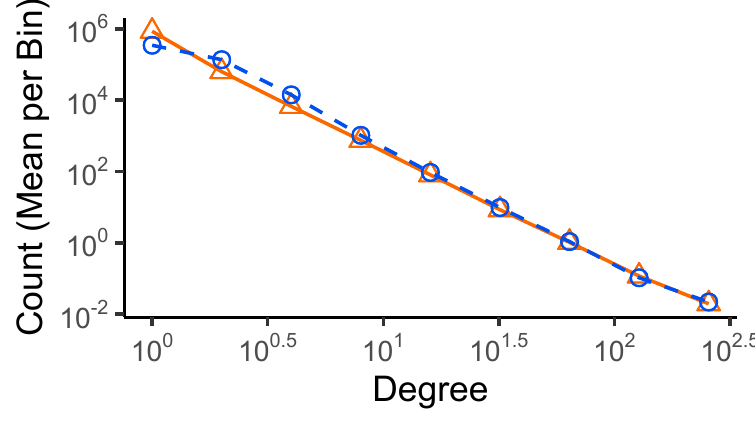}}
 	\subfigure[\small Hyperedge: coauth-history]{\includegraphics[width=0.24\textwidth]{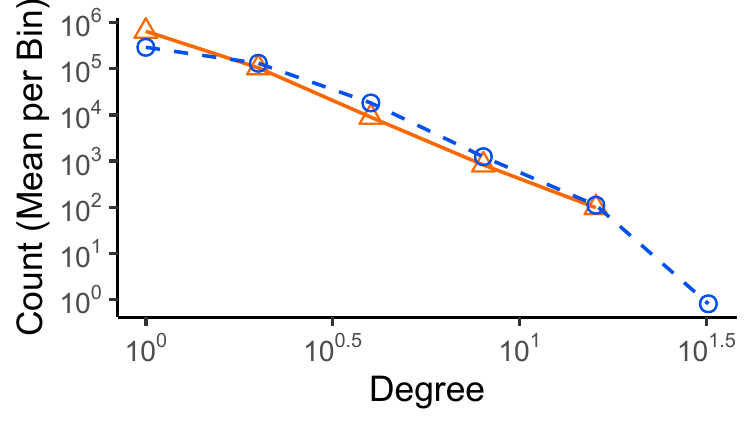}}
 	\subfigure[\small Node: contact-primary]{\includegraphics[width=0.24\textwidth]{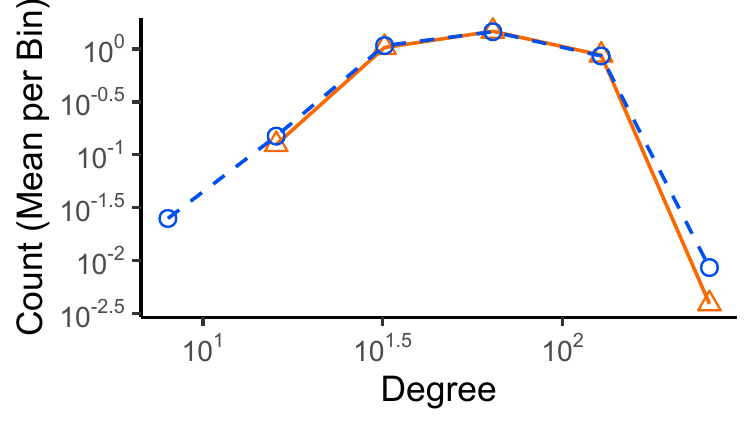}}
 	\subfigure[\small Hyperedge: contact-primary]{\includegraphics[width=0.24\textwidth]{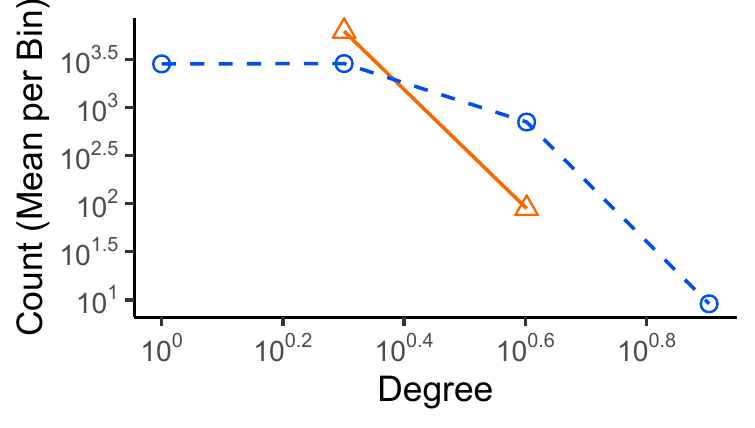}}\\
 	\subfigure[\small Node: contact-high]{\includegraphics[width=0.24\textwidth]{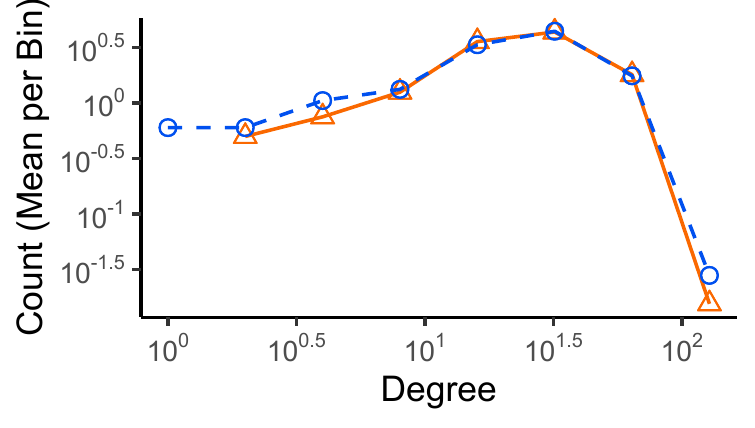}}
 	\subfigure[\small Hyperedge: contact-high]{\includegraphics[width=0.24\textwidth]{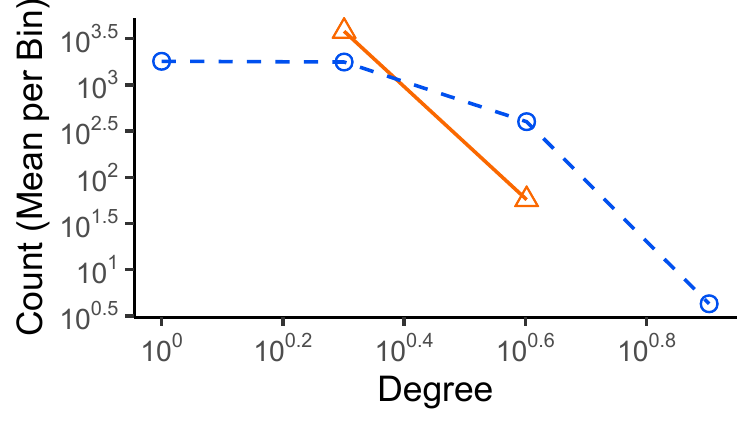}}
 	\subfigure[\small Node: email-Enron]{\includegraphics[width=0.24\textwidth]{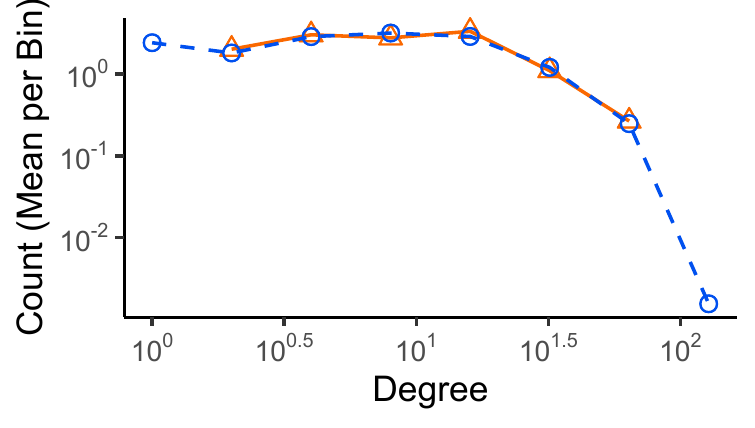}}
 	\subfigure[\small Hyperedge: email-Enron]{\includegraphics[width=0.24\textwidth]{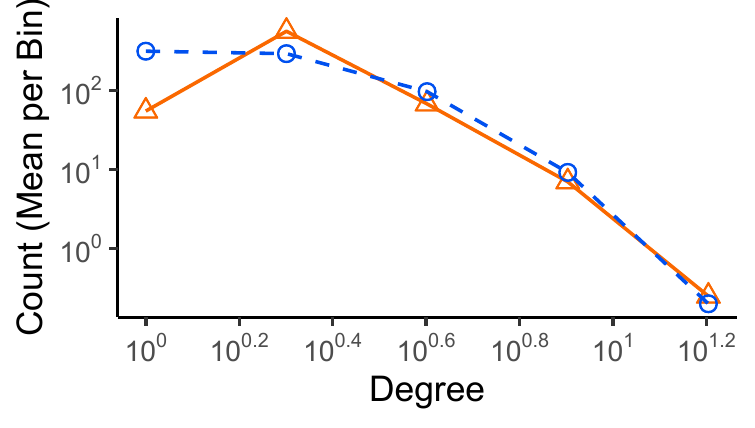}}\\
 	\subfigure[\small Node: email-Eu]{\includegraphics[width=0.24\textwidth]{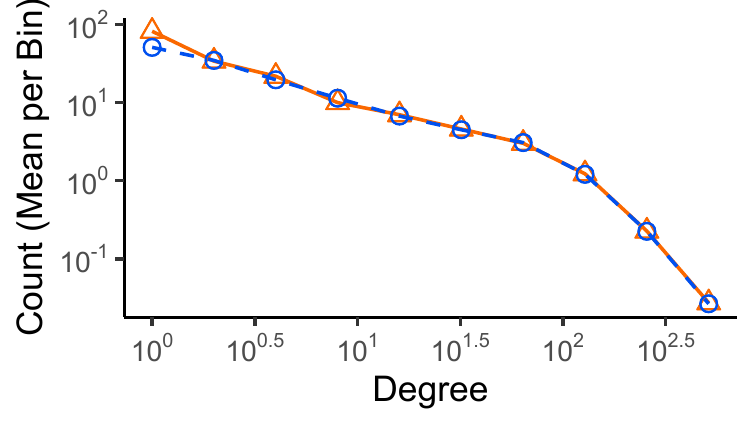}}
 	\subfigure[\small Hyperedge: email-Eu]{\includegraphics[width=0.24\textwidth]{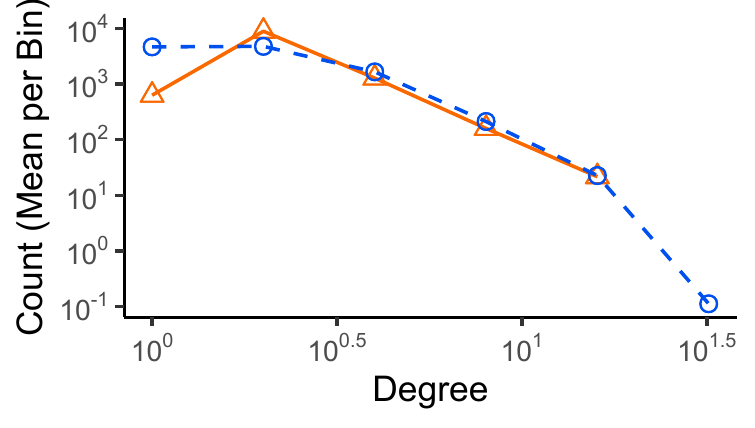}}
 	\subfigure[\small Node: tags-ubuntu]{\includegraphics[width=0.24\textwidth]{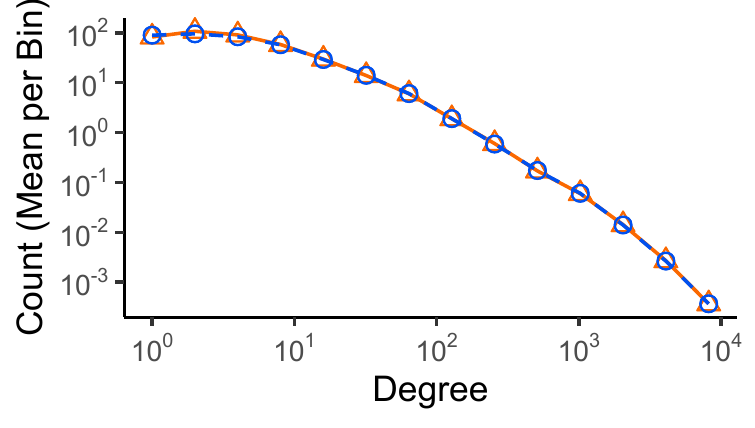}}
 	\subfigure[\small Hyperedge: tags-ubuntu]{\includegraphics[width=0.24\textwidth]{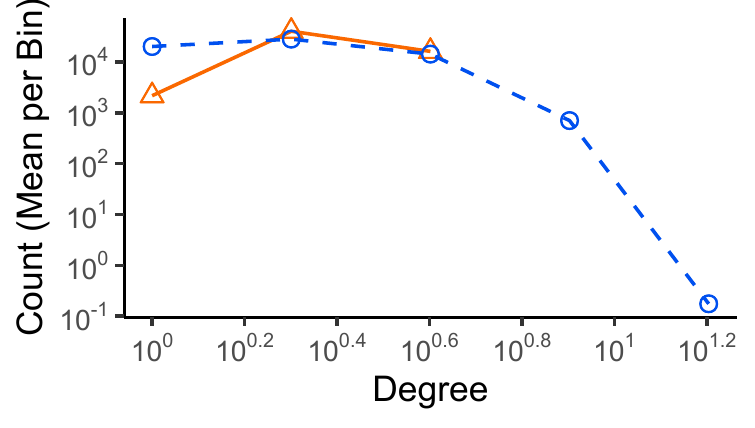}}\\
 	\subfigure[\small Node: tags-math]{\includegraphics[width=0.24\textwidth]{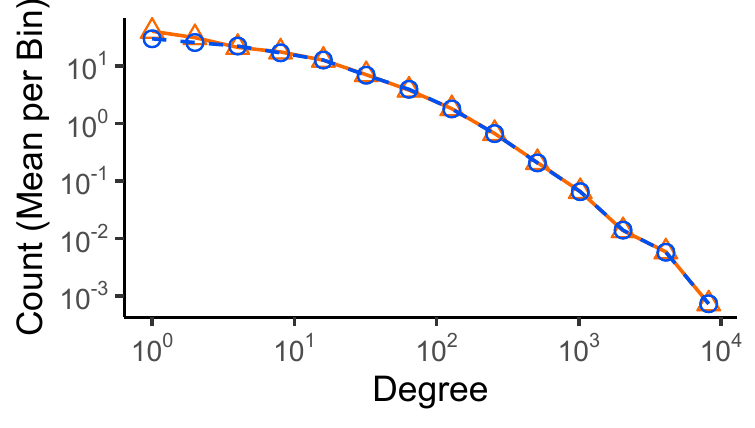}}
 	\subfigure[\small Hyperedge: tags-math]{\includegraphics[width=0.24\textwidth]{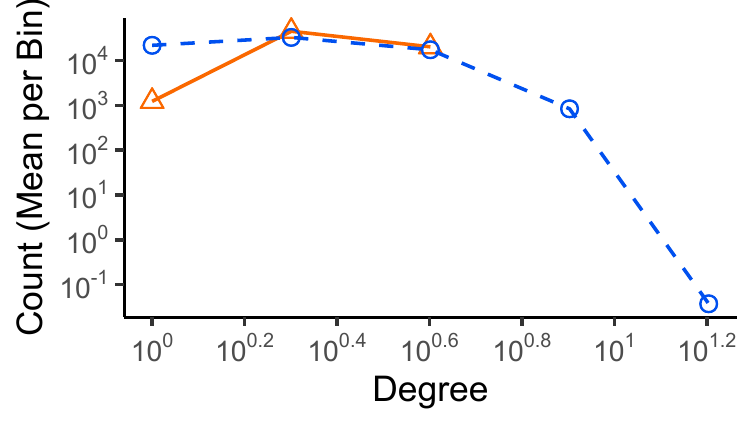}}
 	\subfigure[\small Node: threads-ubuntu]{\includegraphics[width=0.24\textwidth]{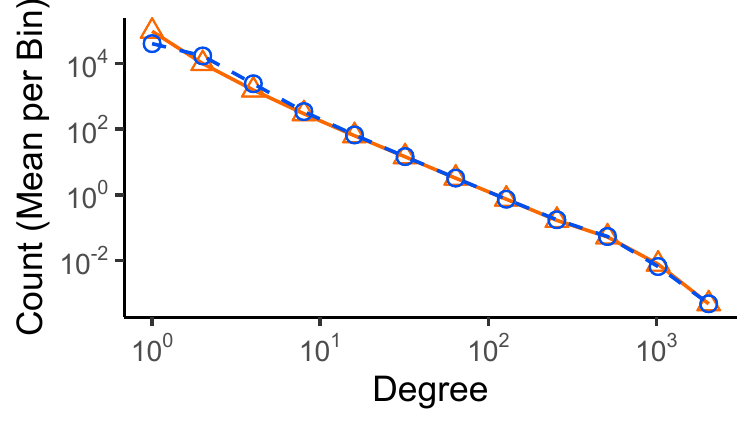}}
 	\subfigure[\small Hyperedge: threads-ubuntu]{\includegraphics[width=0.24\textwidth]{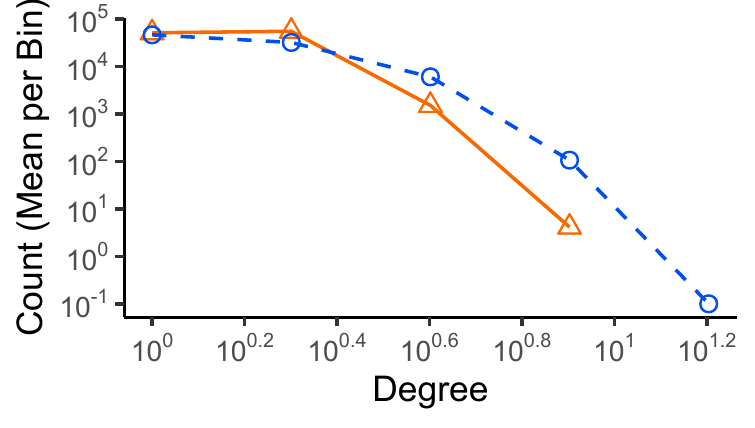}}\\
 	\subfigure[\small Node: threads-math]{\includegraphics[width=0.24\textwidth]{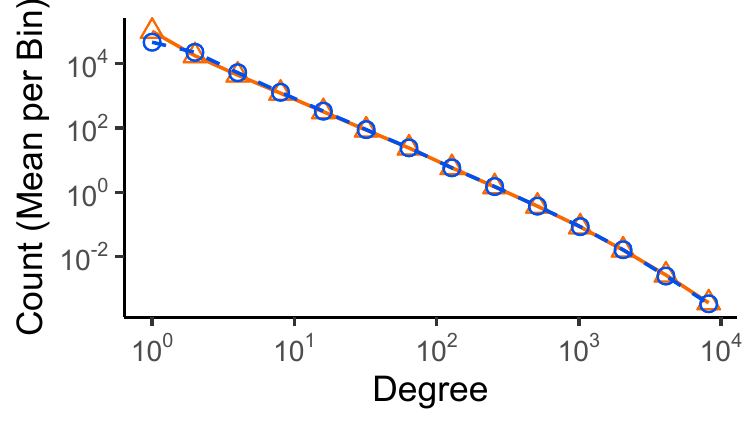}}
 	\subfigure[\small Hyperedge: threads-math]{\includegraphics[width=0.24\textwidth]{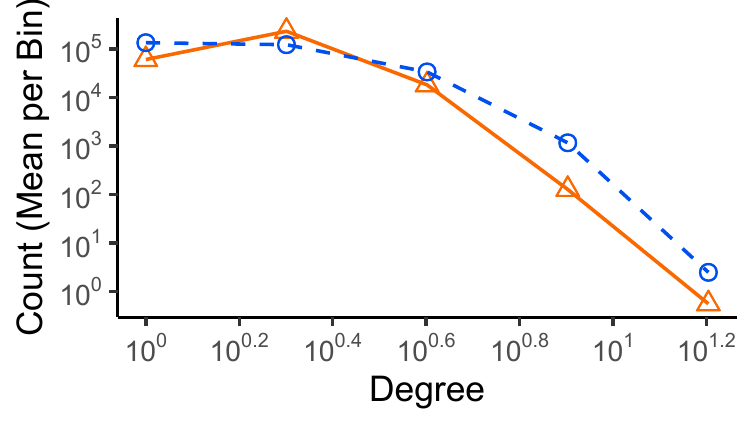}}\\
 	\includegraphics[width=0.4\textwidth]{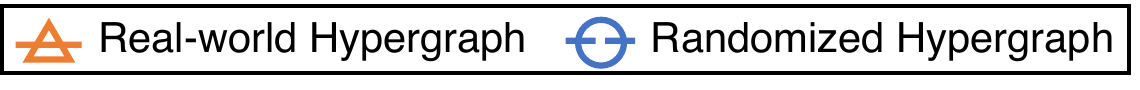}
 	\caption{Degree distributions of nodes and size distributions of hyperedges in real-world hypergraphs and the corresponding random hypergraphs.}\label{deg_fig}
 \end{figure*}

\begin{figure*}[t]
	\centering
	\includegraphics[width=\linewidth]{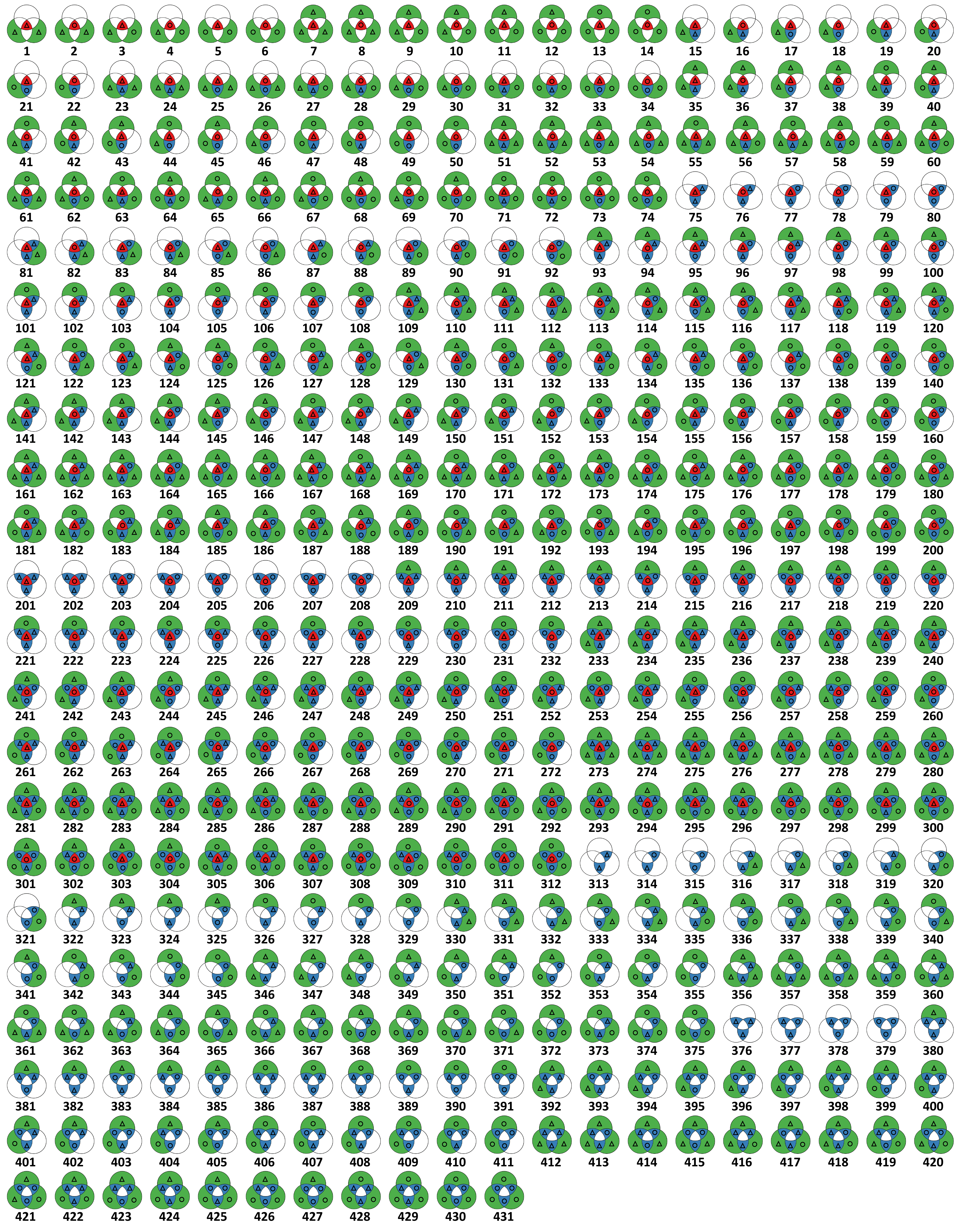} \\
        \vspace{-2mm}
	\caption{\black{\label{all_3H-motifs_fig} The 431 3H-motifs studied in this work. In each Venn diagram, uncolored regions are empty without containing any nodes. Colored regions with a triangle contain more than $0$ and at most $\theta$ nodes, while colored regions with a circle contain more than $\theta$ nodes.}}
\end{figure*}
\end{document}